\newcommand{\bigO}{\mathcal{O}}
\newcommand{\G}{\mathcal{G}}
\newcommand{\reg}{\lambda}
\newcommand{\phisym}{\phi_{\rm sym}}
\newcommand{\fsym}{f_{\rm sym}}
\newtheorem{hypothesis}[theorem]{Hypothesis}
\newcommand{\cspk}{\textsf{CSP}_{k}}
\newcommand{\maxcspk}{\textsf{maxCSP}_{k}}
\newcommand{\maxcsp}{\textsf{maxCSP}}
\newcommand{\mincspk}{\textsf{minCSP}_{k}}
\newcommand{\Fam}{\mathcal{F}}
\newcommand{\nand}{\textsf{NAND}\xspace}
\newcommand{\AND}{\textsf{AND}\xspace}
\newcommand{\xor}{\textsf{XOR}\xspace}
\newcommand{\OR}{\textsf{OR}\xspace}
\newcommand{\impl}{\textsf{Implication}\xspace}
\newcommand*\tageq{\refstepcounter{equation}\tag{\theequation}}
\title{The Role of Regularity in (Hyper-)Clique Detection and Implications for Optimizing Boolean CSPs} 
\titlerunning{The Role of Regularity in (Hyper-)Clique Detection} 
\author{Nick Fischer}{INSAIT, Sofia University ``St.\ Kliment Ohridski''}{nickrobinfischer@gmail.com}{https://orcid.org/0009-0001-0909-3296}{Partially funded by the Ministry of Education and Science of Bulgaria’s support for INSAIT, Sofia University ``St.\ Kliment Ohridski'' as part of the Bulgarian National Roadmap for Research Infrastructure. Parts of this work were done while the author was at Weizmann Institute of Science.}
\author{Marvin Künnemann}{Karlsruhe Institute of Technology}{marvin.kuennemann@kit.edu}{}{}
\author{Mirza Redžić}{Karlsruhe Institute of Technology}{mirza.redzic@kit.edu}{https://orcid.org/0009-0001-7509-1686}{}
\author{Julian Stieß}{Karlsruhe Institute of Technology}{julian.stiess@kit.edu}{https://orcid.org/0009-0002-7318-2645}{}
\authorrunning{N. Fischer, M. Künnemann, M. Redžić, J. Stieß} 
\keywords{fine-grained complexity theory, clique detections in hypergraphs, constraint satisfaction, parameterized algorithms} 
\begin{document}

\maketitle

\begin{abstract}
Is detecting a $k$-clique in $k$-partite \emph{regular} (hyper-)graphs as hard as in the general case? Intuition suggests yes, but proving this---especially for hypergraphs---poses notable challenges. Concretely, we consider a strong notion of regularity in $h$-uniform hypergraphs, where we essentially require that any subset of at most $h-1$ is incident to a uniform number of hyperedges. Such notions are studied intensively in the combinatorial block design literature. We show that any $f(k)n^{g(k)}$-time algorithm for detecting $k$-cliques in such graphs transfers to an $f'(k)n^{g(k)}$-time algorithm for the general case, establishing a fine-grained equivalence between the $h$-uniform hyperclique hypothesis and its natural regular analogue. 

Equipped with this regularization result, we then fully resolve the fine-grained complexity of optimizing Boolean constraint satisfaction problems over assignments with $k$ non-zeros. Our characterization depends on the maximum degree $d$ of a constraint function. Specifically, if $d\le 1$, we obtain a linear-time solvable problem, if $d=2$, the time complexity is essentially equivalent to $k$-clique detection, and if $d\ge 3$ the problem requires exhaustive-search time under the 3-uniform hyperclique hypothesis. To obtain our hardness results, the regularization result plays a crucial role, enabling a very convenient approach when applied carefully.
We believe that our regularization result will find further applications in the future.
\end{abstract}

\section{Introduction}
In the design of algorithms and hardness reductions, it is often helpful to assume that the involved instances satisfy certain \emph{regularity} conditions. To name a concrete example: For his celebrated Set Cover inapproximability result, Feige~\cite{Feige98} first argues that the 3SAT-5 problem (the highly regular special case of SAT in which every clause contains precisely 3 variables and each variable occurs in precisely 5 clauses) is hard to approximate, and then gives a reduction from 3SAT-5 to Set Cover. We informally refer to the first reduction as a \emph{regularization}. Another notable example of a regularization in parameterized complexity is the $W[1]$-hardness of $k$-Clique detection in \emph{regular} graphs. In fact, this is the first (non-trivial) reduction in the textbook by Cygan et al.~\cite{CyganFKLMPPS15}, and has been independently observed for their respective applications in at least three different works~\cite{Marx06,Cai08,MathiesonS08}.

Conversely, obtaining regularization results can also have \emph{algorithmic} implications. For instance, Zamir~\cite{Zamir23} devises improved exact algorithms for a wide array of problems including graph coloring on \emph{regular} graphs (in fact, even for \emph{almost-regular} graphs). Designing regularizations for these problems would therefore imply breakthroughs for general instances.

In this work we focus on the $k$-Clique and $k$-Hyperclique problems from a fine-grained perspective. Our main result is a surprisingly intricate fine-grained regularization reduction, i.e., we prove that both problems are as hard on regular (hyper-)graphs as they are on general (hyper-)graphs. Equipped with this, we are able to resolve the fine-grained complexity of optimizing Boolean CSPs over weight-$k$ solutions; this, in fact, had been the motivating question for this study. Below, we detail these two contributions separately in Sections~\ref{sec:intro-hyperclique} and~\ref{sec:intro-optCSPs}. 

\subsection{Regularizing (Hyper-)clique Detection}
\label{sec:intro-hyperclique}
\subparagraph*{Clique Detection.}
The \emph{$k$-Clique} problem is to decide if a $k$-partite graph $G = (V_1, \dots, V_k, E)$ contains a \emph{$k$-clique}, i.e., $k$ nodes $v_1 \in V_1, \dots, v_k \in V_k$ such that any pair $(v_i, v_j)$ forms an edge in~$G$.\footnote{Often, the problem is stated for general instead of $k$-partite graphs. However, these formulations are well-known to be equivalent using color-coding techniques, and the restriction to $k$-partite instances often helps in designing reductions.} While the trivial algorithm takes time $\bigO(n^k)$, Ne\v{s}et\v{r}il and Poljak~\cite{NesetrilP85} devised an algorithm running in time $\bigO(n^{(\omega/3) k})$ where $\omega < 2.3714$ denotes the matrix multiplication exponent~\cite{AlmanDWXXZ25}.\footnote{More precisely, the running time bound $\bigO(n^{(\omega/3) k})$ applies only when $k$ is divisible by $3$. For general $k$, the running time is~\smash{$\bigO(n^{\omega(\lfloor k/3 \rfloor, \lceil k/3 \rceil, \lceil (k-1)/3\rceil)})$}, where $\omega(a, b, c)$ is the exponent of rectangular matrix multiplication (i.e., \smash{$\bigO(n^{\omega(a, b, c)})$} is the running time of multiplying an~\smash{$n^a \times n^b$} matrix by an~\smash{$n^b \times n^c$} matrix).} The \emph{$k$-Clique Hypothesis} postulates that this algorithm is essentially optimal (see Section~\ref{sec:preliminaries}). As it generalizes the popular \emph{Triangle} hypothesis (i.e., the special case~\makebox{$k = 3$}, stating that the complexity of Triangle Detection is $n^{\omega\pm o(1)}$), this hardness assumption entails fine-grained lower bounds for an extensive list of problems (see e.g.~\cite{Vassilevska18}), but also for some important lower bounds beyond the Triangle case~\cite{AbboudBW15a,Chang16,BringmannW17}.

\subparagraph*{Hyperclique Detection.}
The natural generalization to \emph{$h$-uniform hypergraphs}, the \emph{$(k, h)$-Hyperclique} problem, is to decide if a given $k$-partite $h$-uniform hypergraph $G = (V_1, \dots, V_k, E)$ contains a \emph{$k$-hyperclique}, i.e., $k$ nodes $v_1 \in V_1, \dots, v_k \in V_k$ such that all $h$-tuples $x_{i_1}, \dots, x_{i_h}$ form a hyperedge in $G$. In contrast to the $k$-Clique problem in graphs (which coincides with $(k, 2)$-Hyperclique), the $(k, h)$-Hyperclique problem for $h \geq 3$ is not known to admit any significant improvement over the brute-force search time $\bigO(n^k)$. This led to the \emph{($h$-Uniform) $k$-Hyperclique Hypothesis} postulating that time $n^{k-o(1)}$ is necessary in $h$-uniform hypergraphs~\cite{LincolnWW18}. Besides being a natural problem in itself, the relevance of the Hyperclique problem is also due to its close connection to the MAX-3-SAT problem~\cite{LincolnWW18} and due to a recent surge in strong (often matching) conditional lower bounds based on the Hyperclique conjecture~\cite{AbboudBDN18,LincolnWW18,BringmannFK19,CarmeliZBKS20,KunnemannM20,AnGIJKN21,BringmannS21,DalirrooyfardW22,Kunnemann22,GorbachevK23,Zamir23}.

\subparagraph*{(Hyper-)clique Regularization.}
The starting point of this paper is the following question: Do these hypotheses imply the same time bounds for clique detection in \emph{regular} (hyper)graphs? While it is clear what we mean by \emph{regular graph}, it is less clear how to define a regular $h$-uniform hypergraph. The perhaps simplest definition requires each node to be incident to the same number $r_1$ of hyperedges. However, this notion turns out to be insufficient for our main application. Instead, we consider a significantly stronger variant, in which \emph{every subset of $\ell<h$ vertices}, each taken from a different part of the $k$-partition, is incident to the same number $r_\ell$ of hyperedges (see Section~\ref{sec:preliminaries} for details). This yields essentially the strongest form of regularity that we may wish for in a hypergraph. The existence of such structures is well-studied in combinatorial block design theory; e.g., $k$-partite $3$-uniform hypergraphs that are regular in our sense correspond precisely to group divisible designs with $k$ groups, see~\cite{ColbournD06}.

Our first main contribution is to establish equivalence of the 3-Uniform Hyperclique Hypothesis for general hypergraphs to its special case of regular $k$-partite hypergraphs in the above strong sense.

\begin{theorem}[Hyperclique Regularization, Informal]\label{thm:main1}
Let $h \geq 2$. The $(k, h)$-Hyperclique problem is solvable in time $f(k) \cdot n^{g(k)}$ on $k$-partite $h$-uniform hypergraphs if and only if it is solvable in time $f'(k) \cdot n^{g(k)}$ on \emph{regular} $k$-partite $h$-uniform hypergraphs.
\end{theorem}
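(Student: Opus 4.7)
The forward direction is immediate: any algorithm for general $k$-partite $h$-uniform hypergraphs applies verbatim to regular ones, so an $f(k) \cdot n^{g(k)}$ algorithm for the general case yields one with $f'(k) = f(k)$ for the regular case. For the converse, my plan is to construct a polynomial-time reduction that, given a general $k$-partite $h$-uniform hypergraph $G$ on $n$ vertices per part, produces a regular $k$-partite $h$-uniform hypergraph $G'$ on $N = c(k,h) \cdot n$ vertices per part such that $G$ contains a $k$-hyperclique if and only if $G'$ does. Invoking the assumed $f'(k) \cdot N^{g(k)}$ algorithm on $G'$ then gives an $f(k) \cdot n^{g(k)}$ algorithm for $G$.

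The construction has two ingredients. First, a \emph{blow-up}: replace each vertex of $G$ by a block of $m = m(k,h)$ copies and, for every hyperedge $\{v_{i_1},\ldots,v_{i_h}\}$ of $G$, include all $m^h$ hyperedges across the corresponding blocks. Second, a \emph{dummy padding}: append a set of dummy vertices to each part and add padding hyperedges, each incident to at least one dummy, chosen so as to equalize all codegrees. I would process the regularity levels top-down: for $\ell = h-1$ down to $\ell = 1$, compute, with respect to a common target $r_\ell$, the codegree deficit of every $\ell$-subset of vertices in distinct parts, and then add structured padding hyperedges that exactly close each deficit.

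The principal obstacle—and likely the source of the "surprising intricacy" alluded to in the introduction—is that padding introduced at level $\ell$ unavoidably perturbs the codegrees at all lower levels $\ell' < \ell$. The padding profile must therefore be designed holistically so that all $h-1$ regularity conditions are simultaneously satisfied. I expect to frame this as a small linear system whose unknowns are the numbers of padding hyperedges of each \emph{symmetry type} (classified by how many dummies they contain and from which parts), and whose solvability over the nonnegative integers hinges on divisibility conditions that can be met by choosing the blow-up factor $m$ large enough (depending only on $k$ and $h$). A useful auxiliary tool should be a universal "fully regular" gadget on the dummies whose codegree sequence can be tuned additively, serving as a plug-and-play padder.

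Finally, correctness requires that $G'$ introduce no spurious $k$-hypercliques. I would enforce this by requiring every padding hyperedge to contain at least one dummy, and by labelling the dummies in each part so that no selection of $k$ dummies across distinct parts can jointly form a hyperclique (e.g., by grouping them into classes such that hyperedges across certain class combinations are forbidden). Then any $k$-hyperclique of $G'$ selecting one vertex per part must lie entirely among the blown-up originals, and projecting copies back to their original vertex yields a $k$-hyperclique of $G$; the converse is immediate from the blow-up construction.
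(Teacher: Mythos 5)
Your forward direction is correct and trivial, as in the paper. For the converse, however, your proposed construction (vertex blow-up plus dummy padding, closing codegree deficits level by level) is precisely the route the paper explicitly considers and rejects for $h\geq 3$, and I think the gap you gesture at in your third paragraph is not actually closable the way you hope. The deficits at level $\ell$ are genuinely \emph{per-subset} quantities: for $h=3$, each pair $(u,v)$ has its own deficit $r_2 - \deg_G(u,v)$, and these vary arbitrarily. When you route all padding through dummies, the codegree of a pair $(u,d)$ ($d$ a dummy) accumulates the deficits of the pairs $(u,v)$ over all $v$, which depends on $u$; so the $(u,d)$ codegrees are again non-uniform, and you've only pushed the problem around. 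Framing this as a ``small linear system whose unknowns are the numbers of padding hyperedges of each symmetry type'' cannot work: there are $O(1)$ symmetry types but $\Omega(n^{h-1})$ constraints, one per $(h-1)$-subset, and these constraints carry non-uniform right-hand sides determined by $G$. A constant-dimensional system cannot match $\Omega(n^{h-1})$ independent targets. You would need the unknowns to be per-subset, at which point the system is large, its nonnegative-integer solvability is unclear, and you have additionally lost control of whether the padding edges create mixed (original-plus-dummy) spurious $k$-cliques; your dummy-labelling idea only prevents all-dummy cliques, not cliques with, say, $k-1$ originals and one dummy.

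The paper's actual proof avoids this entirely with a different idea: a \emph{signed product} $G' = G \otimes T$ with a fixed $k$-partite $h$-uniform \emph{template} hypergraph $T$ carrying both positive and negative edges, each codegree-regular, with a $k$-clique among positive edges and no $(h+1)$-clique using a negative edge. Edges of $G'$ come from (edge of $G$)$\,\times\,$(positive edge of $T$) and (non-edge of $G$)$\,\times\,$(negative edge of $T$). The crucial observation is that for any fixed $(h-1)$-set of vertices of $G$, the number of edges through it plus the number of non-edges through it is exactly $n$, a constant; so if $T$ is $\lambda$-regular in both colours, the codegree in $G'$ of any $(h-1)$-set is $\lambda\cdot\deg_G + \lambda\cdot(n-\deg_G) = \lambda n$, independent of $G$. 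The template $T$ is constructed group-theoretically (à la Cayley) over $\mathcal{G}=(\mathbb{Z}/h\mathbb{Z})^{\binom{k}{h}}$, giving $|V(T)|$ a function of $k,h$ only, hence a linear blow-up. This sidesteps the non-uniform-deficit problem by never trying to close per-subset deficits additively; instead it multiplies $G$ into a structure where the deficits cancel automatically against the complement. Your proposal does not contain this idea, and I do not see how to repair the padding route without it.
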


We are confident that this theorem will be useful in a black-box fashion for future hardness results based on the $k$-Clique or $k$-Hyperclique hypotheses. Moreover, we extend this equivalence also for \emph{counting} $k$-hypercliques, see Section~\ref{sec:regularizing-hypergraphs}.

\medskip
Let us discuss Theorem~\ref{thm:main1}: While the statement might feel ``obviously true'', to our surprise, proving it turns out to be anything but straightforward. Even for graphs ($h = 2$), we are not aware of a previous proof of this fact. The parameterized reductions in~\cite{Cai08,MathiesonS08,BrandesHK16,CyganFKLMPPS15} are generally not efficient enough as they reduce arbitrary $n$-vertex graphs $G$ with maximum degree $\Delta$ to regular graphs with $\Omega(\Delta n)$ vertices. As $\Delta$ can be as large as $\Theta(n)$, this blow-up is inadmissible for our fine-grained perspective (the resulting conditional lower bounds would have the exponent of $n$ halved).

Having said that, it is possible to regularize graphs by only adding $\bigO(n)$ additional nodes. Specifically, one approach is to increase the degree $\deg(v)$ of each node $v$ by $\Delta-\deg(v)$. To this end we can add $\bigO(n)$ dummy nodes and add an appropriate number of edges between the original nodes and the freshly introduced dummy nodes (while ensuring not to create new $k$-cliques). Afterwards it still remains to regularize the new dummy nodes, which can for instance be solved by balancing the new edges appropriately. The details are tedious and not very insightful.

The real challenge arises for hypergraphs. Recall that the strong regularity guarantee for, say, 3-uniform hypergraphs, is that any pair of vertices (in different vertex parts) must be contained in the same number of hyperedges. Thus, in order to generalize the previous approach we would have to regularize not only the individual dummy nodes, but also all \emph{pairs} of (original, dummy) and (dummy, dummy) nodes. For this reason we essentially make no progress by adding dummy nodes. Put differently, regularizing graphs is a \emph{local} problem (in the sense that we could fix the regularity constraints for each node $v$ by adding some dummy nodes only concerned with $v$), whereas regularizing 3-uniform hypergraphs appears to be a more \emph{global} problem (forcing us to simultaneously regularize all vertex-pairs while not introducing unintended $k$-cliques).

Due to this obstacle, we adopt a very different and conceptually cleaner approach. Let $G$ be a given $k$-partite $h$-uniform hypergraph. The new idea is to construct a regular $k$-partite $h$-uniform hypergraph~$G'$ from a graph product operation that we call the \emph{signed product} of~$G$ with a suitable \emph{signed} hypergraph $T$. A signed hypergraph $T$ has \emph{positive} and \emph{negative} edges $E^+(T)$ and $E^-(T)$, respectively. We define the resulting graph~$G'$ as follows: Its vertex set is $V'= V(G)\times V(T)$, and $(u_1, x_1), \dots, (u_h, x_h)\in V'$ form a hyperedge in $G'$ if and only~if
\smallskip
\begin{enumerate}
\item $\{u_1,\dots,u_h\}\in E(G)$ and $\{x_1,\dots, x_h\}\in E^+(T)$, or
\item $\{u_1,\dots, u_h\} \notin E(G)$ and $\{x_1,\dots, x_h\}\in E^-(T)$.
\end{enumerate}
\smallskip
We prove (Lemma~\ref{lemma:bichromatic-product}) that $G'$ yields an equivalent, but \emph{regular} instance for $k$-clique if the following conditions hold:
\smallskip
\begin{enumerate}[(i)]
    \item There is some $r \in \mathbb{N}$ such that for all nodes $x_1, \dots, x_{h-1} \in V(T)$ (taken from different parts in the $k$-partition of $T$), the set $\{x_1,\dots, x_{h-1}\}$ is contained in exactly $r$ positive and in exactly $r$ negative edges of $T$, 
    \item there is a $k$-hyperclique in $T$ involving only positive edges, and
    \item there is no $k$-hyperclique in $T$ involving a negative edge.
\end{enumerate}
\smallskip
We call a signed hypergraph satisfying these conditions a \emph{template graph}.
Our task is thus simplified to finding suitable template graphs, which \emph{only depend on $k$ and $h$, not on $n$}.

In a second step, we then show how to construct such template graphs in Lemma~\ref{lemma:template-graph-construction}. We employ a construction akin to Cayley graphs for some appropriate group $\mathcal G$ (concretely, we will pick $\mathcal G = (\mathbb{Z} / h \mathbb{Z})^d$ for some appropriate dimension $d = d(k, h)$). That is, let the nodes of $T$ be $k$ independent copies of $\mathcal G$ denoted $T_1, \dots, T_k$. For each $h$-tuple of vertices, say, $x_1 \in T_1, \dots, x_h \in T_h$, we let $\{x_1, \dots, x_h\}$ be a positive edge if and only if $x_1 + \dots + x_h = a^+$ for some fixed group element $a^+ \in \mathcal G$, and we let $\{x_1, \dots, x_h\}$ be a negative edge if and only if~\makebox{$x_1 + \dots + x_h = a^-$} for some other fixed group element $a^- \in \mathcal G$. The advantage of this construction is that it is immediately clear that the resulting graph~$T$ satisfies the regularity condition (i): For each $x_1 \in T_1, \dots, x_{h-1} \in T_{h-1}$ there is a unique choice~\makebox{$x_h := a^+ - (x_1 + \dots + x_{h-1}) \in T_h$} such that $\{x_1, \dots, x_h\}$ is a positive hyperedge; the same applies to negative edges and all other combinations of parts $T_{i_1}, \dots, T_{i_h}$. We will then pick the elements $a^+$ and $a^-$ in a specific way to guarantee the remaining conditions~(ii) and~(iii). 

\medskip
Below, in Sections~\ref{sec:intro-optCSPs} and~\ref{sec:intro-outlook}, we discuss applications of Theorem~\ref{thm:main1}.

\subsection{Optimizing Boolean CSPs over Weight-\texorpdfstring{\boldmath$k$}{k} Assignments}
\label{sec:intro-optCSPs}
The complexity of constraint satisfaction problems (CSPs) has been intensively studied. Starting with Schaefer's dichotomy theorem for Boolean CSPs~\cite{Schaefer78}, which initiated the eventually successful quest of proving the Dichotomy Conjecture~\cite{Bulatov17,Zhuk17,Zhuk20}, classification results for a number of settings have been proven: This includes, among others, approximability characterizations of natural optimization variants~\cite{Creignou95, KhannaSTW00}, as well as parameterized complexity classifications of solving CSPs parameterized by their \emph{solution size} (aka \emph{weight}) $k$, i.e., the number of non-zeros, see~\cite{Marx05, BulatovM14, KratschMW16}. The parameterized classification of Boolean CSPs due to Marx~\cite{Marx05} has subsequently been refined to a classification of the fine-grained time complexity~\makebox{$f(k)\cdot n^{g(k)}$}, resulting in 4 different regimes~\cite{KunnemannM20}: an FPT regime, a subexponential regime, a clique regime, and a brute-force regime.

In this work, we investigate the fine-grained complexity of \emph{optimizing} Boolean CSPs parameterized by solution size, i.e., optimizing the number of satisfied constraints among assignments of weight $k$. Formally, let $\Fam$ be a finite family of Boolean constraint functions. The problem $\maxcspk(\Fam)$ ($\mincspk(\Fam)$) asks to determine, given a set $\Phi$ of constraints -- each formed by applying some $f\in \Fam$ to a selection of the Boolean variables $x_1,\dots, x_n$ -- the maximum (minimum) number of constraints satisfied by a Boolean assignment that sets precisely $k$ variables to true (a \emph{weight-$k$} assignment). Among others, this class of problems contains the well-studied Densest $k$-Subgraph and Densest $k$-Subhypergraph problem ($\Fam$ consists of the $h$-ary AND), the Partial Vertex Cover in $h$-uniform hypergraphs ($\Fam$ consists of the $h$-ary OR), and the graph problem MaxCut($k$) of maximizing the cut size among cuts with set sizes $k$ and $n-k$, respectively ($\Fam = \{ \xor \}$), studied in e.g.~\cite{Cai08}.

Note that $\maxcspk(\Fam)$ trivially generalizes the previously studied variant $\cspk(\Fam)$, asking whether \emph{all} constraints are satisfiable by a weight-$k$ assignment, that has been fully classified in~\cite{Marx05, KunnemannM20}. Indeed, for some constraint families $\Fam$, the optimization variant $\maxcspk(\Fam)$ turns out to be harder than $\cspk(\Fam)$: $\cspk(\{\xor\})$ is in FPT~\cite{KunnemannM20}, but $\maxcspk(\{\xor\})$ is $W[1]$-hard~\cite{Cai08}. Thus, a classification result for $\maxcspk(\Fam)$ must differ from the classification of $\cspk(\Fam)$, but how significantly? Do we still obtain 4 regimes of tractability?

We shall prove that the answer is No: for $\maxcspk(\Fam)$ we obtain only 3 regimes: (1) a linear-time (rather than FPT) regime, (2) a clique regime, and (3) a brute-force regime. This characterization is governed by a different concept as well: rather than the notion of $\nand_h$-representing and $\impl$-representing families (see~\cite{KunnemannM20}), it is crucial to analyze the maximum \emph{degree} $\deg(f)$ over $f\in \Fam$, where $\deg(f)$ is defined as the degree of the (unique) multilinear polynomial representing $f$ (see Section~\ref{sec:csp-main}).

\begin{theorem}[Informal Version]\label{thm:main2}
Let $\mathcal{F}$ be constraint family. Let $d \coloneqq \max_{f\in \mathcal{F}} \deg(f)$. Then,
\begin{itemize}
   \item  If $d\le 1$, $\maxcspk(\Fam)$ and $\mincspk(\Fam)$ are linear-time solvable,
   \item if $d = 2$, the time complexity for $\maxcspk(\Fam)$ and $\mincspk(\Fam)$ is $f(k) \cdot n^{\omega (k/3)\pm O(1)}$, conditioned on the $k$-Clique Hypothesis,\footnote{The precise bound is~\smash{$f(k) \cdot n^{\omega(\lfloor k/3 \rfloor, \lceil k/3 \rceil, \lceil (k-1)/3\rceil)\pm o(1)}$} (matching the Ne\v{s}et\v{r}il-Poljak running time for $k$-Clique).}
   \item if $d \ge 3$, the time complexity for $\maxcspk(\Fam)$ and $\mincspk(\Fam)$ is $f(k) \cdot n^{k\pm o(1)}$, conditioned on the $k$-Hyperclique Hypothesis.
\end{itemize}
\end{theorem}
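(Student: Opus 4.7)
My plan splits by the three cases on $d \coloneqq \max_{f \in \Fam} \deg(f)$, with matching upper and lower bounds in each.

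For the upper bounds, things are direct. When $d \leq 1$, any Boolean function whose multilinear representation has degree at most $1$ must be a constant or a (possibly negated) literal, so the aggregate objective $F(\vec x) = \sum_{\phi \in \Phi} \phi(\vec x)$ is an affine form $c_0 + \sum_i c_i x_i$; an optimal weight-$k$ assignment is obtained by picking the $k$ variables with the largest (resp.\ smallest) coefficient, taking near-linear time. When $d = 2$, the aggregate objective is a quadratic multilinear polynomial $c_0 + \sum_i c_i x_i + \sum_{i<j} c_{ij} x_i x_j$, so maximizing (or minimizing) it over weight-$k$ assignments reduces to finding a $k$-vertex induced subgraph of maximum (minimum) total vertex-plus-edge weight; color-coding to a $k$-partite instance and extending the Ne\v{s}et\v{r}il--Poljak algorithm to weighted instances yields time $f(k) \cdot n^{\omega(\lfloor k/3 \rfloor, \lceil k/3 \rceil, \lceil (k-1)/3 \rceil)}$. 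For $d \geq 3$, the upper bound is the trivial brute-force $\bigO(n^k)$.

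For the lower bounds I would exploit the top-degree monomial of a chosen $f \in \Fam$ as a ``structure detector''. In the $d = 2$ case, pick $f$ with $\deg(f) = 2$ and reduce from $k$-Clique on a \emph{regular} $k$-partite graph $G = (V_1, \dots, V_k, E)$, whose hardness is guaranteed by Theorem~\ref{thm:main1}. Build a $\maxcspk(\Fam)$ instance with a variable $x_v$ per vertex $v \in V(G)$, and for every edge $\{u, w\} \in E$ add a constraint $f(x_u, x_w, \dots)$, with the remaining arguments of $f$ filled in by auxiliary gadget variables. The quadratic term of $f$ rewards selecting both endpoints of an edge, so the aggregate reward is proportional to the number of intra-selection edges and is maximized precisely when the selected $k$ vertices form a clique. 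The lower-degree ``garbage'' terms of $f$'s multilinear expansion also contribute to the objective, but the strong regularity of $G$ (each vertex incident to the same number of edges) forces their aggregate to depend only on $k$ and not on the particular weight-$k$ assignment, so they do not affect the argmax. The $d \geq 3$ case is analogous, reducing from $k$-Hyperclique on a regular $3$-uniform hypergraph via Theorem~\ref{thm:main1}: the higher-order regularity (uniform incidence for every subset of size at most $2$) now ensures that all constant, linear, and quadratic ``garbage'' contributions of $f$ drop out of the argmax.

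The main obstacle I anticipate lies in the gadget engineering of these reductions, namely turning a function $f$ of arbitrary arity and top degree into a clean indicator of its top-degree monomial. In particular, in the $d \geq 3$ regime we want to base hardness uniformly on the $3$-uniform Hyperclique Hypothesis even when $\deg(f) > 3$, which requires helper variables and constraints to extract an effective degree-$3$ signal from $f$'s top-degree term. The minimization variants should follow by symmetric reductions (flipping signs of top-degree coefficients, or replacing $f$ by its complement) together with the same regularity-driven cancellation argument. Once the gadget construction is in place, the strong hypergraph regularity delivered by Theorem~\ref{thm:main1} is the decisive ingredient that makes the argmax argument go through cleanly, which is precisely why the regularization result was indispensable for this application.
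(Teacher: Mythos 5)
Your overall strategy coincides with the paper's: matching the algorithmic upper bounds by adapting the Williams/Lincoln et al.\ reduction to weighted clique, and proving hardness by reducing from $k$-Clique/$k$-Hyperclique on \emph{regular} $k$-partite instances, placing one variable per vertex and one constraint per (hyper)edge so that the top-degree monomial counts edges inside the selected set while regularity controls the lower-order terms. However, there is a genuine gap in the central cancellation claim. You assert that the strong regularity of $G$ forces the aggregate of all lower-degree ``garbage'' terms to depend only on $k$ and not on the particular weight-$k$ assignment. This is false in the $d\ge 3$ regime: writing the symmetrized polynomial as $\alpha\, xyz+\beta(xy+xz+yz)+\gamma(x+y+z)+\delta$, the $\gamma$- and $\delta$-contributions are indeed assignment-independent by $(1,\cdot)$-regularity, but the $\beta$-contribution of a weight-$k$ assignment with $k_i$ selected vertices in part $V_i$ equals $\beta\lambda\sum_{i<j}k_ik_j$, which varies with the distribution $(k_1,\dots,k_k)$ because pairs of vertices inside the same part lie in \emph{zero} hyperedges of a $k$-partite hypergraph. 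No regularity assumption can remove this dependence. Since $\lambda=\Omega(n)$, this term swamps the $\bigO(1)$ signal from the $\alpha$-term, so when $\beta<0$ a degenerate non-$k$-partite assignment (e.g.\ all $k$ ones in a single part) strictly beats every $k$-partite one and the reduction as you describe it simply fails. The paper must therefore case-split on the sign of $\beta$ and, for $\beta<0$, add extra dummy constraints on pairs of variables \emph{within} the same part (Construction~\ref{construction:b<0}) whose analysis (Lemma~\ref{lemma:case2-optimality-of-k-partite-assignment}) is the technical heart of the hardness proof; none of this appears in your proposal.

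Two further points need attention. First, you implicitly assume the leading coefficient rewards co-selection (``the quadratic term of $f$ rewards selecting both endpoints''); when the top-degree coefficient $\alpha$ is negative, maximization detects independent sets rather than cliques, and one must first complement the (hyper)graph within its $k$-partition---which is also why the paper symmetrizes $f$ over all argument permutations to obtain a well-defined sign of $\alpha$. Second, for $\deg(f)>3$ your plan to ``extract a degree-$3$ signal'' via helper variables is left open; the paper resolves this more simply by identifying arguments (substituting each extra argument by one of $x_1,x_2,x_3$, Lemma~\ref{lemma:reducing-degree}) to obtain a genuine degree-$3$, arity-$3$ function, which is why hardness for all $d\ge3$ rests uniformly on the $3$-uniform hypothesis.
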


We  briefly mention some corollaries: Let us call a constraint function \emph{non-reducible} if it depends on all of its inputs. We can without loss of generality assume that all $f\in \Fam$ are non-reducible (since each $f\in \Fam$ can be replaced by an equivalent non-reducible function). As discussed, e.g., in Williams~\cite[Section 6.5]{Wil07}, non-reducible functions of degree at most~$2$ must have arity at most 4. Thus, there is only a \emph{finite} list of non-reducible function of degree at most 2, and thus assuming the 3-uniform hyperclique hypothesis, there exists only a finite number of constraint families for which $\maxcspk(\Fam)$ admits algorithms beating brute-force running time $f(k)n^{k\pm o(1)}$. 
Some interesting degree-$2$ functions (beyond binary arity) include:
\begin{itemize}
    \item \textsf{3-NAE} (as well as \textsf{3-AllEqual})
    \item \textsf{Sort($x_1,x_2,x_3,x_4$)}, i.e., the predicate that is true if and only if the bit string $x_1x_2x_3x_4$ is sorted descending or ascending.
    \item \textsf{Select}($x_1,x_2,x_3$) on 3 variables, i.e., the predicate \emph{if $x_1$ then $x_2$ else $x_3$}
\end{itemize}

We remark that a conceptually similar classification result has been proven for optimizing first-order graph formulas~\cite{BringmannCFK22}.

\subparagraph*{Technical Remarks.}
The algorithmic part of the above theorem is a straightforward adaption of a similar reduction of Weighted Degree-2 CSP to $k$-clique detection given in~\cite[Section 6.5]{Wil07}, see also~\cite{LincolnWW18, BringmannCFK22}. Our main technical contribution for Theorem~\ref{thm:main2} is to prove that this algorithmic technique is essentially the best possible, unless the 3-uniform hyperclique hypothesis can be refuted. 

For ease of presentation, consider first a constraint family $\Fam$ containing a constraint function $f:\{0,1\}^3\to\{0,1\}$ that is \emph{symmetric} and has $\deg(f) = 3$. Let $\alpha,\beta,\gamma,\delta$ be such that $f(x,y,z) = \alpha \cdot xyz + \beta \cdot (xy + yz + xz) + \gamma \cdot (x+y+z) + \delta$ and note that $\alpha \ne 0$. The main idea is the following: Given a $k$-partite regular 3-uniform hypergraph $G=(V_1 \cup \cdots \cup V_k, E)$ with $|V_i| = n$ for $i \in [k]$, we create a Boolean CSP by introducing, for each $\{a,b,c\} \in E(G)$, the constraint $f(x_a,x_b,x_c)$. Since $G$ is regular, there is some $\lambda$ such that any pair of nodes not from the same part is incident to precisely $\lambda$ edges. Thus, a weight-$k$ assignment respecting the $k$-partition, i.e., the 1-variables $x_{i_1},\dots, x_{i_j}$ satisfy $i_j \in V_j$, has an objective value of

\[ \alpha \cdot |\{i_a, i_b, i_c\}\in E|+\beta \cdot \binom{k}{2} \lambda + \gamma \cdot k(k-1)n\lambda + \delta \cdot |E|.\]
The $\alpha$-term is the only term depending on the choice of the assignment, maximizing the objective (for $\alpha > 0$, $\alpha < 0$ can be handled similarly) if and only if the assignment corresponds to a clique in $G$. 
For $\beta > 0$, $k$-partite assignments are favorable over non-partite assignments. For $\beta \le 0$, we instead need to make sure that $k$-partite assignments are favorable, by carefully introducing dummy constraints. 
Finally, we show how to handle $\maxcspk(\Fam)$ if $\Fam$ contains any (not necessarily symmetric) constraint function with degree at least 3.

\subsection{Outlook and Further Applications}
\label{sec:intro-outlook}
We believe that reducing from the regular Hyperclique problem will find further applications. One interesting theme is that by plugging our regularization into known reductions, we can in many cases obtain conditional lower bounds for regular instances. For instance, conditioned on the $3$-Uniform $4$-Hyperclique Hypothesis, Dalirrooyfard and Vassilevska Williams~\cite{DalirrooyfardW22} proved that deciding if a graph contains an induced 4-cycle requires time $n^{2-o(1)}$ even in graphs with $\bigO(n^{3/2})$ edges. Composing this reduction with our regularization reduction (in a white-box manner), one can conclude that induced 4-cycle detection takes time $n^{2-o(1)}$ even in \emph{regular} graphs with $\bigO(n^{3/2})$ edges.

\subparagraph*{Regularization for Non-\boldmath$k$-Partite Hypergraphs?}
An interesting follow-up question to our work is whether we can obtain a similar regularization reduction for the non-$k$-partite Hyperclique problem. This question is of lesser importance for the design of reductions as in most (though not all) contexts it is more convenient to work with the $k$-partite version. Interestingly, already constructing an interesting NO instance to the problem -- i.e., a 3-uniform hypergraph $G = (V, E)$ that is dense (say, has at least $|E| \geq n^{3-o(1)}$ hyperedges), regular (i.e., each pair of distinct vertices $v_1, v_2 \in V$ appears in the same number of hyperedges) and does not contain a 4-clique -- appears very challenging. Even disregarding the 4-clique constraint, constructing (non-$k$-partite) regular hypergraphs falls into the domain of \emph{combinatorial block designs} and is known to be notoriously difficult.

\section{Preliminaries}\label{sec:preliminaries}
We denote by $[n] = \{1,\dots,n\}$ for $n \in \mathbb N$.
For an $n$-element $S$ and $0 \leq k \leq n$, we denote by $\binom{S}{k}$ the set of all size $k$ subsets $L \subseteq S$.
Let $S_n$ denote the set of all permutations of a set on $n$ elements.
For a hypergraph $G = (V,E)$, let $|G| = |V|+|E|$. 
A simple graph is $r$-regular, if every vertex is incident to exactly $r$ edges.
This notion can be generalized in the following way:
For $h \in \mathbb N$ and $s < h$, an $h$-uniform hypergraph is \emph{$(s, \reg)$-regular} if every possible $s$ tuple of pairwise distinct vertices is contained in exactly $\reg$ edges. 
Using our notation, the notion of $r$-regularity of a graph can thus be viewed as $(1,r)$-regularity of a $2$-uniform hypergraph.

We will mostly work with balanced, $k$-partite $h$-uniform hypergraphs $G = (V_1\cup\dots \cup V_k, E)$ where
1) $V_i$ are pairwise disjoint and $|V_i| = |V_j|$ for all $i \neq j \in [k]$
2) Each edge contains precisely $h$ distinct vertices and
3) No edge contains two vertices $v,v'$ such that $v,v'\in V_i$ for any $i\in [k]$.

For the purposes of this paper, we will relax the notion of regularity to better fit the $k$-partite setting. 
In particular, by slight abuse of notation, we say that a $k$-partite $h$-uniform hypergraph $G$ is \emph{$(s, \reg)$-regular} if for each tuple of pairwise distinct indices $i_1,\dots, i_s\in [k]$ and for any choice of vertices $v_1\in V_{i_1},\dots, v_s\in V_{i_s}$, there are precisely $\reg$ many edges $e\in E$ such that $\{v_1,\dots, v_s\} \subseteq e$.
We call an $h$-uniform hypergraph $G$ 
\emph{$\reg$-regular} if it is $(h-1,\reg)$-regular.

We state the following simple observation for $(s, \reg)$-regular hypergraphs.
\begin{observation}\label{obs:reguarity-of-smaller-arities}
    For $s<h<k$, let $G$ be a $(s, \reg)$-regular $k$-partite $h$-uniform hypergraph. Then for any $s'\leq s$ there exists $\reg'$, such that $G$ is also $(s', \reg')$-regular.
\end{observation}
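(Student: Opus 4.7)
I would prove the observation by a standard double-counting argument, mirroring the classical fact from design theory that a $t$-design is automatically an $s$-design for every $s \le t$. Fix $s' \le s$, indices $i_1, \dots, i_{s'} \in [k]$, and vertices $v_j \in V_{i_j}$. Let $N$ denote the number of edges $e \in E$ with $\{v_1, \dots, v_{s'}\} \subseteq e$. The goal is to show that $N$ is determined by $s', s, \lambda$ together with the parameters $k, h, n$ of $G$, and does not depend on the specific choice of $(i_1, \dots, i_{s'})$ or $(v_1, \dots, v_{s'})$.

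To do this, I count the set $\mathcal P$ of pairs $((w_1, \dots, w_{s-s'}), e)$ where $e \in E$, and $(w_1, \dots, w_{s-s'})$ is an ordered tuple of $s - s'$ pairwise distinct vertices lying in $s - s'$ pairwise distinct parts $V_{j_1}, \dots, V_{j_{s-s'}}$ with $\{j_1, \dots, j_{s-s'}\} \cap \{i_1, \dots, i_{s'}\} = \emptyset$, subject to $\{v_1, \dots, v_{s'}, w_1, \dots, w_{s-s'}\} \subseteq e$. The hypothesis $s < h < k$ ensures that there are enough unused parts and enough room inside each edge for this count to make sense.

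Counting $|\mathcal P|$ first by specifying the tuple $(w_1, \dots, w_{s-s'})$: for each of the $\frac{(k-s')!}{(k-s)!}$ ordered choices of indices $(j_1, \dots, j_{s-s'})$ from $[k] \setminus \{i_1, \dots, i_{s'}\}$ and each choice of $w_\ell \in V_{j_\ell}$, the resulting $s$-set $\{v_1, \dots, v_{s'}, w_1, \dots, w_{s-s'}\}$ is contained in exactly $\lambda$ edges by $(s, \lambda)$-regularity. Since the parts are balanced of common size $n$, this side evaluates to a quantity depending only on $k, h, s, s', n, \lambda$ and not on the chosen indices or vertices. Counting $|\mathcal P|$ the other way: for each edge $e$ containing $\{v_1, \dots, v_{s'}\}$, the remaining $h - s'$ vertices of $e$ lie in $h - s'$ distinct parts outside $\{i_1, \dots, i_{s'}\}$ by $k$-partiteness, so the number of admissible ordered tuples inside $e$ is exactly $\frac{(h-s')!}{(h-s)!}$. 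Thus $|\mathcal P| = N \cdot \frac{(h-s')!}{(h-s)!}$, and equating the two expressions expresses $N$ as $\lambda'$ for an explicit $\lambda'$ independent of $(i_1, \dots, i_{s'})$ and $(v_1, \dots, v_{s'})$.

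There is no real obstacle here; the statement is essentially bookkeeping. The only care-points are (a) verifying that $s < h < k$ really does leave room for the auxiliary $s - s'$ vertices both inside edges and across parts, so that the factorials $\frac{(k-s')!}{(k-s)!}$ and $\frac{(h-s')!}{(h-s)!}$ are well-defined and nonzero; and (b) relying on the balanced assumption so that the count on the tuple side is genuinely independent of which $s'$ parts were excluded. Both are immediate from the conventions fixed in Section~\ref{sec:preliminaries}.
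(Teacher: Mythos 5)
Your double-counting argument is correct, and it is the standard design-theoretic fact that a $t$-design is an $s$-design for $s\le t$, adapted to the $k$-partite setting. The paper itself states this observation without any proof (treating it as folklore), so there is no paper proof to compare against; your argument supplies exactly the bookkeeping the authors elided. Two small sanity checks confirm your formula $N = \lambda \cdot \frac{(k-s')!/(k-s)!}{(h-s')!/(h-s)!}\cdot n^{\,s-s'}$: for $s'=s$ it collapses to $N=\lambda$, and the hypotheses $s<h<k$ guarantee $h-s\ge 1$ and $k-s\ge 1$, so both falling-factorial ratios are positive and well-defined. The only point worth emphasizing (which you do note) is that the balanced assumption $|V_1|=\dots=|V_k|=n$ is genuinely needed for the tuple-side count to be independent of which $s'$ parts were fixed; without it the observation would fail.
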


\subsection{Hardness Assumptions}
The \emph{$k$-clique} problem asks, given a graph $G = (V,E)$ with $|V| = n$, to decide if $G$ contains a clique (set of pairwise adjacent vertices) of size $k$. For $k$ divisible by $3$, 
$k$-clique detection can be solved in time $\bigO(n^{\omega k/3})$~\cite{NesetrilP85} where $\omega < 2.372$~\cite{AlmanDWXXZ25} is the matrix multiplication exponent. 
Generalizing $k$-clique detection to $h$-uniform hypergraphs $G=(V,E)$ asks to determine if there exists a $k$-(hyper)clique $C \subset V$ of size $k$ such that $\binom{C}{h} \subseteq E$. In contrast to the $2$-uniform case, the matrix multiplication approach fails to generalize to $h\geq 3$ (see~\cite{LincolnWW18} for a discussion) and no known algorithm is significantly faster than $\bigO(n^k)$. 
Improvements for $k$-hyperclique would entail faster algorithms for problems that are believed to be hard, such as \textsc{Max-$h$-SAT}. 
This gives raise to the following conjecture:
\begin{hypothesis}[$h$-Uniform $k$-Hyperclique Hypothesis]
    Let $\epsilon > 0$ and $k > h$.
    \begin{enumerate}
        \item For $h = 2$ there is no $\bigO(n^{({\omega k}/{3}) - \varepsilon})$-time algorithm detecting a $k$-clique in a $k$-partite graph.
        \item For $h\geq 3$, there is no $\bigO(n^{k - \varepsilon})$-time algorithm detecting a $k$-clique in a $k$-partite $h$-uniform hypergraph.
    \end{enumerate}
\end{hypothesis}

\section{Making Hypergraphs Regular} \label{sec:regularizing-hypergraphs}
In this section we break down in detail the approach to regularize $h$-uniform hypergraphs, while preserving large cliques.
More precisely, we prove the following main theorem.
\begin{theorem}\label{th:hc-regularity-construction}
    Let $h < k$ be constants and let $G$ be an $h$-uniform hypergraph. Then there exists a $k$-partite $h$-uniform hypergraph $G'$ satisfying the following conditions:
    \begin{enumerate}
        \item For any $1\leq s<h$, there exists some $\reg$ such that $G'$ is $(s,\reg)$-regular.
        \item $G'$ contains a clique of size $k$ if and only if $G$ contains a clique of size $k$.
        \item There exists a computable function $f$, such that $|V(G')| = f(k)\cdot |V(G)|$.
        \item $G'$ can be computed deterministically in time $\bigO(|G'|)$.
    \end{enumerate}
\end{theorem}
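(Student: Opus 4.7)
The plan is to follow the signed-product strategy sketched in the introduction. As a preparatory step I would first reduce to the balanced $k$-partite case: from $G$, build $G''$ with vertex classes $V_i = V(G) \times \{i\}$, and for every $\{v_1, \dots, v_h\} \in E(G)$ and every tuple of distinct colour indices $i_1, \dots, i_h$, add the colourful edge $\{(v_1, i_1), \dots, (v_h, i_h)\}$ to $G''$. A colourful $k$-clique of $G''$ projects to a $k$-clique of $G$ and vice versa, so the task reduces to regularizing the $k$-partite $G''$ at a $k$-fold vertex blow-up.

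Next, I construct a \emph{template graph} $T$: a signed $k$-partite $h$-uniform hypergraph on vertex classes of size depending only on $k$ and $h$, such that (i) each $(h-1)$-subset of vertices from distinct classes is contained in the same number $r$ of positive and of negative edges, (ii) $T$ has a $k$-hyperclique using only positive edges, (iii) no $k$-hyperclique of $T$ uses any negative edge. Following the outline in Section~\ref{sec:intro-hyperclique}, I let each part $T_i$ be a copy of $\mathcal{G} = (\mathbb{Z}/h\mathbb{Z})^d$ for a dimension $d = d(k,h)$ to be fixed, and declare $\{x_1, \dots, x_h\}$ positive iff $\sum_j x_j = 0$ and negative iff $\sum_j x_j = a^-$, for some fixed $a^- \neq 0$. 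Property (i) is immediate because fixing any $h-1$ summands uniquely determines the last one. Property (ii) is witnessed by every ``diagonal'' tuple $(x, \dots, x)$, since $hx = 0$ in $\mathcal{G}$. The crux is (iii): I need to choose $d$ and $a^-$ so that any $k$-tuple all of whose $h$-sums lie in $\{0, a^-\}$ actually has every such sum equal to $0$. Swapping one element of an $h$-subset shows that the pairwise differences $v_i - v_j$ must lie in $\{-a^-, 0, a^-\}$, so it remains to rule out the finitely many non-diagonal patterns of differences that could realise the forbidden sum $a^-$; I expect to handle this by choosing $d$ large enough and $a^-$ of large additive order (e.g.\ generically, via a union bound over the finitely many bad patterns for fixed $k, h$), and then exhibiting an explicit choice. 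This arithmetic balancing step is the main technical obstacle of the proof.

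Finally, I define $G'$ as the signed product of $G''$ with $T$: set $V(G') = V(G'') \times V(T)$, partition $V(G')$ via $V_i(G'') \times T_i$, and declare $\{(u_1, x_1), \dots, (u_h, x_h)\}$ an edge of $G'$ iff either $\{u_j\}_j \in E(G'')$ and $\{x_j\}_j$ is positive in $T$, or $\{u_j\}_j \notin E(G'')$ and $\{x_j\}_j$ is negative in $T$. The regularity property then follows from (i) by separately counting completions of each edge type: for any $s<h$ fixed vertices in distinct classes of $G'$, both types contribute a count that depends only on $s$, $r$, and $|V(G'')|$, so $G'$ is $(s,\reg)$-regular for every $s < h$. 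The clique-preservation property follows from (ii) (pairing a $k$-clique of $G''$ with the diagonal positive $k$-clique of $T$ yields a $k$-clique of $G'$) and from (iii) (any $k$-clique of $G'$ must use only positive edges of $T$, hence projects to a $k$-clique of $G''$). The size bound is $|V(G')| = k \cdot |V(T)| \cdot |V(G)|$ with $|V(T)|$ a function of $k, h$ only, and the $\bigO(|G'|)$ construction time follows by precomputing $T$ and then enumerating all candidate hyperedges of $G'$.
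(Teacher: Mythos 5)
Your overall strategy -- reducing to the balanced $k$-partite case, building a signed ``template'' hypergraph over $(\mathbb{Z}/h\mathbb{Z})^d$, and taking a signed product -- is exactly the route the paper takes. The preprocessing step to produce $G''$ is harmless but unnecessary (the paper's signed product already takes an arbitrary $G$ and yields a $k$-partite output), and your treatment of regularity, clique preservation, size, and running time given a correct template is sound. The gap is in the template construction, precisely the step you flag as the ``main technical obstacle.''

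Concretely, a single fixed target $a^-$ cannot work, and enlarging $d$ or the additive order of $a^-$ cannot fix it. In $(\mathbb{Z}/h\mathbb{Z})^d$ every element has order dividing $h$, so ``large additive order'' is unavailable; but more fundamentally there is a bad pattern that is present for \emph{every} choice of $a^- \neq 0$. Take $t_1 = a^-$ in $T_1$ and $t_i = \overline 0$ in $T_i$ for all $i \geq 2$. Every $h$-subset avoiding index $1$ sums to $\overline 0$ (positive), and every $h$-subset containing index $1$ sums to $a^-$ (negative); hence all $\binom{k}{h}$ sums lie in $\{\overline 0, a^-\}$ and $\{t_1,\dots,t_k\}$ is a $k$-clique of $T$ using negative edges. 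Condition~(iii) fails, and in the signed product this clique of $T$ can combine with a suitable non-clique of $G$ (a $(k-1)$-clique $u_2,\dots,u_k$ together with a $u_1$ forming no hyperedge with any $h-1$ of them) to produce a spurious $k$-clique in $G'$. So your ``finitely many bad patterns'' cannot all be avoided -- at least one is unavoidable for any fixed $a^-$.

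The paper's fix is to make the negative target depend on \emph{which parts} the edge touches: the group is $\mathcal G = (\mathbb{Z}/h\mathbb{Z})^{\binom{k}{h}}$ with coordinates indexed by $S \in \binom{[k]}{h}$, and an $h$-tuple drawn from parts $\{i_1,\dots,i_h\}=:S$ is negative iff its sum equals $e_S$, the standard basis vector for coordinate $S$. Then, if some clique of size $h+1$ contained a negative edge on parts $S$, projecting onto the $S$-coordinate gives a linear system over $\mathbb{Z}/h\mathbb{Z}$ with coefficient matrix $J-I$ (all $\binom{h+1}{h}$ sub-sums must be $0$ or the respective $e_{S_j}$, and for $S_j \neq S$ the $S$-coordinate of $e_{S_j}$ vanishes); summing its rows yields $h\cdot(\text{sum})=1$, i.e.\ $0=1$, a contradiction. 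This $S$-dependent choice of target is the missing ingredient; the rest of your argument then goes through.
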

Moreover, as a direct consequence of Theorem \ref{th:hc-regularity-construction}, we get the following theorem for free.
\begin{theorem}\label{th:regular-hc-hypothesis}
    Let $h \geq 2$. The $k$-Clique Detection problem is solvable in time $f(k) \cdot n^{g(k)}$ on $k$-partite $h$-uniform hypergraphs if and only if it is solvable in time $f'(k) \cdot n^{g(k)}$ on \emph{regular} $k$-partite $h$-uniform hypergraphs (for some computable functions $f, f'$ and $h\leq g(k)\leq k$).
\end{theorem}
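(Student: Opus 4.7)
The plan is to deduce Theorem~\ref{th:regular-hc-hypothesis} directly from Theorem~\ref{th:hc-regularity-construction}, treating the latter as a black box. The statement splits into two directions. The easy direction is immediate: any algorithm solving $k$-Clique Detection on general $k$-partite $h$-uniform hypergraphs in time $f(k)\cdot n^{g(k)}$ also solves the problem on regular $k$-partite $h$-uniform hypergraphs in the same time, since regular instances form a subclass.

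For the non-trivial direction, I would argue as follows. Suppose we have an algorithm $A$ solving $k$-Clique Detection on regular $k$-partite $h$-uniform hypergraphs in time $f'(k)\cdot n^{g(k)}$. Given an arbitrary $k$-partite $h$-uniform hypergraph $G$ on $n$ vertices, first invoke Theorem~\ref{th:hc-regularity-construction} to produce a $k$-partite $h$-uniform hypergraph $G'$ with $|V(G')| = f(k)\cdot n$ that is $(s,\reg_s)$-regular for every $1\le s<h$; in particular $G'$ is $(h-1,\lambda)$-regular, which matches the notion of regularity fixed in Section~\ref{sec:preliminaries}. By property~2 of Theorem~\ref{th:hc-regularity-construction}, $G'$ contains a $k$-clique if and only if $G$ does, so running $A$ on $G'$ and returning its answer solves the original instance correctly.

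For the running time, the construction of $G'$ takes time $\bigO(|G'|) = \bigO(f(k)^h \cdot n^h)$, and running $A$ on $G'$ takes time $f'(k)\cdot (f(k)\cdot n)^{g(k)} = f'(k)\cdot f(k)^{g(k)} \cdot n^{g(k)}$. Since $g(k)\ge h$, both contributions can be absorbed into a bound of the form $f''(k)\cdot n^{g(k)}$ for a computable function $f''$ depending only on $k$, as required.

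The only real ``obstacle'' is to observe that this reduction does not degrade the exponent of $n$, which hinges critically on the vertex blow-up in Theorem~\ref{th:hc-regularity-construction} being a function of $k$ alone (property~3) rather than a function of $n$. Since that heavy lifting is already encapsulated in Theorem~\ref{th:hc-regularity-construction}, the present theorem follows at once; the remaining work is routine bookkeeping.
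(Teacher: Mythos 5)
Your proof is correct and follows essentially the same route as the paper's: both directions are handled by observing that the ``regular $\Rightarrow$ general'' direction is immediate from Theorem~\ref{th:hc-regularity-construction}, with the key point being that the vertex blow-up is bounded by a function of $k$ alone so the exponent of $n$ is preserved, and the other direction is trivial since regular instances are a subclass. No gaps.
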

In order to prove Theorem \ref{th:hc-regularity-construction} we come up with an appropriate notion of hypergraph product, that we call \emph{signed product}. It takes an arbitrary hypergraph $G$ and regular hypergraph $T(h,k)$ equipped with a sign function and certain properties (we will call such a hypergraph \emph{template hypergraph}) and produces the hypergraph $G'$, satisfying the conditions of the theorem. 
The proof of Theorem \ref{th:hc-regularity-construction} thus consists of two main parts:
\begin{enumerate}
    \item Defining the proper notion of the \emph{signed product} that produces the hypergraph $G'$ satisfying the conditions of the theorem.
    \item Showing that for any choice of $2\leq h<k$ we can construct a template hypergraph $T(h,k)$.
\end{enumerate}
\medskip
We start by formally defining the notion of a \emph{signed hypergraph}.
\begin{definition}[Signed Hypergraph]
    Let $h<k$ be constants. A \emph{signed hypergraph} is a $k$-partite, $h$-uniform hypergraph $T = (T_1 \cup \dots\cup T_k, E, \sigma)$ equipped with a sign function $\sigma: E\to \{1,-1\}$. 
\end{definition}
We say an edge $e$ is \emph{positive} if $\sigma(e)=1$ and denote by $E^{+} := \{e\in E\mid \sigma(e) = 1\}$ the set of all positive edges. We define  the set of \emph{negative} edges $E^-$ analogously.
Using this notation, we equivalently represent a signed hypergraph as a tuple $T = (T_1\cup\dots \cup T_k, E^+,E^-)$.

With the concept of signed hypergraphs set, we are ready to define the \emph{signed product}.
\begin{definition}[Signed Product]
    Given an $h$-uniform hypergraph $G=(V,E)$ and a signed hypergraph $T = (T_1\cup\dots\cup T_k, E^+,E^-)$, the \emph{signed product} of $G$ and $T$ is a $k$-partite, $h$-uniform hypergraph $G'$ defined as follows:
    \begin{itemize}
        \item $V(G') = V\times (T_1\cup \dots \cup T_k)$.
        \item Let $\{t_1,\dots,t_h\} \in E^+$. Then for all $u_1,\dots, u_h\in V$ that form an edge in $G$, let $\{(u_1,t_1),\dots, (u_h,t_h)\}$ be an edge in $G'$.
        \item Let $\{t_1,\dots,t_h\} \in E^-$. Then for all, not necessarily distinct, $u_1,\dots, u_h\in V$ that form a non-edge in $G$, let $\{(u_1,t_1),\dots, (u_h,t_h)\}$ form an edge in $G'$.

    \end{itemize}
\end{definition}
Given a hypergraph $G$, in order for our product graph $G'$ to satisfy the properties of {Theorem~\ref{th:hc-regularity-construction}} it is not sufficient to take just any signed hypergraph $T$, but we want our signed hypergraph to have a certain structure. 
To this end, we introduce the notion of \emph{template hypergraphs}.
\begin{definition}[Template Hypergraph]
    For fixed constants $h < k$, we call $T(h,k) = (T_1\cup \dots\cup T_k, E^+,E^-)$ a template hypergraph, if the following properties are satisfied:
    \begin{enumerate}
        \item\label{prop:template:1} There exists a positive integer $\reg$, such that the underlying ``monochromatic'' hypergraphs $T^+ := (T_1\cup\dots\cup T_k, E^+)$ and $T^-:=(T_1\cup\dots\cup T_k, E^-)$ are \emph{both} $(h-1, \reg)$-regular.
        \item\label{prop:template:2} The hypergraph $T^+$ contains a clique of size $k$.
        \item\label{prop:template:3} No clique of size $h+1$ contains edges from $E^-$.
    \end{enumerate}
\end{definition}
\smallskip
Equipped with the definitions above, we are now ready to formally state the main lemma of this section.
\begin{lemma}\label{lemma:bichromatic-product}
    For any fixed constants $k,h$ with $h<k$, let $G = (V,E')$ be an $h$-uniform hypergraph and $T(h,k) = (T_1\cup \dots \cup T_k, E,\sigma)$ be a template hypergraph. Let $G'$ be the signed product of $G$ and $T(h,k)$. Then $G'$ satisfies all the conditions of Theorem \ref{th:hc-regularity-construction}.
\end{lemma}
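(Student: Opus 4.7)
\medskip
\noindent\emph{Proof plan.}
The plan is to verify the four conditions of Theorem~\ref{th:hc-regularity-construction} directly from the definitions. Conditions~3 and~4 are essentially bookkeeping: $|V(G')|=|V(G)|\cdot\sum_{i=1}^k|T_i|$ and $|V(T)|$ depends only on $h$ and $k$, so the vertex blow-up is a function $f(k)$; for the construction I would enumerate all $h$-tuples of $V$, test $G$-membership in constant time after preprocessing, and append to $G'$ one edge for each matching template edge (whose total count depends only on $h$ and $k$). The two substantive conditions are regularity and clique preservation, and both will rely crucially on the three template properties.

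For regularity, I would establish $(h-1,\reg')$-regularity of $G'$ for $\reg'=(k-h+1)\reg\cdot|V(G)|$ and then invoke Observation~\ref{obs:reguarity-of-smaller-arities} to cover all smaller arities. Fix an arbitrary $(h-1)$-tuple $(u_1,t_1),\dots,(u_{h-1},t_{h-1})$ of $G'$-vertices with the $t_i$'s in $h-1$ distinct parts of $T$, and count its extensions by a vertex $(u_h,t_h)$ in some remaining part. By Property~\ref{prop:template:1} there are exactly $\reg$ positive and $\reg$ negative choices for $t_h$ in each admissible part. For each positive choice, the admissible $u_h$ are those completing $\{u_1,\dots,u_{h-1}\}$ to a $G$-edge, a count I denote $d^+(u_1,\dots,u_{h-1})$; for each negative choice, the relaxed ``not necessarily distinct'' rule yields $|V(G)|-d^+(u_1,\dots,u_{h-1})$ admissible $u_h$, counting in particular any $u_h$ that coincides with some previous $u_i$. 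The two contributions thus sum to $\reg\cdot|V(G)|$ per admissible part, independently of the chosen tuple. A short separate check handles the degenerate case in which $u_1,\dots,u_{h-1}$ are not pairwise distinct (there $d^+=0$, but the negative rule alone already contributes the full $\reg\cdot|V(G)|$).

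For clique preservation, the forward direction is immediate: given a $k$-clique $u_1,\dots,u_k$ in $G$ and the $k$-clique $t_1,\dots,t_k$ in $T^+$ guaranteed by Property~\ref{prop:template:2} (with $t_i\in T_i$ by $k$-partiteness), every $h$-subset of indices pairs a positive $T$-edge with a $G$-edge, so $(u_1,t_1),\dots,(u_k,t_k)$ forms a $k$-clique in $G'$. The reverse direction is the main obstacle I anticipate, and it is where Property~\ref{prop:template:3} is essential. Given a $k$-clique $(u_1,t_1),\dots,(u_k,t_k)$ in $G'$, I must rule out that any $h$-subset of the projected clique $t_1,\dots,t_k$ in $T^+\cup T^-$ is a negative edge: were $\{t_{i_1},\dots,t_{i_h}\}\in E^-$, adjoining any further $t_{i_{h+1}}$ (possible since $k\geq h+1$) would yield an $(h+1)$-clique in $T^+\cup T^-$ containing a negative edge, contradicting Property~\ref{prop:template:3}. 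Hence every $h$-subset of the projection lies in $E^+$, and by the signed-product rule the corresponding $\{u_{i_1},\dots,u_{i_h}\}$ is a $G$-edge. In particular, the $u_i$'s are pairwise distinct (each pair lies in some $h$-subset that is a $G$-edge, hence of $h$ distinct vertices), so $u_1,\dots,u_k$ is a $k$-clique in $G$, completing the equivalence.
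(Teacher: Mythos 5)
Your proposal is correct and takes essentially the same route as the paper, which splits the argument into exactly the two sub-lemmas you describe: Lemma~\ref{lemma:cliques-in-bichromatic-product} (clique preservation, using Property~\ref{prop:template:2} for the forward direction and the same $(h+1)$-clique argument via Property~\ref{prop:template:3} to exclude negative edges in the reverse direction) and Lemma~\ref{lemma:regularity-of-bichromatic-product} (regularity via the cancellation $\reg\cdot d^{+}+\reg\cdot(|V(G)|-d^{+})=\reg\cdot|V(G)|$). The only differences are immaterial: a normalization of the constant $\reg$ (the paper's definition of $(s,\reg)$-regularity already counts edges over all remaining parts at once, so no extra factor $k-h+1$ appears), and your explicit handling of tuples with repeated $u_i$'s, which the paper treats implicitly.
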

Before proving this lemma however, let us first prove some auxiliary statements.
For the rest of this section, let $G$ denote an arbitrary $h$-uniform hypergraph, $T(h,k) = (T_1 \cup\dots,T_k,E^+,E^-)$ a template hypergraph and $G'$ the signed product of $G$ and $T(h,k)$.
\begin{restatable}{lemma}{lemmaCliqueConservation}\label{lemma:cliques-in-bichromatic-product}
    Let $(u_1,t_1), \dots, (u_k,t_k)$ be $k$ arbitrary pairwise distinct vertices in $G'$. Let $\pi:[k]\to [k]$ be an arbitrary permutation. The following statements are equivalent:
    \begin{enumerate}
      \item The vertices $u_{1}, \dots, u_k$ form a clique in $G$ \emph{and} $t_1, \dots, t_k$ form a clique in $T(h,k)$.
        \item The vertices $(u_{\pi(1)},t_1), \dots, (u_{\pi(k)},t_k)$ form a clique in $G'$.
    \end{enumerate}
\end{restatable}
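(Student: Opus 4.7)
The plan is to verify the biconditional by inspecting, for each $h$-subset of indices $\{i_1,\dots,i_h\}\subseteq[k]$, the corresponding $h$-subset of $G'$-vertices and translating through the definition of the signed product. Since cliques are unordered and $\pi$ is a bijection on $[k]$, $\pi$ merely relabels which $u$-vertex is paired with which $t$-vertex; the underlying clique conditions depend only on the sets $\{u_1,\dots,u_k\}$ and $\{t_1,\dots,t_k\}$. The pivotal observation is that, because $k>h$, every $h$-subset of a $k$-clique of $T(h,k)$ lies inside some $(h+1)$-subclique, so property~(\ref{prop:template:3}) of template hypergraphs forces that $h$-subset to lie in $E^+$ rather than $E^-$.

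For $(1)\Rightarrow(2)$, I fix an arbitrary $h$-subset $\{i_1,\dots,i_h\}$. The set $\{t_{i_1},\dots,t_{i_h}\}$ is an edge of $T(h,k)$ and by the preceding remark lies in $E^+$, while $\{u_{\pi(i_1)},\dots,u_{\pi(i_h)}\}$ is an edge of $G$ on $h$ distinct vertices since $u_1,\dots,u_k$ form a $k$-clique. The definition of signed product then declares $\{(u_{\pi(i_j)},t_{i_j})\}_{j=1}^h$ an edge of $G'$, proving $(2)$. For $(2)\Rightarrow(1)$, each $h$-subset of the $k$-clique in $G'$ is an edge whose $t$-projection lies in $E^+\cup E^-$; hence $\{t_1,\dots,t_k\}$ is a $k$-clique of $T(h,k)$ (its $k$ vertices are pairwise distinct, as the $k$ pairs $(u_{\pi(i)},t_i)$ occupy $k$ distinct parts inherited from the $k$-partite $T$). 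Invoking property~(\ref{prop:template:3}) once more, every such $h$-subset of the $t$-clique is actually in $E^+$, so each $u$-projection $\{u_{\pi(i_1)},\dots,u_{\pi(i_h)}\}$ must be an edge of $G$ on $h$ distinct vertices. Placing any two indices $i\neq j$ inside a common $h$-subset then yields $u_{\pi(i)}\neq u_{\pi(j)}$; since $\pi$ is a bijection, $u_1,\dots,u_k$ are pairwise distinct and form a $k$-clique in $G$.

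The main subtlety---and the only nontrivial use of the template structure---is the asymmetry of the signed product: negative edges explicitly permit repeated $u$-coordinates, so without further input a $G'$-clique need not project to $k$ distinct vertices of $G$. Property~(\ref{prop:template:3}) is precisely the tool that resolves this: by excluding negative edges from every $(h+1)$-subclique of $T(h,k)$, it simultaneously forces the projected $t$-clique to be monochromatically positive and each $u$-projection to be a genuine $h$-edge of $G$ on distinct vertices.
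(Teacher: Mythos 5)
Your proof is correct and follows essentially the same route as the paper's: use $k>h$ together with template property~(\ref{prop:template:3}) to conclude that every $h$-subset of a $k$-clique in $T(h,k)$ lies in $E^+$, which in the forward direction directly yields the $G'$-edges and in the backward direction rules out the negative-edge case of the signed product. Your version is slightly more explicit than the paper's about the one subtlety the signed product introduces — that negative edges allow repeated $u$-coordinates — and you close this by placing any two indices in a common $h$-subset to force distinctness, whereas the paper leaves this implicit in the conclusion that $u_1,\dots,u_k$ ``form a clique.''
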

\medskip
\begin{proof}
    Assume first that the vertices $u_1, \dots, u_k$ form a clique in $G$ and the vertices $t_1, \dots, t_k$ form a clique in $T(h,k)$.
    Since $T(h,k)$ is a template hypergraph (and $k\geq h+1$), for all indices $i_1,\dots, i_h\in [k]$, it holds that $\{t_{i_1}, \dots t_{i_h}\}\in E^+$ . 
    Hence, by definition of the signed product, and since $u_1,\dots, u_k$ form a clique in $G$, it follows that $(u_{\pi(1)},t_1), \dots, (u_{\pi(k)},t_k)$ form a clique in $G'$.

    Assume now that the vertices $(u_{\pi(1)},t_1), \dots, (u_{\pi(k)},t_k)$ form a clique in $G'$.
    Consider (without loss of generality) the edge $\{(u_{\pi(1)},t_1), \dots, (u_{\pi(r)},t_h)\}$. 
    By definition of the signed product, it holds that either 
    \begin{enumerate}
        \item $\{u_{\pi(1)}, \dots, u_{\pi(r)}\}$ is an edge in $G$, \emph{and} $\{t_1, \dots, t_h\}\in E^+$, or
        \item $\{u_{\pi(1)}, \dots, u_{\pi(r)}\}$ is a non-edge in $G$, \emph{and} $\{t_1, \dots, t_h\} \in E^-$.\label{item:2}
    \end{enumerate}
    Looking at these two conditions, it is clear that the set $\{t_1,\dots, t_k\}$ forms a clique in $T(h,k)$, and moreover, since $T(h,k)$ is a template hypergraph, it must hold that all of the edges in this clique are positive. 
    Hence, condition \ref{item:2} never occurs and it follows that also $u_1,\dots, u_k$ form a clique in $G$, as desired.
\end{proof}

For a hypergraph $X$, let $C_k(X)$ denote the number of cliques of size $k$ in~$X$. 
Following the approach as in the proof of the previous lemma, we obtain this corollary.
\begin{corollary}
    Let $T(h,k), G$ and $G'$ be as above. Then 
    $C_k(G') = k!\cdot C_k(G)\cdot C_k(T(h,k))$.
\end{corollary}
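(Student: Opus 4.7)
My plan is to establish a $k!$-to-$1$ correspondence between triples $(K_G, K_T, \pi)$ (where $K_G$ is a $k$-clique of $G$, $K_T$ is a $k$-clique of $T(h,k)$, and $\pi \in S_k$) and $k$-cliques of $G'$, from which the identity $C_k(G') = k! \cdot C_k(G) \cdot C_k(T(h,k))$ follows immediately. The forward direction is just the reverse implication of Lemma~\ref{lemma:cliques-in-bichromatic-product}: given $K_G = \{u_1, \dots, u_k\}$ and $K_T = \{t_1, \dots, t_k\}$ (which we may canonically label so that $t_i \in T_i$, since $T(h,k)$ is $k$-partite and any $k$-clique selects precisely one vertex per part), every $\pi \in S_k$ produces the $k$-clique $\{(u_{\pi(1)}, t_1), \dots, (u_{\pi(k)}, t_k)\}$ in $G'$.

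For the converse, let $K'$ be any $k$-clique in $G'$. Since $G'$ is $k$-partite with parts $V \times T_i$, we can label its vertices as $(u_1, t_1), \dots, (u_k, t_k)$ with $t_i \in T_i$. Applying Lemma~\ref{lemma:cliques-in-bichromatic-product} with $\pi = \mathrm{id}$ yields at once that $\{u_1, \dots, u_k\}$ is a clique in $G$ and $\{t_1, \dots, t_k\}$ is a clique in $T(h,k)$. To finally argue that the $u_i$'s are pairwise distinct (so that $\{u_1, \dots, u_k\}$ really is an unordered $k$-set), I invoke the last step of the proof of Lemma~\ref{lemma:cliques-in-bichromatic-product}, which shows that every hyperedge of $K'$ must come from $E^+$; by construction, positive edges require distinct $V$-coordinates, and since $h \ge 2$ any pair $u_i, u_j$ is contained in some $h$-subset supporting such an edge.

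To finish, I would observe that the map $(K_G, K_T, \pi) \mapsto K'$ is exactly $k!$-to-$1$: for a fixed $K'$, the coordinates in the parts $V \times T_i$ determine $K_T = \{t_1, \dots, t_k\}$ uniquely and $K_G = \{u_1, \dots, u_k\}$ uniquely as an unordered set, while the assignment $i \mapsto u_i$ matching $V$-coordinates to the parts corresponds bijectively to a choice of permutation of $K_G$, i.e.\ to a $\pi \in S_k$. The only step I expect to require genuine care is the distinctness of the $u_i$'s; without that, one could worry about degenerate cliques in $G'$ collapsing via negative edges, but Property~\ref{prop:template:3} of the template hypergraph—already exploited in Lemma~\ref{lemma:cliques-in-bichromatic-product}—rules this out cleanly, so the counting goes through as stated.
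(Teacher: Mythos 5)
Your proposal is correct and matches the paper's intended argument. The paper offers no explicit proof for this corollary (it merely remarks that one should "follow the approach as in the proof of the previous lemma"), and the $k!$-to-$1$ correspondence you set up — together with the care you take that the $V$-coordinates of any $k$-clique in $G'$ are pairwise distinct, justified via Property~\ref{prop:template:3} forcing all involved $T$-edges to be positive — is exactly the implicit content.
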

\begin{restatable}{lemma}{lemmaBichromaticProductRegularity}\label{lemma:regularity-of-bichromatic-product}
    Let $T(h,k), G$ and $G'$ be as above. Let $\reg$ be such that both underlying hypergraphs $T^+$ and $T^-$ are $\reg$-regular. Then $G'$ is a $k$-partite $\reg\cdot|V(G)|$-regular hypergraph.
\end{restatable}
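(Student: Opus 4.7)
The plan is to verify the $(h-1, \reg \cdot |V(G)|)$-regularity directly, by fixing an arbitrary $(h-1)$-tuple of vertices of $G'$ from pairwise distinct parts and counting the edges of $G'$ containing it. First I would fix the $k$-partition of $G'$ as $V_i(G') := V(G) \times T_i$ and verify that this is valid: every edge of $G'$ stems from an edge $\{t_1, \ldots, t_h\}$ of $T$, whose vertices lie in distinct parts of $T$ by $k$-partiteness, so the paired vertices $(u_1, t_1), \ldots, (u_h, t_h)$ automatically lie in distinct parts $V_{i_1}(G'), \ldots, V_{i_h}(G')$. Then, for any choice of distinct indices $i_1, \ldots, i_{h-1}$ and vertices $(u_j, t_j) \in V_{i_j}(G')$, every edge of $G'$ containing them is determined by the choice of an additional vertex $(u_h, t_h) \in V(G) \times T_{i_h}$ for some $i_h \notin \{i_1, \ldots, i_{h-1}\}$.

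The core calculation is to split the count according to the sign of the induced template edge $\{t_1, \ldots, t_h\}$. By the $(h-1, \reg)$-regularity of $T^+$, there are exactly $\reg$ choices of $t_h$ (summed over the admissible parts $T_{i_h}$) that complete $\{t_1, \ldots, t_{h-1}\}$ to a positive edge, and by the $(h-1, \reg)$-regularity of $T^-$ the same holds for negative completions. For each positive completion, the admissible choices of $u_h$ are precisely those for which $\{u_1, \ldots, u_h\}$ is an edge of $G$; call this count $e = e(u_1, \ldots, u_{h-1})$. For each negative completion, the signed product admits any $u_h \in V(G)$ for which $\{u_1, \ldots, u_h\}$ is a non-edge of $G$, giving $|V(G)| - e$ choices. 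Summing over the two cases yields
\[
\reg \cdot e + \reg \cdot (|V(G)| - e) = \reg \cdot |V(G)|,
\]
independent of the fixed vertices, which is exactly the required regularity.

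The main subtlety — and, to my eye, the real point of the asymmetric signed-product definition — is handling the case in which $u_1, \ldots, u_{h-1}$ contain repetitions. This case is permitted because the distinctness of $t_1, \ldots, t_{h-1}$ (they come from different parts of $T$) already forces the pairs $(u_j, t_j)$ to be distinct vertices of $G'$. In this case $e = 0$ vacuously, since an $h$-uniform hypergraph $G$ has no edges with repeated vertices; however, the negative contribution is still $\reg \cdot |V(G)|$ because the definition of signed product explicitly allows non-distinct first coordinates for negative template edges, so every $u_h$ yields a valid non-edge. The total thus remains $\reg \cdot |V(G)|$. Without this asymmetric treatment of negative edges, a repeated $u_i$ among the fixed vertices would break regularity, so the whole construction hinges on this design choice.
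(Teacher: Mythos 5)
Your proof is correct and follows essentially the same route as the paper: fix an $(h-1)$-tuple from distinct parts, split the completion count by sign of the induced template edge, and use the $(h-1,\reg)$-regularity of $T^+$ and $T^-$ to get $\reg\cdot e + \reg\cdot(|V(G)|-e) = \reg\cdot|V(G)|$. You make explicit a subtlety the paper's proof treats implicitly---that the ``not necessarily distinct'' clause in the negative-edge case of the signed product is precisely what keeps the count at $\reg\cdot|V(G)|$ when some of $u_1,\dots,u_{h-1}$ coincide---which is a nice observation.
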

\begin{proof}
    Take any set of $h-1$ vertices $\{(u_1,t_1),\dots, (u_{r-1}, t_{h-1})\}$ such that $t_1\in T_{i_1},\dots, t_{h-1}\in T_{i_{h-1}}$ for pairwise distinct $i_1,\dots, i_{h-1}$.
    Assume there are $\alpha$ many vertices $u_h$ in $G$ such that $\{u_1,\dots, u_{h-1}, u_h\}\in E(G)$.
    Since there are $\reg$ many vertices $t_h$, such that $\{t_1,\dots, t_{h}\}\in E^+$, by the definition of the signed product, for each such pair $(u_h, t_h)$, the set  $\{(u_1,t_1),\dots, (u_{r}, t_{h})\}$ forms an edge in $G$. In fact these are all of the edges that contain $\{(u_1,t_1),\dots, (u_{r-1}, t_{r-1})\}$ and that stem from the positive edges in $T(h,k)$.
    On the other hand, there are clearly $|V(H)| - \alpha$ many vertices $u_h$ in $G$ such that $\{u_1,\dots, u_{r-1}, u_h\}$ forms a non-edge in $G$. 
    Since there are also $\reg$ many vertices $t_h$, such that $\{t_1,\dots, t_{h}\}\in E^-$, by the definition of the signed product, for each such pair $(u_h, t_h)$, the set  $\{(u_1,t_1),\dots, (u_{h}, t_{h})\}$ forms an edge in $G$. Similarly as above, these are all of the edges that contain $\{(u_1,t_1),\dots, (u_{h-1}, t_{h-1})\}$ and that stem from the negative edges in $T(h,k)$.
    
    Note that the set of edges in $G$ that stems from the positive edges in $T(h,k)$ and the set of edges in $G$ that stems from the negative edges in $T(h,k)$ are disjoint, thus the total number of edges $e\in E(G)$ such that $\{(u_1,t_1),\dots, (u_{h-1}, t_{h-1})\}\subset e$ can be computed by summing those two values.
    Hence, we obtain the number of such edges as $\reg\alpha + \reg(|V(G)|-\alpha) = \reg\cdot(|V(G)|)$. In particular this number does not depend on $\alpha$ and $G'$ is $\reg\cdot(|V(G)|)$-regular.
\end{proof}

\subsection{Constructing Template Hypergraphs} \label{sec:template-graphs}
In this section we show that for any choice of constants $h<k$ we can efficiently construct a template hypergraph $T(h,k)$, which is the missing ingredient in the proof of Theorem \ref{th:hc-regularity-construction}. In particular, we prove the following lemma
\begin{lemma}\label{lemma:template-graph-construction}
    For any fixed constants $h<k$, there exists a template hypergraph $T(h,k)$.
\end{lemma}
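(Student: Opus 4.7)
The plan is to realize $T(h,k)$ as a multi-target Cayley-style hypergraph over an elementary abelian $h$-group. Set $d:=\binom{k}{h}$, $\mathcal{G}:=(\mathbb{Z}/h\mathbb{Z})^d$, and identify the standard basis $\{e_J\}_{J\in\binom{[k]}{h}}$ with the $h$-subsets of $[k]$. Let each part be a copy $T_i:=\mathcal{G}$, and for any $h$-tuple $\{x_{i_1},\dots,x_{i_h}\}$ with $x_{i_j}\in T_{i_j}$ and pairwise distinct $i_j$, declare it to lie in $E^+$ iff $\sum_j x_{i_j}=0$ and in $E^-$ iff $\sum_j x_{i_j}=e_{\{i_1,\dots,i_h\}}$. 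Condition~\ref{prop:template:1} is then immediate: for any $(h-1)$-tuple of vertices from distinct parts and any of the $k-h+1$ extension parts $i_h$, there is exactly one positive completion $x_{i_h}=-\sum_{j<h}x_{i_j}$ and exactly one negative completion $x_{i_h}=e_{\{i_1,\dots,i_h\}}-\sum_{j<h}x_{i_j}$, making both $T^+$ and $T^-$ uniformly $(h-1,\,k-h+1)$-regular. Condition~\ref{prop:template:2} is witnessed by the all-zero tuple $(0,\dots,0)\in T_1\times\cdots\times T_k$, since any $h$-sum of zeros equals $0$.

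The bulk of the argument is condition~\ref{prop:template:3}. Assume for contradiction that $\{z_1,\dots,z_{h+1}\}$ is an $(h+1)$-clique in $T^+\cup T^-$ sitting in distinct parts $I=\{i_1,\dots,i_{h+1}\}$ and containing at least one negative hyperedge. Let $s:=\sum_l z_l$. For every $l\in [h+1]$, the hyperedge omitting $z_l$ has sum $s-z_l$, which must equal $0$ or $e_{I\setminus\{i_l\}}$, so $z_l\in \{s,\;s-e_{I\setminus\{i_l\}}\}$. Let $B\subseteq [h+1]$ be the nonempty set of indices for which the second option holds; summing all $z_l$ yields $s=(h+1)s-\sum_{l\in B}e_{I\setminus\{i_l\}}$, equivalently $hs=\sum_{l\in B}e_{I\setminus\{i_l\}}$. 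Because $\mathcal{G}$ has exponent $h$, the left-hand side vanishes, forcing $\sum_{l\in B}e_{I\setminus\{i_l\}}=0$ in $(\mathbb{Z}/h\mathbb{Z})^d$. However the subsets $I\setminus\{i_l\}$ are pairwise distinct across $l\in B$, so the summands are \emph{distinct} standard basis vectors; the coordinate expansion of their sum is a $0/1$-vector with exactly $|B|\geq 1$ ones, which is nonzero modulo $h$ for $h\geq 2$. Contradiction.

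The main obstacle I would anticipate is precisely condition~\ref{prop:template:3}. A naive Cayley construction using a single negative target $a^-\in\mathcal{G}$ does not suffice: in any abelian group of exponent $h$ the equation $hs=|B|\cdot a^-$ is trivially solvable whenever $|B|=h$, producing a bad $(h+1)$-clique with one vertex of value $s$ and $h$ vertices of value $s-a^-$ across distinct parts, comprising $h$ negative and $1$ positive hyperedge. The key idea is to use a \emph{distinct, independent} negative target per $h$-subset of part-indices, so that every nonempty $0/1$-combination of these targets is automatically nonzero; standard basis vectors in $(\mathbb{Z}/h\mathbb{Z})^{\binom{k}{h}}$ accomplish this explicitly and deterministically. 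The resulting template $T(h,k)$ has size $k\cdot h^{\binom{k}{h}}$ depending only on $h$ and $k$, is constructible in time linear in its size, and feeds directly into Lemma~\ref{lemma:bichromatic-product} to finish Theorem~\ref{th:hc-regularity-construction}.
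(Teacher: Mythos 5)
Your construction is identical to the paper's (Cayley-style hypergraph over $(\mathbb{Z}/h\mathbb{Z})^{\binom{k}{h}}$ with positive target $\overline{0}$ and per-subset negative target $e_S$), and your verification of conditions~\ref{prop:template:1} and~\ref{prop:template:2} matches the paper's. Your argument for condition~\ref{prop:template:3} is a mildly streamlined version of the paper's: you work globally with the full vector sum $s=\sum_l z_l$, derive $hs=\sum_{l\in B}e_{I\setminus\{i_l\}}$, and invoke exponent $h$ of the group, whereas the paper restricts to a single coordinate $S$, writes down the corresponding linear system over $\mathbb{Z}/h\mathbb{Z}$, and sums its rows to reach $h\cdot(\text{sum})=1$; both boil down to the same vanishing of $h$ in the exponent-$h$ group, so this is the same proof in substance.
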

This template hypergraph can be deterministically constructed in time $\bigO(f(h,k))$ for some computable function $f$. 

We first give the construction, then prove that it fulfills the properties of a template hypergraph.
For the rest of this section let $h < k$ be constants. Consider the additive group $\G := \left(\mathbb Z/h\mathbb Z\right)^{\binom{k}{h}}$, i.e. the group of all vectors of length $\binom{k}{h}$ over the group $\mathbb Z/h\mathbb Z$. We index the dimensions of these vectors by subsets of $[k]$ of size $h$.
For any set $S\in \binom{[k]}{h}$, denote by $e_S$ the vector in $\G$ that has a $1$ in the dimension indexed by $S$ and zeros in all other dimensions, and by $\overline 0$ we denote the additive identity in $\G$ (the all-zero vector). 

Consider the $k$-partite signed graph $T(h,k) = (T_1\cup\dots \cup T_k, E^+,E^-)$ where each $T_i$ corresponds to a copy of the group $\G$ and the edges are added as follows.
For each $S:=\{i_1,\dots, i_h\}\in \binom{[k]}{h}$ and each choice of vertices $v_1\in T_{i_1},\dots, v_h\in T_{i_h}$, we add a positive edge (i.e. $\{v_1,\dots, v_h\}\in E^+$) if the corresponding elements of $\G$ $x_1,\dots, x_h$ satisfy the equality $x_1+\dots + x_h = \overline 0$.
Moreover, if the corresponding elements satisfy $x_1+\dots + x_h = e_S$, we add a negative edge (i.e. $\{v_1,\dots, v_h\}\in E^-$).
We proceed to prove that the constructed $T(h,k)$ has the desired properties. We begin by observing that the number of vertices only depends on $k$ and $h$.
\begin{observation}\label{obs:template-graph-clique}
    The signed hypergraph $T(h,k)$ has $kh^{\binom k h}$ vertices.
\end{observation}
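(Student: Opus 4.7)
The plan is to unfold the construction and count directly. By definition, $\mathcal{G} = (\mathbb{Z}/h\mathbb{Z})^{\binom{k}{h}}$ is the Cartesian product of $\binom{k}{h}$ copies of the cyclic group of order $h$, so $|\mathcal{G}| = h^{\binom{k}{h}}$. The construction then sets each vertex part $T_i$ (for $i \in [k]$) to be an independent copy of the underlying set of $\mathcal{G}$, giving $|T_i| = h^{\binom{k}{h}}$ for every $i$. Since the parts of a $k$-partite hypergraph are taken to be pairwise disjoint (as stipulated in Section~\ref{sec:preliminaries}), the vertex count simply sums:
\[
|V(T(h,k))| \;=\; \sum_{i=1}^{k} |T_i| \;=\; k \cdot h^{\binom{k}{h}}.
\]

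There is no real obstacle here; the observation is an immediate consequence of the definitions and the standard convention that $k$-partite vertex classes are disjoint. I would present the proof in a single line, essentially as the display above, serving primarily to make explicit that the size of $T(h,k)$ depends only on the parameters $h$ and $k$ (and not on the input hypergraph $G$), which is what will later justify the $|V(G')| = f(k) \cdot |V(G)|$ bound in Theorem~\ref{th:hc-regularity-construction}.
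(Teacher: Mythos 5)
Your proof is correct and is exactly the intended (and only sensible) argument: the paper states this as an observation without proof precisely because it follows immediately from the construction by counting $|\mathcal{G}| = h^{\binom{k}{h}}$ per part and summing over the $k$ disjoint parts.
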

We now state the regularity conditions of the underlying hypergraphs $T^+$ and $T^-$.
\begin{restatable}[P. \ref{prop:template:1}]{lemma}{LemmaRegularityOfTemplate}\label{lemma:lambda-regularity-of-T}
    Let $T(h,k)$ be a signed graph as constructed above and let $\reg := k-h+1$. Then the underlying hypergraphs $T^+ := (T_1\cup\dots\cup T_k, E^+)$ and $T^-:=(T_1\cup\dots\cup T_k, E^-)$ are both $\reg$-regular.
\end{restatable}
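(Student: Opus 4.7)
The plan is to unfold the definition of $(h-1,\lambda)$-regularity and directly count the edges containing an arbitrary fixed $(h-1)$-tuple of vertices, exploiting the fact that the edge condition is a single linear equation in the abelian group $\mathcal{G}$.

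Concretely, I would fix any $h-1$ pairwise distinct indices $i_1,\dots,i_{h-1} \in [k]$ together with arbitrary vertices $v_1 \in T_{i_1},\dots, v_{h-1} \in T_{i_{h-1}}$, and let $x_1,\dots,x_{h-1} \in \mathcal{G}$ denote the corresponding group elements. Any edge of $T^+$ containing $\{v_1,\dots,v_{h-1}\}$ must have its remaining endpoint $v_h$ in some part $T_{i_h}$ with $i_h \in [k]\setminus\{i_1,\dots,i_{h-1}\}$. There are exactly $k-(h-1) = k-h+1 = \lambda$ such choices of $i_h$. For each such choice, setting $S := \{i_1,\dots,i_{h-1},i_h\}$, the construction declares $\{v_1,\dots,v_h\}$ to be a positive edge exactly when $x_1 + \cdots + x_{h-1} + x_h = \overline{0}$, i.e.\ when $x_h = -(x_1+\cdots+x_{h-1})$. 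Since $\mathcal{G}$ is a group, this equation admits a unique solution in $\mathcal{G}$, hence there is exactly one admissible vertex $v_h \in T_{i_h}$.

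Summing over all $\lambda$ choices of $i_h$ gives exactly $\lambda = k-h+1$ positive edges containing $\{v_1,\dots,v_{h-1}\}$, establishing $(h-1,\lambda)$-regularity of $T^+$. The argument for $T^-$ is identical: the only change is that the defining equation becomes $x_h = e_S - (x_1+\cdots+x_{h-1})$, which again has a unique solution in $\mathcal{G}$ for each of the $\lambda$ choices of $i_h$. I do not anticipate any real obstacle here — the whole point of defining edges via a single linear constraint over an abelian group is precisely so that regularity becomes a one-line consequence of the unique-solvability of linear equations; the only thing to be a bit careful about is distinguishing the $\lambda$ choices of the remaining part index $i_h$ from the single group-theoretic solution within each chosen part.
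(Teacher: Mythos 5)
Your proof is correct and takes essentially the same approach as the paper: fix the $(h-1)$-tuple and its group elements, observe there are $k-h+1$ choices of the remaining part $T_{i_h}$, and for each such choice the edge condition $x_1+\cdots+x_h=\overline{0}$ (resp.\ $=e_S$) has a unique solution $x_h\in\mathcal{G}$, giving exactly one vertex in $T_{i_h}$. Summing over the $k-h+1$ parts yields $\lambda=k-h+1$ for both $T^+$ and $T^-$, just as in the paper's argument.
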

\begin{proof}
    Let $S:=\{i_1,\dots, i_{h-1}\}\in \binom{[k]}{h-1}$ be arbitrary. Let $v_1\in T_{i_1},\dots, v_{h-1}\in T_{i_{h-1}}$ be any choice of $h-1$ vertices that correspond to the elements $x_1,\dots, x_{h-1}$. 
    Take any $i_h\in [k]\setminus S$. There is a unique element $x_h:=-(x_1+\dots + x_{h-1}) \in \G$ that satisfies $x_1+\dots+x_h = \overline{0}$, and hence the vertex $v_h\in T_{i_h}$ that corresponds to this element is the unique vertex in $T_{i_h}$ such that $\{v_1,\dots, v_h\}\in E^+$.
    Iterating over all the choices in $i_h\in [k]\setminus S$ we can conclude that there are precisely $k-h+1 = \reg$ vertices $v_h\in V(T)$ such that $\{v_1,\dots, v_h\}\in E^+$.
    
    A very similar argument shows that there are also precisely $k-h+1 = \reg$ vertices $v_h\in V(T)$ such that $\{v_1,\dots, v_h\}\in E^-$.
\end{proof}

It remains to show that the hypergraph $T^+$ contains a clique of size $k$, and that no clique of size $h+1$ contains an edge from $E^-$.
We begin by showing the latter.
\begin{lemma}[{P. \ref{prop:template:3}}]\label{lemma:no-negative-edges}
    Each clique in $T(h,k)$ of size $h+1$ consists exclusively of positive edges.
\end{lemma}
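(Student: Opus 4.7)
The plan is to argue by contradiction: assume some $(h+1)$-clique contains at least one negative edge, and derive a contradiction by summing the defining equations of its $h+1$ edges. Concretely, let the clique consist of vertices $v_1,\dots,v_{h+1}$ from parts $T_{j_1},\dots,T_{j_{h+1}}$ with pairwise distinct $j_\ell\in[k]$, and let $x_1,\dots,x_{h+1}\in\mathcal{G}$ be the corresponding group elements. For each $\ell\in[h+1]$, write $S_\ell:=\{j_1,\dots,j_{h+1}\}\setminus\{j_\ell\}$ for the index set of the $\ell$-th edge. By construction of $T(h,k)$, the existence of this edge means
\[
\sum_{i\neq \ell} x_i \;=\; c_\ell \qquad\text{with } c_\ell\in\{\overline 0,\,e_{S_\ell}\},
\]
where $c_\ell=\overline 0$ for positive edges and $c_\ell=e_{S_\ell}$ for negative ones.

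The decisive step is to sum all $h+1$ of these equations over $\ell$. On the left-hand side each $x_i$ is omitted in exactly one of the $h+1$ equations, so it appears $h$ times; hence the left-hand side equals $h\cdot\sum_{i=1}^{h+1} x_i$, which is $\overline 0$ in $\mathcal G = (\mathbb Z/h\mathbb Z)^{\binom{k}{h}}$. Therefore
\[
\sum_{\ell=1}^{h+1} c_\ell \;=\; \overline 0.
\]
Now the $S_\ell$ are pairwise distinct $h$-subsets of $[k]$, so the basis vectors $e_{S_\ell}$ occupy pairwise distinct coordinates of $\mathcal G$. Consequently, if even a single $c_\ell$ equals $e_{S_\ell}$, the above sum has a $1\bmod h$ in coordinate $S_\ell$; since $h\geq 2$ this is nonzero, contradicting $\sum_\ell c_\ell=\overline 0$. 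Hence every $c_\ell=\overline 0$, and all edges of the clique are positive.

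The only obstacle worth flagging is ensuring that the counting in the first summation step is correct: we must verify that each vertex appears in precisely $h$ of the $h+1$ edge-equations (so that the coefficient matches the group exponent), and that the $S_\ell$ being distinct guarantees non-cancellation of the contributions from any negative edges. Both facts are immediate from the clique structure and the definition of $\mathcal G$, so the proof is short once the right group element is summed. No further case analysis is required.
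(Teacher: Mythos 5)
Your proof is correct and follows essentially the same approach as the paper: sum the defining equations of all $h+1$ edges of the clique, observe that each $x_i$ appears $h$ times so the left-hand side vanishes in $\mathcal G=(\mathbb Z/h\mathbb Z)^{\binom{k}{h}}$, and use the distinctness of the index sets $S_\ell$ to derive the contradiction. The paper presents the same idea slightly differently---it fixes a negative edge with index set $S$, projects all equations onto the single coordinate $S$, and writes out the resulting $(h+1)\times(h+1)$ linear system over $\mathbb Z/h\mathbb Z$ before summing the rows---but the underlying argument is identical, and your version, which stays in $\mathcal G$ and shows directly that \emph{every} $c_\ell$ must be $\overline 0$, is arguably a bit cleaner.
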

\begin{proof}
    Let $\{i_1,\dots, i_{h+1}\}\in \binom{k}{h+1}$, and assume for contradiction that there exists a clique of size $h+1$ consisting of vertices $v_1\in T_{i_1},\dots, v_{h+1}\in T_{i_{h+1}}$, such that (w.l.o.g.) $\{v_1,\dots, v_h\}\in E^-$.
    Let $S:=\{i_1,\dots, i_h\}$ and $x_i$ be the elements that corresponds to $v_i$. 
    By construction of $T(h,k)$, this means that $x_1,\dots, x_h$ satisfy the equation $x_1+\dots+x_h = e_S$. 
    Moreover, since $v_1,\dots, v_{h+1}$ form a clique, for any subset $S_j := \{i_1,\dots, i_{j-1}, i_{j+1}, \dots i_{h+1}\}$ with $S_j\neq S$, 
    we have
    \[
        x_1+\dots+ x_{j-1}+ x_{j+1} + \dots+ x_{h+1} \in \{\overline 0, e_{S_j}\}.
    \]
    As $S_j\neq S$, if we consider only the entry indexed by $S$, we get $x_1[S]+\dots+ x_{j-1}[S] + x_{j+1}[S] + \dots+ x_{h+1}[S] = 0$.
    Over all possible values of $j$, we obtain the following equation system.
    \begin{equation*}
        \begin{bmatrix} 0 & 1 & 1 & \cdots & 1\\ 1 & 0 & 1  & \cdots & 1 \\ 1 & 1 & 0 & \cdots & 1 \\ \vdots & \vdots & \vdots & \ddots \\ 1 & 1 & 1 & \cdots & 0\end{bmatrix} 
        \cdot 
        \begin{bmatrix} x_1[S] \\ \vdots \\ x_{h+1}[S] \end{bmatrix} = \begin{bmatrix} 0 \\ \vdots \\ 0 \\1 \end{bmatrix}.
    \end{equation*}
    Summing up all the rows, we get that
    \(
    h (x_1[S] +\dots + x_{h+1}[S]) = 1
    \).
    Recalling that $h=0$ in $\mathbb Z / h \mathbb Z$, we get that $0=1$, which is a contradiction.
\end{proof}
We can finally observe that choosing a vertex $v_i\in T_i$ corresponding to the zero element $\overline{0}\in \G$ for each $V_i$ yields a clique of size $k$ in $T(h,k)$. We state this as a separate observation.
\begin{observation}[{P. \ref{prop:template:2}}]
    Let $v_1\in T_1,\dots v_k\in T_k$ be the vertices corresponding to the all-zero vector in their respective parts. Then $\{v_1,\dots, v_k\}$ is a clique in the hypergraph $T^+$.
\end{observation}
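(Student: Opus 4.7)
The statement is essentially a single verification against the construction, so the plan is short. The strategy is to unfold the definition of a positive edge in $T(h,k)$ and check that every $h$-subset of $\{v_1, \dots, v_k\}$ satisfies it.

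Concretely, I would fix an arbitrary $h$-subset $S = \{i_1, \dots, i_h\} \in \binom{[k]}{h}$ and look at the vertices $v_{i_1} \in T_{i_1}, \dots, v_{i_h} \in T_{i_h}$. By assumption each $v_{i_j}$ corresponds to $\overline{0} \in \mathcal{G}$ in its respective copy of $\mathcal{G}$. Since $\overline{0}$ is the additive identity, the sum of the associated group elements is $\overline{0} + \dots + \overline{0} = \overline{0}$. Recalling the construction in Section \ref{sec:template-graphs}, a hyperedge on $h$ vertices from $h$ distinct parts is declared positive exactly when the sum of their group-element representatives equals $\overline{0}$. Hence $\{v_{i_1}, \dots, v_{i_h}\} \in E^+$.

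Since this reasoning applies uniformly to every $h$-subset of $[k]$, every $h$-subset of $\{v_1, \dots, v_k\}$ spans a positive hyperedge in $T^+$, which is precisely the definition of $\{v_1, \dots, v_k\}$ being a clique in $T^+$. I do not foresee any obstacle here: the argument is a direct consequence of the fact that $\overline{0}$ is the identity element of $\mathcal{G}$, combined with the way positive edges were defined. The only subtlety worth noting is that we must use vertices from $k$ \emph{distinct} parts (which is automatic by the choice of one $v_i$ per $T_i$), ensuring that each $h$-subset is indeed a candidate hyperedge in the $k$-partite, $h$-uniform structure.
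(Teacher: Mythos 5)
Your proof is correct and takes the same approach as the paper, which states this as an observation with no explicit proof precisely because it follows immediately from the construction: every $h$-subset of the all-zero vertices has group-element sum $\overline{0}$, which is the defining condition for a positive edge.
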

This concludes the proof of Lemma \ref{lemma:template-graph-construction}.
We are also ready to prove Theorems \ref{th:hc-regularity-construction} and \ref{th:regular-hc-hypothesis}.
\begin{proof}[Proof of Theorem \ref{th:hc-regularity-construction}.]
    Given an arbitrary $h$-uniform hypergraph $G$, construct the signed hypergraph $T(h,k)$ as above. By Lemma \ref{lemma:template-graph-construction}, this is a template hypergraph. 
    Let $G'$ be the signed product of $G$ and $T(h,k)$. 
    Then, by Lemma \ref{lemma:bichromatic-product} $G'$ has all the desired properties.
\end{proof}
\begin{proof}[Proof of Theorem \ref{th:regular-hc-hypothesis}.]
    Assume that we can decide if an $(s,\reg)$-regular $h$-uniform hypergraph contains a clique of size $k$ in time $f(k)\bigO(n^{g(k)})$, for computable functions $f,g$ and $g(k)\geq h$.
    Then, given an arbitrary $h$-uniform hypergraph $G$ consisting of $n$ vertices, by Theorem \ref{th:hc-regularity-construction} we can construct an $(s,\reg)$-regular $h$-uniform hypergraph $G'$ in linear time, such that it contains a clique of size $k$ if and only if $G$ contains a clique of size $k$. Moreover, the number of vertices of $G'$ is bounded by $f'(k)n$ for some computable function $f'$. We then run the fast algorithm to detect cliques of size $k$ in time $\bigO((f'(k)n)^{g(k)}) = f''(k)n^{g(k)})$ and report this as the solution to the original instance.
    Note that the other direction is trivial.
\end{proof}
In fact, we can prove an even stronger result, showing that counting cliques of size $k$ in regular hypergraphs is as hard as counting cliques in general hypergraphs. 
\begin{theorem}\label{thm:counting}
    For any $h\geq 2$, there exists an algorithm counting the cliques of size $k$ in $k$-partite regular $h$-uniform hypergraphs in time $f(k)n^{g(k)}$ if and only if there exists an algorithm counting the cliques of size $k$ in general $h$-uniform hypergraphs in time $f'(k)n^{g(k)}$.
\end{theorem}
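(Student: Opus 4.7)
The plan is to reuse the reduction that proves Theorem~\ref{th:regular-hc-hypothesis}, but replace the decision equivalence from Lemma~\ref{lemma:bichromatic-product} with the stronger counting identity stated in the Corollary above: $C_k(G') = k!\cdot C_k(G)\cdot C_k(T(h,k))$, where $G'$ is the signed product of $G$ with the template $T(h,k)$. One direction is immediate, as a counting algorithm for general $h$-uniform hypergraphs handles the $k$-partite regular special case; I therefore focus on the nontrivial direction.

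Suppose we have an algorithm $\mathcal A$ counting $k$-cliques in $k$-partite regular $h$-uniform hypergraphs in time $f(k)\, n^{g(k)}$. Given an arbitrary $n$-vertex $h$-uniform hypergraph $G$, I would first construct the template $T(h,k)$ via Lemma~\ref{lemma:template-graph-construction}, and then form the signed product $G'$ of $G$ and $T(h,k)$. By Lemma~\ref{lemma:bichromatic-product}, $G'$ is a $k$-partite regular $h$-uniform hypergraph on at most $f'(k)\cdot n$ vertices and can be built in $O(|G'|)$ time. Since $T(h,k)$ has only a constant number of vertices (Observation~\ref{obs:template-graph-clique}), I can precompute $C_k(T(h,k))$ by brute force in $O_{h,k}(1)$ time. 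Invoking $\mathcal A$ on $G'$ then yields $C_k(G')$, from which I would recover $C_k(G) = C_k(G')/(k!\cdot C_k(T(h,k)))$. The total runtime is $f(k)\cdot (f'(k) n)^{g(k)} + O_{h,k}(1) = f''(k)\, n^{g(k)}$, matching the claimed bound.

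The only point requiring care is that the division above is well-defined, i.e.\ $C_k(T(h,k))\neq 0$. This follows directly from Property~\ref{prop:template:2}, which exhibits an all-zeros $k$-clique in $T^+$, so $C_k(T(h,k))$ is a fixed positive integer depending only on $h,k$. Consequently no serious technical obstacle arises: once the signed product construction and the Corollary are in place, the reduction goes through verbatim. In particular, the Corollary already captures precisely that each unordered $k$-clique of $G$ combined with each unordered $k$-clique of $T(h,k)$ gives rise to $k!$ distinct $k$-cliques in $G'$, one per bijection between the clique vertices of $G$ and the $k$ parts of $T(h,k)$; any routine verification of the counting identity would ultimately amount to re-running the argument of Lemma~\ref{lemma:cliques-in-bichromatic-product}.
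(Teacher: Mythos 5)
Your proposal is correct and follows essentially the same route as the paper: form the signed product $G'$ with the template $T(h,k)$, use the counting corollary $C_k(G') = k!\cdot C_k(G)\cdot C_k(T(h,k))$, and divide out the constant factor. The only (immaterial) difference is that you compute $C_k(T(h,k))$ by brute force whereas the paper derives the exact value $h^{\binom{k}{h}}$, but both confirm the needed nonvanishing.
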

In order to prove this theorem, we extend Observation \ref{obs:template-graph-clique} to show that there is in fact a fixed number of cliques in our constructed template graph, only depending on $h$ and $k$.
Recall that by $\G$ we denote the group $\left(\mathbb Z/h\mathbb Z\right)^{\binom{k}{h}}$. 
\begin{restatable}{lemma}{LemmaCountingCliquesInTemplateGraph}
    Let $T(h,k)$ be a template hypergraph as constructed above. The number of cliques of size $k$ in $T(h,k)$ is precisely $|\G| = h^{\binom{k}{h}}$.
\end{restatable}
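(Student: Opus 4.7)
The plan is to recast the clique-counting problem as a counting problem for solutions of a linear system over $\mathbb{Z}/h\mathbb{Z}$, and then to separate coordinates. First I would observe that, because $k \geq h+1$, Lemma~\ref{lemma:no-negative-edges} applied to every $(h+1)$-subset of a size-$k$ clique forces all of its hyperedges to be positive. Since $T(h,k)$ is $k$-partite, a size-$k$ clique consists of exactly one vertex per part, hence is identified with a tuple $(x_1,\dots,x_k)\in\mathcal{G}^k$ (where $x_i\in \mathcal{G}$ is the group element associated with the chosen vertex in $T_i$). By the construction of $T(h,k)$, the clique condition becomes the linear system
\[
\sum_{i\in S} x_i \;=\; \overline{0} \qquad\text{for every } S\in\binom{[k]}{h}.
\]

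Next I would decouple the system coordinate-wise. Each $x_i$ lives in $\mathcal{G}=(\mathbb{Z}/h\mathbb{Z})^{\binom{k}{h}}$, and the equations hold coordinate by coordinate. So it suffices to fix a coordinate index $T\in\binom{[k]}{h}$ and count tuples $(a_1,\dots,a_k)\in(\mathbb{Z}/h\mathbb{Z})^k$ satisfying $\sum_{i\in S} a_i = 0$ for all $S\in\binom{[k]}{h}$; the total count is then the product, which is the $\binom{k}{h}$-th power of the per-coordinate count.

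For a single coordinate I would argue as follows. Pick any two distinct indices $i,j\in[k]$ and any $(h-1)$-subset $S_0\subseteq [k]\setminus\{i,j\}$ (which exists since $k\geq h+1$). Subtracting the equations for $S_0\cup\{i\}$ and $S_0\cup\{j\}$ gives $a_i=a_j$. Hence all $a_1,\dots,a_k$ are equal to some common $c\in\mathbb{Z}/h\mathbb{Z}$. Conversely, any such common value satisfies every equation because $\sum_{i\in S} c = h\cdot c = 0$ in $\mathbb{Z}/h\mathbb{Z}$. Thus the per-coordinate count is exactly $h$, and multiplying over the $\binom{k}{h}$ coordinates yields $h^{\binom{k}{h}}=|\mathcal{G}|$ cliques in $T(h,k)$.

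The only mildly delicate step is the initial reduction that rules out negative edges inside a size-$k$ clique, but this is immediate from Lemma~\ref{lemma:no-negative-edges} since $k\geq h+1$; the rest is a routine dimension count exploiting that $h\equiv 0\pmod{h}$. I do not anticipate substantive obstacles.
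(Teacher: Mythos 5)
Your proof is correct and follows essentially the same route as the paper: rule out negative edges via Lemma~\ref{lemma:no-negative-edges}, reduce to the system $\sum_{i\in S}x_i=\overline{0}$, subtract two equations differing in a single index to force all group elements equal, and verify the converse using $h\equiv 0$ in $\mathbb{Z}/h\mathbb{Z}$. The only difference is that you split the argument coordinate-wise and multiply the per-coordinate counts, whereas the paper argues directly in $\G$ and counts the $|\G|$ choices of the common element; this is purely presentational.
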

\begin{proof}
    Fix any element $x\in \G$ and for each $i\in [k]$ take the unique vertex $v_i$ in $T_i$ that corresponds to $x$. We claim that $v_1,\dots, v_k$ form a clique in $T(h,k)$.
    Indeed, if we take any pairwise distinct indices $i_1,\dots, i_h\in [k]$, the vertices $v_{i_1},\dots, v_{i_h}$ form a positive edge in $T(h,k)$, since $rx = 0$ holds for any $x\in \G$.
    Hence the number of cliques of size $k$ is at least $|\G|$. It remains to show the upper bound as well.

    Assume that $v_1\in T_1,\dots v_k\in T_k$ form a clique of size $k$ in $T(h,k)$. By Lemma \ref{lemma:no-negative-edges}, this clique contains no negative edges. In particular, each selection of $h$ pairwise distinct vertices forms a positive edge, which means that the elements $x_1,\dots, x_k\in \G$ corresponding to the vertices $v_1,\dots, v_k$ respectively satisfy the condition that $x_{i_1}+\dots+ x_{i_h} = \overline 0$ for any choice of pairwise distinct indices $i_1,\dots, i_h$.
    Consider the elements $x_1,\dots, x_h$. By the arguments above, we have that $x_1+\dots+x_h = \overline 0$.
    Now if we replace $x_h$ with $x_{h+1}$, we similarly have $x_1+\dots+x_{h+1} = \overline 0$.
    Subtracting these two equalities yields that $x_h-x_{h+1}= \overline 0$ and in particular this implies that $x_h=x_{h+1}$. 
    Recycling this argument we obtain that $x_1=\dots = x_{k}$, and thus up to the choice of an element of $\G$, there is a unique clique of size $k$ in the template hypergraph $T(h,k)$. 
\end{proof}

The proof of Theorem \ref{thm:counting} now follows directly.

\section{Boolean Constraint Optimization Problems}
\label{sec:csp-main}
In this section we present the main application of our construction, 
namely exploiting the regularity of hyperclique detection to show hardness of certain families of Boolean Constraint Optimization Problems. 
In particular, we obtain a full classification of the fine-grained complexity of this large class of problems.
The problems are defined as follows:
\begin{definition}[Boolean Constraint Optimization Problems ($\maxcspk/\mincspk$)]
    Let $\Fam$ be a finite Boolean constraint family (i.e. a set of functions $\phi:\{0,1\}^r \to \{0,1\}$). 
    Given a set $\Phi$ of $m$ Boolean constraints $C$ on variables 
    $x_1,\dots,x_n$, each of the form $\phi(x_{i_1},\dots, x_{i_r})$, where $\phi\in \Fam$,
    the problem $\maxcspk(\Fam)$ (resp. $\mincspk(\Fam)$) asks for an assignment $a:\{x_1,\dots, x_n\}\to \{0,1\}$ setting precisely $k$ variables to $1$ that maximizes (resp. minimizes) the number of satisfied constraints in $\Phi$. 
\end{definition}
It is well known that each Boolean constraint function corresponds to a unique multilinear polynomial (see e.g. \cite{Wil07}). More precisely, for any Boolean constraint function $\phi:\{0,1\}^r \to \{0,1\}$, there exists a unique multilinear polynomial $f$ such that $f(x_1,\dots, x_r) = \phi(x_1,\dots, x_r)$ for any $x_1,\dots, x_r\in \{0,1\}$.\footnote{E.g. ${\rm OR}_3(x,y,z)$ corresponds to the multilinear polynomial $f(x,y,z)=x+y+z-xy-xz-yz+xyz$}
We call such a polynomial the \emph{characteristic polynomial of $\phi$}.\footnote{This characterization is faithful in the sense that if two Boolean functions $\phi,\phi'$ have the same characteristic polynomial, they represent the same Boolean function.}
This characterization allows the concepts defined on multilinear polynomials to transfer to the domain of Boolean constraint functions.
For a Boolean constraint function $\phi$, we say that $\phi$ has \emph{degree} $d$ if its characteristic polynomial has degree $d$.
Similarly, the \emph{degree} of Boolean constraint family $\Fam$ is the smallest integer $d$ such that every constraint $\phi\in \Fam$ has degree at most $d$.

We are now ready to state the main theorem that we prove in this section:
\begin{theorem}[Hardness of the Boolean Constraint Optimization Problems]\label{th:CSP-hardness}
    Let $\Fam$ be a Boolean constraint family of degree $d$.
    \begin{itemize}
        \item If $d = 2$, then for no $\varepsilon>0$ does there exist an algorithm solving $\maxcspk(\Fam)$ (resp. $\mincspk(\Fam)$) in time $\bigO(n^{k(\omega/3 -\varepsilon)})$, unless the $k$-Clique Hypothesis fails.
        \item If $d \geq 3$, then for no $\varepsilon>0$ does there exist an algorithm solving $\maxcspk(\Fam)$ (resp. $\mincspk(\Fam)$) in time $\bigO(n^{k(1 -\varepsilon)})$, unless the $3$-uniform $k$-Hyperclique Hypothesis fails. \label{item:CSP-hardness-2}
    \end{itemize}
\end{theorem}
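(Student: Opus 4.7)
My plan is to prove both parts of Theorem~\ref{th:CSP-hardness} by a single conceptual reduction from $k$-Hyperclique in a regular $k$-partite $h$-uniform hypergraph to $\maxcspk(\Fam)$ (and, symmetrically, to $\mincspk(\Fam)$), taking $h=2$ in the $d=2$ regime and $h=3$ in the $d\geq 3$ regime. By Theorem~\ref{th:regular-hc-hypothesis} this suffices to transfer the corresponding (Hyper-)Clique Hypothesis to a hardness lower bound on the CSP side. The core mechanism is that, given a $(h{-}1,\reg)$-regular $k$-partite $h$-uniform hypergraph $G=(V_1\cup\dots\cup V_k,E)$, we create one variable $x_v$ per vertex and attach a constraint $\phi\in\Fam$ of arity $h$ to every hyperedge. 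On a $k$-partite weight-$k$ assignment, the expansion of the characteristic polynomial of $\phi$ telescopes: by Observation~\ref{obs:reguarity-of-smaller-arities} every monomial of degree strictly less than $h$ sums to a constant independent of the chosen vertices, and only the degree-$h$ leading term contributes variably, namely as $\alpha$ times the number of hyperedges inside the chosen $k$-tuple.

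\textbf{Core reduction for symmetric matching-arity functions.} I carry this out explicitly first when $\Fam$ contains a symmetric $\phi$ of arity $h$ and degree $d=h$. Writing $\phi(x_1,\dots,x_h)=\alpha\cdot e_h+\beta\cdot e_{h-1}+\dots+\delta$ with $e_j$ the $j$-th elementary symmetric polynomial and $\alpha\neq 0$, and using the identity above, I obtain that the CSP objective of any $k$-partite weight-$k$ assignment equals $\alpha\cdot N_{\mathrm{clique}}+C$, where $N_{\mathrm{clique}}$ is the number of hyperedges of $G$ inside the chosen $k$ vertices and $C$ is a constant depending only on $k$, $|E|$, and the regularity parameters $\reg_s$. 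Whether we reduce to $\maxcspk$ or $\mincspk$, and whether from $k$-(hyper-)clique detection in $G$ or in its (also regular) $k$-partite complement, is then dictated purely by $\mathrm{sign}(\alpha)$ and the direction of optimization; the four combinations are handled uniformly.

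\textbf{Forcing $k$-partite optimality.} A key subtlety is that the admissible weight-$k$ assignments are \emph{not} required to respect the $k$-partition. I neutralize non-partite assignments by appending polynomially many ``partition-enforcing'' gadgets: for every pair $u,v$ lying in the same part $V_i$ I attach copies of a constraint built from $\Fam$ that is falsified (respectively satisfied, depending on whether we are maximizing or minimizing) precisely when $x_u=x_v=1$. Choosing the multiplicity to be $\Theta(k^h)$ ensures that the penalty/reward on placing two $1$s in the same part dominates any possible gain from the $\alpha$-term (which is bounded by $|\alpha|\binom{k}{h}$), so every optimal assignment is forced to be $k$-partite. The blow-up in instance size is $\mathrm{poly}(n)$, which leaves the lower-bound exponent in $n$ unchanged.

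\textbf{From the core case to arbitrary $\Fam$; anticipated obstacle.} To handle an arbitrary $f\in\Fam$ realizing the max degree $d$, I perform two reductions. First, if $f$ is non-symmetric I symmetrize on the CSP side by posting $f$ under all $h!$ permutations of its inputs on each hyperedge; the effective per-edge contribution is then $\fsym$, whose leading coefficient equals $h!\cdot\alpha\neq 0$. Second, if $f$ has degree $d'\geq 3$ but arity $r>d'$ (or, in the $d\geq 3$ regime, degree $d'>3$), I restrict $f$ to a chosen leading monomial $x_{i_1}\cdots x_{i_{d'}}$ by fixing the remaining $r-d'$ inputs (in the first case) and an additional $d'-3$ inputs (in the second case) to carefully chosen Boolean constants, each implemented by a fresh dummy variable forced to the desired value by the partition-enforcing-style reward/penalty gadgets of the previous paragraph. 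I expect the main technical obstacle to be establishing that a suitable fixing always exists: the coefficient of $x_{i_1}x_{i_2}x_{i_3}$ in $f(x_{i_1},x_{i_2},x_{i_3},a_{i_4},\dots,a_{i_r})$, viewed as a multilinear polynomial in the fixings $(a_{i_4},\dots,a_{i_r})$, has a leading monomial with coefficient equal to the nonzero top coefficient of $f$, hence is not identically zero, so some Boolean fixing yields a nonzero restricted coefficient; I plan to carry out this argument by a clean inductive evaluation over subcubes. A secondary bookkeeping point is that fixing $w$ dummy variables to $1$ shifts the effective hyperclique parameter from $k$ to $k-w$; since $w$ depends only on $d$ and not on $n$ or $k$, the difference is absorbed as $k\to\infty$ and the target $n^{k(1-\varepsilon)}$ (respectively $n^{k(\omega/3-\varepsilon)}$) lower bound is preserved.
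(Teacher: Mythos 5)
Your high-level strategy matches the paper's: reduce from regular Hyperclique via Theorem~\ref{th:regular-hc-hypothesis}, post a constraint per hyperedge, symmetrize $\phi$ so that only the leading coefficient $\alpha$ of $\fsym$ tracks clique membership, and then force $k$-partiteness. However, the ``forcing $k$-partite optimality'' step has a genuine gap, and it is precisely the place where the paper's proof is most delicate.

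First, your bookkeeping of how much a non-$k$-partite assignment can gain is incorrect. You bound the advantage of a non-partite assignment by the $\alpha$-term fluctuation, $|\alpha|\binom{k}{h}$. But the lower-order terms of $\fsym$ are \emph{not} constant over all weight-$k$ assignments --- they are constant only over $k$-partite ones, because Observation~\ref{obs:reguarity-of-smaller-arities} controls only tuples drawn from distinct parts. Concretely, for $h=3$ the quadratic term contributes $\sum_{(i,j)\in\binom{[k]}{2}}k_ik_j\,\lambda\beta$ (Lemma~\ref{lemma:solution-any-assignment}), where $\lambda=\Theta(n)$; the difference between a $k$-partite distribution $(1,\dots,1)$ and a non-partite one is therefore $\Theta(n)$, not $O(1)$. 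A penalty of multiplicity $\Theta(k^h)$ per same-part pair cannot dominate a $\Theta(n)$ swing, so your argument does not establish that the optimum is $k$-partite.

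Second, the ``partition-enforcing gadget'' cannot be assumed to exist: you are only allowed to post constraints of the form $\phi(x_{i_1},\dots,x_{i_r})$ with $\phi\in\Fam$, and each such constraint contributes via its own characteristic polynomial, which again has lower-order terms. There is in general no constraint built from $\Fam$ that is ``falsified precisely when $x_u=x_v=1$'' and otherwise inert; whatever you add interacts with the rest of the instance. (This also undermines your plan of forcing fresh dummy variables to prescribed constants in the degree-reduction step.) The paper sidesteps this by a case analysis on $\mathrm{sign}(\beta)$: if $\beta>0$, the $\beta$-term of the hyperedge constraints themselves already favors $k$-partite assignments by a quadratic-mean inequality (Lemma~\ref{lemma:optimality-of-k-partite-assignment}), and no penalty is needed; if $\beta\le 0$, extra constraints $\phi(u,v,w)$ are posted for every same-part pair $(u,v)$ and \emph{every} third vertex $w$ --- giving a $\Theta(n)$-scale effect --- and a careful coefficient computation (Lemmas~\ref{lemma:case2-any-assignment}, \ref{lemma:case2-optimality-of-k-partite-assignment}) verifies that the combined $\beta$-contribution makes $k$-partite assignments strictly optimal. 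In short, the penalties must come from the same constraint family, must be $\Theta(n)$ many per pair, and must be analyzed term by term rather than treated as clean unit penalties; those three points are exactly what your proposal glosses over. (Your degree-reduction sketch also differs from the paper, which repeats existing variables rather than introducing dummies to avoid exactly the dummy-forcing issue, but this is secondary to the main gap above.)
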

\medskip
We will show in detail the second reduction for families with degree at least $3$, reducing from $3$-uniform $k$-Hyperclique detection, 
the reduction from $k$-Clique is achieved similarly.
By simple modifications to the existing algorithms in the literature (see e.g. \cite{LincolnWW18,Wil07}), we can complement these hardness results with the corresponding algorithms to obtain a tight classification of the fine-grained complexity of Boolean Constraint Optimization Problems. The details can be found in the full paper.
\begin{theorem}[Boolean Constraint Optimization Algorithms]\label{thm:algorithms}
    Let $\Fam$ be an arbitrary constraint family and let $d$ be the degree of $\Fam$.
    \begin{itemize}
        \item If $d\leq 1$, we can solve $\maxcspk(\Fam)$ ($\mincspk(\Fam)$) in linear time $\bigO(m+n)$.
        \item If $d = 2$, we can solve $\maxcspk(\Fam)$ ($\mincspk(\Fam)$) in time $\bigO(n^{\omega (\lfloor k/3\rfloor,\lceil k/3\rceil,\lceil (k-1)/3\rceil)})$.
        \item If $d \geq 3$, we can solve $\maxcspk(\Fam)$ ($\mincspk(\Fam)$) in time $\bigO(n^{k})$.
    \end{itemize}
\end{theorem}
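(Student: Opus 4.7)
The plan is to reformulate any instance as a multilinear polynomial optimization problem. Writing each $\phi \in \Fam$ as its characteristic polynomial $f_\phi$ of degree at most $d$, the objective $F(x) := \sum_{C \in \Phi} f_{\phi_C}(x)$ is a multilinear polynomial in $x_1,\dots,x_n$ of total degree at most $d$. Both $\maxcspk(\Fam)$ and $\mincspk(\Fam)$ thus reduce to optimizing $F$ over weight-$k$ Boolean vectors. A first preprocessing pass expands $F$ into monomial form $F(x) = \sum_{S\subseteq [n],\,|S|\le d} c_S\prod_{j\in S} x_j$ in time $\bigO(m)$, treating the maximum arity $r$ as a constant depending only on $\Fam$.

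For $d \le 1$, I would first show that any $\{0,1\}$-valued multilinear polynomial of degree at most one depends on at most one of its inputs: writing $f = c_0 + \sum_j c_j x_j$ and case-splitting on $c_0\in\{0,1\}$, the requirement $f(\{0,1\}^r)\subseteq\{0,1\}$, applied to the inputs $\mathbf{0}$, $e_j$, and $e_j+e_l$, forces at most one $c_j$ to be non-zero. Consequently $F(x) = c + \sum_j \alpha_j x_j$, and optimizing over weight-$k$ assignments amounts to selecting the $k$ coordinates with the largest (respectively smallest) $\alpha_j$'s; by linear-time selection, the whole algorithm runs in $\bigO(m+n)$ time.

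For $d = 2$, the objective evaluated on the indicator $\chi_S$ of $S\subseteq [n]$ with $|S|=k$ takes the form $F(\chi_S) = c + \sum_{j\in S}\alpha_j + \sum_{\{j,l\}\subseteq S}\beta_{jl}$, which is exactly the node-and-edge weighted max-/min-weight $k$-clique problem on $K_n$. I would then invoke the weighted analogue of the Ne\v{s}et\v{r}il--Poljak algorithm: partition the sought clique into three parts of sizes $\lceil k/3\rceil$, $\lceil(k-1)/3\rceil$, and $\lfloor k/3\rfloor$, enumerate every subset of each size together with its intra-subset weight, and assemble the optimum via a rectangular matrix-multiplication step in the $(\max,+)$-semiring. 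The main obstacle is that $(\max,+)$ matrix multiplication lacks a fast general algorithm; I would bypass this via the standard reduction for integer weights of polynomially bounded magnitude, either binary searching on the target value and invoking Boolean rectangular matrix multiplication, or by a direct real-matrix-multiplication encoding. Either option matches the unweighted bound $n^{\omega(\lfloor k/3\rfloor,\lceil k/3\rceil,\lceil (k-1)/3\rceil)}$ up to factors depending only on $k$ and $\Fam$ that are absorbed by the $\bigO$.

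For $d \ge 3$, brute force suffices: iterate over all $\binom{n}{k}$ weight-$k$ assignments $\chi_S$ and evaluate $F(\chi_S)$ using the precomputed monomial table in time polynomial in $k$ per assignment, giving total time $\bigO(n^k)$ once $k$ and $\Fam$ are absorbed. The principal technical difficulty throughout the proof lies in the $d=2$ case, where care is required to replicate the precise asymmetric rectangular matrix-multiplication bound; the extreme-degree cases fall out essentially for free once the multilinear representation is in hand.
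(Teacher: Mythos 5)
Your proposal is correct and follows essentially the same route as the paper: reduce to maximizing the multilinear objective $F$, handle $d=2$ by a partition into three roughly $k/3$-size blocks and a $(\max,+)$/Zwick-style rectangular matrix multiplication (the paper frames this as the $\ell=3$ case of the Lincoln--Vassilevska Williams--Williams max-weight hyperclique reduction), and use brute force when $d\ge 3$. The only cosmetic differences are that you work directly with a weighted $k$-clique on $K_n$ rather than a weighted triangle on the blow-up hypergraph, and you give an explicit (and correct, though not strictly necessary) argument that degree-$1$ Boolean functions depend on at most one input for the $d\le 1$ case, which the paper treats as immediate.
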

We prove all the hardness results and the algorithms for the $\maxcspk$ problem, and remark that there is a canonical reduction showing equivalence between $\maxcspk$ and $\mincspk$.
\subsection{Hardness Construction} This section is dedicated to proving Theorem \ref{th:CSP-hardness}.
But first, we will introduce some useful notation and prove some auxiliary results.
For the rest of this section, let $X$ be a set consisting of $n$ variable, let $\Fam$ be an arbitrary finite constraint family and let $\Phi$ be the given set consisting of $m$ constraints $C_1,\dots, C_m$ over $X$, where each constraint $C_i$ is of type $\phi_i(x_{i_1},\dots, x_{i_{r_i}})$ for some $\phi_i\in \Fam$ and some choice of $x_{i_1},\dots, x_{i_{r_i}}\in X$.
Let $a:X \to \{0,1\}$ be a function assigning to each variable a Boolean value. By $\Phi(a)$ we denote the number of constraints in $\Phi$ satisfied by $a$. 
The following observation plays a crucial role in our proof.
\begin{observation} \label{lemma:char-polynomial}
    For each constraint $C_i$, let $f_i$ denote the characteristic polynomial of the corresponding function $\phi_i$.
    For any assignment $a:X\to \{0,1\}$, the equality $\Phi(a) = \sum_{i\in m} f_i(a(x_{i_1}), \dots, a(x_{i_{r_i}}))$ holds.
\end{observation}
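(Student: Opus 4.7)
The plan is to unpack the two definitions in play (``characteristic polynomial'' and ``$\Phi(a)$'') and observe that they line up immediately; there is essentially no real obstacle here, so the proof will be a short direct computation.

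First I would recall the defining property of the characteristic polynomial: for each $i$, the polynomial $f_i$ is chosen so that $f_i(b_1,\dots,b_{r_i}) = \phi_i(b_1,\dots,b_{r_i})$ for every Boolean input $(b_1,\dots,b_{r_i}) \in \{0,1\}^{r_i}$. Since $a$ assigns Boolean values, we may instantiate this with $b_j = a(x_{i_j})$, giving
\[
f_i(a(x_{i_1}),\dots,a(x_{i_{r_i}})) \;=\; \phi_i(a(x_{i_1}),\dots,a(x_{i_{r_i}})) \;\in\; \{0,1\}.
\]
In particular this quantity equals $1$ precisely when the constraint $C_i = \phi_i(x_{i_1},\dots,x_{i_{r_i}})$ is satisfied by $a$, and equals $0$ otherwise, i.e.\ it is exactly the indicator $[a \text{ satisfies } C_i]$.

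Summing the indicators over all $i \in [m]$ yields, by the definition of $\Phi(a)$ as the number of constraints in $\Phi$ satisfied by $a$,
\[
\Phi(a) \;=\; \sum_{i=1}^{m} [a \text{ satisfies } C_i] \;=\; \sum_{i=1}^{m} f_i(a(x_{i_1}),\dots,a(x_{i_{r_i}})),
\]
which is the claimed equality. The only subtlety worth flagging is that the identity $f_i = \phi_i$ holds only on $\{0,1\}$-inputs (the polynomial takes other values on $[0,1]$ or $\mathbb{R}$), but since the assignment $a$ is Boolean, this restriction is precisely what we need, and no further argument is required.
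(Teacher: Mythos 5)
Your proof is correct and is exactly the direct unpacking of the two definitions that the paper has in mind (the paper leaves this Observation unproved as it is immediate). You correctly note the one subtlety: the characteristic polynomial agrees with the Boolean function only on $\{0,1\}$-inputs, which is precisely the case here since $a$ is a Boolean assignment.
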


With this, it suffices to find an assignment that maximizes the value of the polynomial obtained by summing over all the characteristic polynomials of the functions corresponding to the constraints. 
More precisely, given a set $\Phi$ consisting of $m$ constraints $C_1,\dots, C_m$, define the polynomial $P_\Phi$ as 
\[
    P_\Phi(x_1,\dots, x_n) = \sum_{i=1}^m f_i(x_{i_1},\dots, x_{i_{r_i}})
\]
where $f_i$ denotes, as above, the characteristic polynomial of the function $\phi_i$, corresponding to the constraint $C_i$.
Using this notation, for any assignment $a$, as a simple consequence of Observation \ref{lemma:char-polynomial}, we have that $\Phi(a) = P_\Phi(a(x_1),\dots, a(x_n))$.

First, we prove the following lemma, simplifying our reduction, by allowing us to only focus on showing hardness for the families of degree equal to $3$, such that each constraint also has arity at most $3$ (i.e. each constraint function $\phi \in \Fam$ depends on at most $3$ variables). 
\begin{restatable}{lemma}{LemmaReducingDegree}\label{lemma:reducing-degree}
    If there exists a finite constraint family $\Fam$ of degree $d \geq 3$ such that we can solve $\maxcspk(\Fam)$ in time $\bigO(n^{k-\varepsilon})$ for some $\varepsilon>0$, then there exists 
    a finite constraint family $\Fam'$ of degree $3$, such that each constraint $\phi\in \Fam'$ has arity at most $3$ and such that $\maxcspk(\Fam')$ can be solved in time $\bigO(n^{k-\varepsilon'})$ for some $\varepsilon'>0$.
\end{restatable}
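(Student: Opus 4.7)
The plan is to reduce $\maxcspk(\Fam')$ to $\maxcspk(\Fam)$ for a suitable family $\Fam'$ depending on $\Fam$; combined with the assumed algorithm for $\maxcspk(\Fam)$ in time $\bigO(n^{k-\varepsilon})$, this yields an algorithm for $\maxcspk(\Fam')$ in time $\bigO(n^{k-\varepsilon'})$ for some $\varepsilon'>0$. To build $\Fam'$, I would fix any $\phi \in \Fam$ of degree $d_\phi \geq 3$ and construct a single Boolean function $\phi' \colon \{0,1\}^3 \to \{0,1\}$ of degree exactly $3$ by applying a substitution pattern $\pi\colon[r_\phi]\to\{0,y_1,y_2,y_3\}$: each input of $\phi$ is replaced by one of three free variables $y_1,y_2,y_3$ (an \emph{identification}) or by the constant $0$ (a \emph{$0$-dummy}). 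I deliberately avoid introducing the constant $1$, in order to keep the weight bound $k$ unchanged in the subsequent reduction. Setting $\Fam':=\{\phi'\}$ satisfies the arity and degree requirements.

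The key structural claim is that such a $\pi$ always exists. Let $c\prod_{j\in T}x_j$ be any degree-$d_\phi$ monomial of the characteristic polynomial $f_\phi$ with $c\neq 0$ and $|T|=d_\phi$; first set $\pi(j):=0$ for $j\notin T$ to kill every monomial whose support exits $T$. If $f_\phi$ already has a degree-$3$ monomial $c_{T'}\prod_{j\in T'}x_j\neq 0$, simply map $T'$ bijectively to $\{y_1,y_2,y_3\}$ and $T\setminus T'$ to $0$, yielding $\phi'$ with $y_1 y_2 y_3$-coefficient equal to $c_{T'}\neq 0$. Otherwise $d_\phi\geq 4$ and $f_\phi$ has no degree-$3$ monomials: identify the variables of $T$ via a surjection $\pi_T\colon T\to\{y_1,y_2,y_3\}$. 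A direct expansion shows that the $y_1 y_2 y_3$-coefficient in the resulting polynomial equals $\sum_{U\subseteq T\colon \pi_T(U)=\{y_1,y_2,y_3\}}c_U$, which always includes $c_T=c\neq 0$ (from the top monomial); any remaining contributions come from monomials $U$ with $3<|U|<|T|$, which can be avoided by either a short case analysis on $d_\phi$ or an inductive argument that peels off variables one at a time while preserving the top monomial.

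For the reduction itself, given a $\maxcspk(\Fam')$-instance $\Phi'$ on $n$ variables, I would produce a $\maxcspk(\Fam)$-instance $\Phi$ on $n+\bigO(1)$ variables with the weight bound unchanged at $k'=k$: each constraint $\phi'(x_{i_1},x_{i_2},x_{i_3})\in\Phi'$ is translated to an occurrence of $\phi$, placing $x_{i_1},x_{i_2},x_{i_3}$ and a constant-sized shared pool of $0$-dummies $z_1,\dots,z_c$ into the positions prescribed by $\pi$. To force every $z_j=0$ at any optimal assignment, append a polynomial number of reward constraints from $\Fam$ satisfied iff $z_j=0$: using non-reducibility of $\phi$ (which we may WLOG assume), some fixing of all but one input of $\phi$ makes the restriction equal to $x$ or $1-x$, giving a single-variable tester; the constants needed for this fixing are supplied by a short, self-consistent bootstrap of further dummies that closes because $\Fam$ is finite. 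Choosing the reward multiplicity larger than the total contribution of all non-reward constraints makes every deviation strictly suboptimal, so the dummies are correct at the optimum and the objective on non-reward constraints reproduces the input objective of $\Phi'$. Since $n'=n+\bigO(1)$ and $k'=k$, applying the assumed algorithm to $\Phi$ solves $\Phi'$ in time $\bigO(n^{k-\varepsilon'})$ for any $\varepsilon'<\varepsilon$.

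The main obstacle is the dummy-forcing gadget: producing a ``$z=0$'' tester purely from $\Fam$-constraints without access to a constant-$1$ input is delicate for highly symmetric families. Resolving it relies on the bootstrap closure together with a careful case analysis based on $f_\phi$'s degree profile, exploiting that a large enough reward multiplicity amplifies any asymmetry provided by $\phi$. The essential point is that no $1$-dummy is ever added to the instance, so the weight bound $k$ is preserved exactly and the exponent of $n$ in the running time is not degraded, which is what makes the conclusion $\varepsilon'>0$ achievable regardless of how small $\varepsilon$ is.
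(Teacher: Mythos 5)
Your high-level plan---compose $\phi$ with a substitution pattern to obtain a ternary degree-$3$ function $\phi'$, set $\Fam'=\{\phi'\}$, and translate $\maxcspk(\Fam')$-constraints one-to-one into $\maxcspk(\Fam)$-constraints---matches the paper's, and your decision to allow $0$-substitutions is in fact a necessary refinement: the paper uses pure identification, which can fail (for the $4$-ary parity function, every identification onto three variables yields a degree-$2$ function). Still, both halves of your argument have gaps. In the existence argument, Case~B (no nonzero degree-$3$ monomial) is left open: the $y_1y_2y_3$-coefficient is $c_T$ plus contributions from monomials of intermediate degree, which can a priori cancel, and you only gesture at ``a short case analysis or an inductive argument.'' The clean fix is to pick a nonzero monomial of \emph{minimal} degree among those of degree at least $3$ (rather than a top-degree one), zero out everything outside its support, and identify its support arbitrarily onto $\{y_1,y_2,y_3\}$; by minimality every contributing monomial other than the chosen one has zero coefficient, so the target coefficient is exactly the chosen nonzero one.

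The more serious gap is the dummy-forcing gadget, which you flag as the main obstacle but do not construct---and your ``bootstrap'' cannot close in general. A reward constraint satisfied iff $z_j=0$ must be an application of some $\phi\in\Fam$ to dummies only (involving original variables would distort the objective), and for $\Fam=\{\mathrm{XOR}_4\}$ every such application is a parity of dummies, which is \emph{unsatisfied} on the all-zeros dummy assignment; hence all-zeros is the worst, not the best, assignment for any such reward system. Supplying a $1$-constant instead forces a dummy to $1$, shifting the weight parameter from $k$ to $k+O(1)$ and hence the exponent in the assumed running time, which destroys the conclusion $\varepsilon'>0$. The paper sidesteps this entirely by using only identifications, so its translation is a pure relabeling with no new variables and no weight change (at the cost of the degree-preservation claim noted above), and it defers the remaining algorithmic transfer to a modification of the implementation machinery of \cite{KunnemannM20}; if you insist on $0$-constants, you need a mechanism of that kind for handling auxiliary variables, not a reward gadget built from $\Fam$ alone.
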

\begin{proof}
    Let $\phi\in \Fam$ be a function of degree $d\geq 3$ and arity $r$.
    It is easy to see that there exists a boolean function $\phi'(x_1,x_2,x_3) := \phi(x_1,x_2,x_3,x_{i_1}\dots, x_{i_{r-3}})$, where each $x_{i_j}$ is either $x_1,x_2$, or $x_3$, such that $\deg(\phi')=3$.
    Let $\Fam':=\{\phi'\}$.
    We remark that by a simple modification of the approach from \cite{KunnemannM20}, we can show that if $\maxcspk(\Fam)$ is solvable in $\bigO(n^{k-\varepsilon})$ for some $\varepsilon>0$, then also $\maxcspk(\Fam')$ is solvable in $\bigO(n^{k-\varepsilon'})$ for some $\varepsilon'>0$.
\end{proof}

Equipped with the previous lemma, it thus suffices to show the hardness of the $\maxcspk(\Fam)$ for constraint families $\Fam$ of degree equal to $3$ such that every constraint $\phi\in \Fam$ has arity at most $3$.

In particular, we prove that for any such family, given a $k$-partite regular $3$-uniform hypergraph $G$, we can construct an instance $\Phi$ of $\maxcspk(\Fam)$ and integer $\ell$ such that there exists an assignment $a$ with $\Phi(a) \geq \ell$ if and only if $G$ contains a clique.
Take an arbitrary function $\phi\in \Fam$ such that $\deg(\phi) =3$ with characteristic polynomial $f$, and define the set \[\phi_{\rm sym}(x,y,z) := \{\phi(\pi(x), \pi(y), \pi(z))\mid \pi \in S_3,\}\] and the polynomial \[\fsym(x,y,z) := \sum_{\pi\in S_3}f(\pi(x), \pi(y),\pi(z)).\]
\begin{observation}\label{obs:sym-polynomial}
    Let $\Phi$ be any instance of $\maxcspk(\Fam)$ and construct the instance $\Phi' = \Phi \cup \phisym(x,y,z)$. 
    Then the corresponding polynomial is $P_{\Phi'} = P_\Phi + \fsym$.
\end{observation}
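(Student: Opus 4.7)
The plan is to simply unfold the definition of the polynomial $P_{\Phi'}$ and appeal to the linearity of the summation in its definition. Recall that for any instance $\Psi$ of $\maxcspk(\Fam)$ consisting of constraints $C_1, \ldots, C_m$, the polynomial $P_\Psi$ is defined as $\sum_{i=1}^m f_i$, where $f_i$ is the characteristic polynomial of the function $\phi_i$ underlying $C_i$. Since $\Phi' = \Phi \cup \phisym(x,y,z)$ by construction, this summation naturally splits into two parts.

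First I would note that $\phisym(x,y,z)$ is (by definition) the set of six constraints $\{\phi(\pi(x),\pi(y),\pi(z)) : \pi \in S_3\}$, each of which has characteristic polynomial $f(\pi(x),\pi(y),\pi(z))$ — this is immediate because relabeling the input variables of a Boolean function correspondingly relabels the variables of its unique multilinear representation. Then, partitioning the sum indexing $P_{\Phi'}$ into the constraints originating from $\Phi$ and the constraints originating from $\phisym(x,y,z)$, we obtain
\[
P_{\Phi'} \;=\; \sum_{C_i \in \Phi} f_i \;+\; \sum_{\pi \in S_3} f(\pi(x),\pi(y),\pi(z)) \;=\; P_\Phi + \fsym,
\]
which is the claimed identity.

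There is no real obstacle here: the statement is a direct bookkeeping consequence of the definitions of $P_\Phi$, of $\phisym$, and of $\fsym$, together with the fact (used implicitly in Observation~\ref{lemma:char-polynomial}) that characteristic polynomials commute with variable relabeling. The only mild point to be careful about is that $\phisym(x,y,z)$ denotes a \emph{set} of six constraints whereas $\fsym(x,y,z)$ denotes a \emph{single} polynomial obtained by summing over $S_3$; once this distinction is made explicit, the calculation is entirely routine.
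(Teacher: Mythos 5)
Your proof is correct and matches the paper's treatment: the paper states this as an observation with no proof, regarding it as an immediate consequence of the definitions of $P_\Phi$, $\phisym$, and $\fsym$, which is exactly the definition-unfolding you carry out. The one point worth noting (which you partially flag) is that the identity implicitly treats the six permuted constraints as contributing with multiplicity, i.e.\ $\Phi'$ should be read as a multiset when some permutations of $\phi$ coincide; this is a quirk of the paper's notation rather than a gap in your argument.
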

 Moreover, the polynomial $\fsym$ is symmetric, i.e. can be written as \begin{equation}\label{eq:fsym}
 \fsym(x,y,z) = \alpha \cdot xyz + \beta (xy+xz+yz) + \gamma(x+y+z) + \delta,   
 \end{equation} 
 with $\alpha\ne 0$.
 Thus, instead of the constraint $\phi$, we use the symmetrized set $\phi_{\rm sym}$ instead and its characteristic polynomial given by $\fsym$. 


 We remark that our reduction will depend on the coefficients of this multilinear polynomial.
We first consider the coefficient $\alpha$.
Let $G = (V_1, \dots, V_k, E_G)$ be a $k$-partite $(2,\mu)$-regular $3$-uniform hypergraph, with $|V_i| = \dots = |V_k| = n$ and without loss of generality assume that $\mu = (k-2)q\cdot n$, where $q\in (0,1)$.
Let $G' = (V_1\cup \dots\cup V_k, E_{G'})$, with the edges defined as follows.
If $\alpha >0$, let $E_{G'}=E_G$.
Otherwise, if $\alpha<0$, take $E_{G'}$ as the complement of $E_G$, respecting the $k$-partition (i.e. for pairwise distinct $i,j,\ell$, and $v_i\in V_i, v_j\in V_j, v_\ell\in V_\ell$, let $\{v_i,v_j, v_\ell\}\in E_{G'}$ if and only if $\{v_i,v_j, v_\ell\}\notin E_G$).
We can observe that $G'$ is a $k$-partite $(2,\lambda)$-regular $3$-uniform hypergraph with $\lambda = \Omega(n)$.
\begin{observation}\label{obs:regularity-of-G'}
    $G'$ is a $k$-partite $(2,\lambda)$-regular $3$-uniform hypergraph, where $\lambda$ is as follows:
    \begin{enumerate}
        \item If $\alpha>0$, then $\lambda = (k-2)q\cdot n$.
        \item If $\alpha<0$, then $\lambda = (k-2)(1-q)\cdot n$.
    \end{enumerate}
\end{observation}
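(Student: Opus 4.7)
The plan is to verify the two cases separately, relying only on counting the hyperedges through a fixed pair of vertices in different parts.

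For the case $\alpha > 0$, there is essentially nothing to prove: by definition $E_{G'} = E_G$, so $G'$ inherits the $(2,\mu)$-regularity of $G$, and the claim follows with $\lambda = \mu = (k-2) q \cdot n$.

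For the case $\alpha < 0$, I would proceed by a direct counting argument. Fix arbitrary indices $i \neq j \in [k]$ and vertices $v_i \in V_i$, $v_j \in V_j$. Any hyperedge of $G'$ containing both $v_i$ and $v_j$ must respect the $k$-partition, hence its third vertex lies in some $V_\ell$ with $\ell \in [k] \setminus \{i,j\}$. The total number of partition-respecting triples through $\{v_i, v_j\}$ is therefore exactly $(k-2) \cdot n$ (one choice of part $V_\ell$ out of $k-2$, and $n$ choices of a vertex in it). By the $(2,\mu)$-regularity of $G$, exactly $\mu = (k-2) q \cdot n$ of these triples are edges of $G$, so by the complementation definition of $E_{G'}$ the remaining $(k-2)n - (k-2) q \cdot n = (k-2)(1-q) \cdot n$ triples are exactly the edges of $G'$ through $\{v_i, v_j\}$. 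Since this count depends neither on the specific choice of $v_i, v_j$ nor on the specific pair of indices $i, j$, we obtain $(2,\lambda)$-regularity of $G'$ with $\lambda = (k-2)(1-q) \cdot n$.

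There is no real obstacle here; the only thing one must be slightly careful about is the bookkeeping that the ``complement respecting the $k$-partition'' really removes precisely the $\mu$ existing edges from the pool of $(k-2) \cdot n$ partition-respecting triples through any fixed pair, which is immediate from how the complement is defined in the paragraph preceding the observation. The assumption $q \in (0,1)$ also guarantees that $\lambda = \Omega(n)$ in both subcases, matching the informal remark preceding the observation.
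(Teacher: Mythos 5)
Your proof is correct and matches the paper's intent: the paper states this as an observation without a detailed argument, and the counting you give (each partition-respecting pair lies in exactly $(k-2)\cdot n$ candidate triples, of which $\mu=(k-2)q\cdot n$ are edges of $G$, so the partition-respecting complement contains the remaining $(k-2)(1-q)\cdot n$) is exactly the justification intended. No gaps.
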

\medskip
We say that an assignment $a$ of variables in $X$ is \emph{$k$-partite} if for each vertex part $V_i$ in $G$, $a$ sets precisely one variable $v_i\in V_i$ to $1$ and all the others to $0$. 
We now consider the sign of the coefficient $\beta$, and make a case distinction based on this value.
\subparagraph*{{Case 1: $\beta>0$}} 
\begin{construction}\label{construction:b>0}
    Let $X = V(G')$ be the set of $kn$ variables.
    Now for each edge $\{u,v,w\}$ in $G'$, add to $\Phi$ the constraint $\phi(u,v,w)$.
    No other constraints are added to $\Phi$.
\end{construction}
Our goal is to prove that if there exists a $k$-clique in the hypergraph $G$, then any weight-$k$ assignment $a$ that maximizes $\Phi(a)$ sets the vertices in $G$ that correspond to this $k$-clique to $1$ and all other vertices to $0$. 
The strategy that we take to prove this is to first prove that any weight-$k$ assignment that maximizes $\Phi(a)$ has to be $k$-partite. We then prove that among all the $k$-partite assignments, the one that maximizes the value of $\Phi(a)$ corresponds to a $k$-clique in $G$ (i.e. either a $k$-clique, or an independent set in $G'$, depending on sign of $\alpha$). 
\begin{restatable}{lemma}{LemmaSolutionAnyAssignmentI}\label{lemma:solution-any-assignment}
    Let $a$ be any weight-$k$ assignment of the variables in $X$ and let $S$ be the set of vertices that correspond to the variables set to $1$ by $a$.
    Denote by $k_i$ the value $|S\cap V_i|$, and let $m(S)$ denote the number of edges in the induced subhypergraph $G'[S]$.
    Then we have that 
    \[
    \Phi(a) = m(S)\alpha + \sum_{(i,j)\in \binom{[k]}{2}} k_ik_j \lambda \beta + c,
    \]
    where $c$ does not depend on the assignment $a$.
\end{restatable}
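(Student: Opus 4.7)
The plan is a direct computation. By Observation~\ref{lemma:char-polynomial}, $\Phi(a) = \sum_{\{u,v,w\} \in E_{G'}} \fsym(a(u), a(v), a(w))$, since Construction~\ref{construction:b>0} (using the symmetrization, cf.\ Observation~\ref{obs:sym-polynomial}) contributes one copy of $\fsym$ per hyperedge of $G'$. I would expand each evaluation via~(\ref{eq:fsym}) and split $\Phi(a)$ into four sums, one for each of the monomial families $\alpha xyz$, $\beta(xy+xz+yz)$, $\gamma(x+y+z)$, and $\delta$, and then handle them separately.

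The $\alpha$-sum is immediate: since $a(u)a(v)a(w)=1$ iff $\{u,v,w\} \subseteq S$, it counts precisely the edges of $G'[S]$ and yields $\alpha \cdot m(S)$. For the $\beta$-sum I would swap the order of summation, so as to count for each unordered pair $\{x,y\} \subseteq S$ how many hyperedges of $G'$ contain both vertices. A pair within a single $V_i$ contributes $0$ by $k$-partiteness of $G'$, while a pair $x \in V_i$, $y \in V_j$ with $i \ne j$ contributes exactly $\lambda$ by $(2,\lambda)$-regularity. Summing over all inter-part pairs of $S$ produces $\beta \lambda \sum_{\{i,j\} \in \binom{[k]}{2}} k_i k_j$.

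For the $\gamma$-sum, swapping the order of summation gives $\gamma \sum_{v \in S} \deg_{G'}(v)$. By Observation~\ref{obs:reguarity-of-smaller-arities} applied to the $(2,\lambda)$-regular hypergraph $G'$, there is some $r_1$ such that $G'$ is $(1,r_1)$-regular, i.e., every vertex has degree exactly $r_1$; hence this sum equals $\gamma k r_1$, independent of $a$. The $\delta$-sum is trivially $\delta |E_{G'}|$, also independent of $a$. Collecting the last two terms into $c := \gamma k r_1 + \delta |E_{G'}|$ then yields the claimed identity.

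I do not foresee a genuine obstacle here: the argument is pure accounting, and the key regularity inputs are already packaged in the hypothesis that $G'$ is $(2,\lambda)$-regular together with Observation~\ref{obs:reguarity-of-smaller-arities}. The only care needed is, in the $\beta$-term, to explicitly invoke $k$-partiteness so that intra-part pairs are discarded and the count organizes naturally as $\sum_{\{i,j\} \in \binom{[k]}{2}} k_i k_j$ rather than as $\binom{k}{2}$.
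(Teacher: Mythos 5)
Your proposal is correct and follows essentially the same route as the paper's proof: expand $\Phi(a)$ as a sum of $\fsym$ over the hyperedges of $G'$, split by monomial degree, and evaluate each of the four partial sums via the $(2,\lambda)$-regularity (and its implied $(1,r_1)$-regularity) of $G'$. The only cosmetic differences are that you invoke Observation~\ref{obs:reguarity-of-smaller-arities} explicitly for the $\gamma$-term and flag the vanishing of intra-part pairs in the $\beta$-term, both of which are left implicit in the paper.
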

\begin{proof}
    Using Observation \ref{lemma:char-polynomial} and Observation \ref{obs:sym-polynomial}, we have the following equality
    \[
        \Phi(a) = \sum_{\{u_1,u_2,u_3\}\in E(G')} \fsym(a(u_1),a(u_2),a(u_3)).
    \]
    Recall that in Equation \ref{eq:fsym} we have $\fsym(x,y,z) = \alpha \cdot xyz + \beta (xy+xz+yz) + \gamma(x+y+z) + \delta$. Plugging this in the equality above, we get
    \begin{align*}
        \Phi(a) &= \sum_{\{u_1,u_2,u_3\}\in E(G')} \alpha \cdot a(u_1)a(u_2)a(u_3) 
                    \\ &+ \sum_{\{u_1,u_2,u_3\}\in E(G')}\beta \left(a(u_1)a(u_2)+a(u_1)a(u_3)+a(u_2)a(u_3)\right) 
                    \\ & + \sum_{\{u_1,u_2,u_3\}\in E(G')}  \gamma\left(a(u_1)+a(u_2)+a(u_3)\right) 
                    \\ & + \sum_{\{u_1,u_2,u_3\}\in E(G')} \delta
    \end{align*}
    We proceed to argue the equality of each of the coefficients. 
    Clearly $\sum_{\{u_1,u_2,u_3\}\in E(G')} \delta = \delta |E(G')|$. 
    Moreover, the term $\alpha \cdot a(u_1)a(u_2)a(u_3)$ is non-zero if and only if all three of the variables $u_1,u_2,u_3$ are set to one. Hence, we can write 
    \[
    \sum_{\{u_1,u_2,u_3\}\in E(G')} \alpha \cdot a(u_1)a(u_2)a(u_3) = \sum_{\{u_1,u_2,u_3\}\in E(S)} \alpha = \alpha |E(G'[S])| = \alpha m(S)
    \]
    as desired. 
    Next, by rearranging the summands, and using regularity of $G'$ (namely, for any pair $u,v\in V(G')$, the equality $\deg(u)= \deg(v)$ holds), we obtain:
    \begin{align*}
        \sum_{\{u_1,u_2,u_3\}\in E(G)}  \gamma\left(a(u_1)+a(u_2)+a(u_3)\right)  &= \gamma\sum_{\{u_1,u_2,u_3\}\in E(G)}  \left(a(u_1)+a(u_2)+a(u_3)\right)  \\
        &= \gamma|S|\deg(u) \tag{\textasteriskcentered}\\
        &= \gamma k\deg(u),
    \end{align*}
    where the equality $(*)$ follows by noticing that for each vertex $u\in S$, the term $a(u)$ appears precisely $\deg(u)$ many times in the expression above with the value $1$ (and for any $v\not\in S, a(v)=0$).
    Finally, recall that $G'$ is $(2,\lambda)$-regular $k$-partite hypergraph, i.e.
    for any pair of variables that correspond to vertices $v_i\in V_i, v_j\in V_j$ respectively, for $i\neq j$, there are precisely $\lambda$ many edges $e\in E(G')$ that contain both $v_i$ and $v_j$. 
    Thus, a similar computation as above shows that 
    \[
    \sum_{\{u_1,u_2,u_3\}\in E(G')}\beta \left(a(u_1)a(u_2)+a(u_1)a(u_3)+a(u_2)a(u_3)\right) = \sum_{(i,j)\in \binom{[k]}{2}}k_ik_j\lambda \beta. \qedhere
    \]
\end{proof}

As an immediate consequence of the last lemma, we get the number of satisfied constraints by any $k$-partite assignment $a$, just by plugging in $k_i=1$ for each $i\in [k]$.
\begin{corollary}\label{cor:solution-k-partite-assignment}
    Let $a$ be any $k$-partite assignment of the variables in $X$ and let $S, m(S)$ be defined as above and $u$ be an arbitrary vertex in $G'$.
    Then
    \[
    \Phi(a) = m(S)\alpha + \binom{k}{2} \lambda \beta + k \deg(u) \gamma + |E(G)| \delta,
    \]
\end{corollary}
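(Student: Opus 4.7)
The corollary is a direct specialization of Lemma~\ref{lemma:solution-any-assignment} to the case of a $k$-partite assignment, so the proof plan is essentially a matter of bookkeeping. By definition, a $k$-partite assignment $a$ sets exactly one variable in each part $V_i$ to $1$, so with $S$ denoting the set of vertices set to $1$ by $a$, we have $k_i = |S \cap V_i| = 1$ for every $i \in [k]$.

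Substituting $k_i = 1$ into the formula of Lemma~\ref{lemma:solution-any-assignment} immediately collapses the pairwise sum:
\[
\sum_{(i,j)\in \binom{[k]}{2}} k_i k_j \lambda \beta \;=\; \binom{k}{2} \lambda \beta.
\]
The $\alpha m(S)$ term is untouched, and it remains to identify the explicit value of the constant $c$. Inspecting the proof of Lemma~\ref{lemma:solution-any-assignment}, $c$ is the sum of the two assignment-independent contributions computed there, namely the $\gamma$-term $\gamma \cdot k \deg(u)$ (which, using $(1,\deg(u))$-regularity implied by Observation~\ref{obs:reguarity-of-smaller-arities} applied to the $(2,\lambda)$-regular hypergraph $G'$, is independent of the particular vertex $u$) and the $\delta$-term $\delta \cdot |E(G')|$ coming from summing $\delta$ over all constraints.

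Putting the pieces together yields exactly
\[
\Phi(a) \;=\; m(S)\alpha + \binom{k}{2}\lambda \beta + k\deg(u)\gamma + |E(G')|\delta,
\]
as claimed. There is no real obstacle here; the only minor point to check is that the $\gamma$-term is well-defined as a constant across all choices of $u$, which is guaranteed by the vertex-regularity of $G'$. (If the statement of the corollary intends $|E(G)|\delta$ rather than $|E(G')|\delta$, this is a cosmetic discrepancy with Construction~\ref{construction:b>0}, which introduces one constraint per edge of $G'$; either reading follows by the same derivation.)
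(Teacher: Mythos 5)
Your proof is correct and matches the paper's intended argument: the corollary is stated in the paper as an immediate consequence of Lemma~\ref{lemma:solution-any-assignment} obtained by substituting $k_i = 1$, and you carry that out with the appropriate bookkeeping. You also correctly flag the small notational discrepancy between $|E(G)|$ and $|E(G')|$ (the latter is what Construction~\ref{construction:b>0} actually yields), which does not affect the downstream use since the term is absorbed into the assignment-independent constant.
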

We now prove that any $k$-partite assignment is favorable over any non-partite assignment.
\begin{restatable}{lemma}{LemmakPartiteOptimalityI}\label{lemma:optimality-of-k-partite-assignment}
    Let $a$ be a $k$-partite assignment and let $a'$ be any non-$k$-partite assignment of weight $k$.
    Let $S$ and $S'$ be the sets of vertices that correspond to the variables set to $1$ by $a$ and $a'$ respectively.
    Then $\Phi(a)>\Phi(a')$.
\end{restatable}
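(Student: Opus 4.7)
The plan is to apply Lemma~\ref{lemma:solution-any-assignment} to both $a$ and $a'$. The constant term $c$ in that lemma depends only on $G'$ (via its regularity parameters and total edge count) and is thus identical for both weight-$k$ assignments. Writing $k_i' := |S' \cap V_i|$ and using that $k_i = 1$ for all $i$ since $a$ is $k$-partite, the difference simplifies to
\[
  \Phi(a) - \Phi(a') \;=\; \alpha\bigl(m(S) - m(S')\bigr) + \lambda\beta\Bigl(\binom{k}{2} - \sum_{i<j} k_i' k_j'\Bigr).
\]
The key idea is that the $\beta$-contribution is strictly positive and of order $\Theta(n)$, whereas the $\alpha$-contribution is bounded in absolute value by a constant depending only on $k$ and $\phi$. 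Hence the former dominates for sufficiently large $n$.

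To make this precise, I would first quantify the $\beta$-gap via the identity $2\sum_{i<j} k_i' k_j' = \bigl(\sum_i k_i'\bigr)^2 - \sum_i (k_i')^2 = k^2 - \sum_i (k_i')^2$. Since $a'$ is not $k$-partite, some $k_i' \geq 2$ exists, which forces $\sum_i (k_i')^2 \geq k + 2$ (compare to the all-ones profile that has sum of squares $k$); hence $\sum_{i<j} k_i' k_j' \leq \binom{k}{2} - 1$. Together with $\beta > 0$ (the Case~1 hypothesis) and $\lambda = \Theta(n)$ by Observation~\ref{obs:regularity-of-G'}, this yields a lower bound of $\lambda\beta = \Omega(n)$ on the $\beta$-term.

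Then I would trivially upper-bound $m(S), m(S') \leq \binom{k}{3}$ (triples among $k$ selected vertices), giving $|\alpha(m(S) - m(S'))| \leq |\alpha|\binom{k}{3}$, a constant in $n$. Choosing $n$ larger than a threshold $n_0 = n_0(k, \phi)$ therefore guarantees $\lambda\beta > |\alpha|\binom{k}{3}$ and hence $\Phi(a) - \Phi(a') > 0$, as desired.

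The main (mild) obstacle is ensuring the argument covers all input sizes, since the dominance only kicks in once $n \geq n_0(k, \phi)$. This is handled in the standard fine-grained fashion: for $n < n_0$ the underlying $k$-Hyperclique instance has constant size and can be solved directly, so the reduction need only produce valid CSP instances for $n \geq n_0$, which incurs no loss in the final conditional lower bound.
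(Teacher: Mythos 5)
Your proposal is correct and follows essentially the same route as the paper: compute $\Phi(a)-\Phi(a')$ from the formula of Lemma~\ref{lemma:solution-any-assignment}, observe the $\beta$-term is $\Omega(n)$ because the profile gap $\binom{k}{2}-\sum_{i<j}k_i'k_j'$ is strictly positive for non-$k$-partite $a'$, and note the $\alpha$-term is $O(1)$. The only cosmetic difference is that you derive the strict gap via a direct integer argument ($\sum (k_i')^2\ge k+2$, hence deficit at least $1$), whereas the paper invokes the QM-AM inequality and then observes the gap is a positive constant over the finitely many integer profiles; your explicit remark about the threshold $n_0$ is a sensible clarification of the paper's final $\Omega(n)>0$ step.
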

\begin{proof}
    Consider the value $\Phi(a)-\Phi(a')$. By Lemma \ref{lemma:solution-any-assignment} and Corollary \ref{cor:solution-k-partite-assignment} we compute:
    \begin{align*}
        \Phi(a)-\Phi(a') &= \left(m(S)\alpha + \binom{k}{2} \lambda \beta \right) - \left( m(S')\alpha + \sum_{(i,j)\in \binom{[k]}{2}} k_ik_j \lambda \beta \right) \\
        & = \left(m(S)-m(S')\right)\alpha + \left(\binom{k}{2}-\sum_{(i,j)\in \binom{[k]}{2}} k_ik_j\right)\lambda\beta.
    \end{align*}
    Let us consider the value of $\binom{k}{2}-\sum_{(i,j)\in \binom{[k]}{2}} k_ik_j$. In particular, we show that if $a'$ is not a $k$-partite assignment, this value is strictly positive.
    \begin{claim}
        For each $i\in [k]$, let $k_i$ be such that $\sum_i k_i = k$. Then $\sum_{(i,j)\in \binom{[k]}{2}}k_ik_j\leq \binom{k}{2}$, with equality if and only if $k_1=\dots = k_k = 1$.
    \end{claim}
    \begin{proof}[Proof (of the claim).]
        Recall that
        \[
        \left(\sum_{i=1}^kk_i\right)^2 = \sum_{i=1}^kk_i^2 + 2\sum_{i,j\in \binom{[k]}{2}}k_ik_j
        \]
        Plugging in $\sum_{i=1}^kk_i=k$ and rearranging, we get
        \[
        \sum_{i,j\in \binom{[k]}{2}}k_ik_j = \left(k^2 - \sum_{i=1}^kk_i^2\right)\Big/{2}.
        \]
        By Quadratic Mean -- Arithmetic Mean Inequality it holds that:
        \[
            \sqrt{\frac{\sum_{i=1}^k k_i^2}{k}}\geq \frac{\sum_{i=1}^kk_i}{k},
        \]
        with equality if and only if $k_1=\dots = k_k$.
        We can notice that $\frac{\sum_{i=1}^kk_i}{k} = 1$ and hence this inequality implies that $\sum_{i=1}^k k_i^2\geq k$.
        Plugging this back into our expression above, we have the following chain of inequalities:
        \[
            \sum_{i,j\in \binom{[k]}{2}}k_ik_j = \left(k^2 - \sum_{i=1}^kk_i^2\right)\Big/{2} \leq \left(k^2 - k\right)\big/{2} = \binom{k}{2},
        \]
        with equality if and only if $k_1=\dots = k_k=1$.
        \renewcommand{\qedsymbol}{$\vartriangleleft$}
    \end{proof}
    Using the previous claim, we can conclude that there exists an $\varepsilon>0$ (independent on $n$), such that
    \begin{align*}
        \Phi(a)-\Phi(a') &\ge \left(m(S)-m(S')\right)\alpha + \varepsilon\lambda\beta.
    \end{align*}
    Recall that $\alpha$ is the leading coefficient of the polynomial $\fsym$ which is obtained by summing six multilinear polynomials that correspond to a constraint function $\phi$. As such, clearly $\alpha\in \bigO(1)$ holds. Moreover, we also observe that $m(S)-m(S')\geq -\binom{k}{3}$. On the other hand by
    Observation \ref{obs:regularity-of-G'}
    the value of $\lambda$ in the expression above satisfies $\lambda = \Omega(n)$, and in particular, since $\varepsilon, \beta$ are positive constants independent on $n$, we have $\varepsilon\lambda\beta = \Omega(n)$.
    Hence, plugging all this in, we can write
    \begin{align*}
        \Phi(a)-\Phi(a') &\ge \Omega(n) - \binom{k}{3}\bigO(1) = \Omega(n)>0. \qedhere
    \end{align*}
\end{proof}

We are now ready to prove that if we can efficiently solve the instance $\maxcspk(\Fam)$, we can also efficiently decide if the given hypergraph $G$ contains a $k$-clique.
\begin{restatable}{lemma}{LemmaCaseIMainLemma}\label{lemma:case-1-main-lemma}
    There exists a weight-$k$ assignment $a$ and an integer $\tau$, such that $\Phi(a)\geq \tau$ if and only if $G$ contains a clique of size $k$. Moreover, $\tau$ can be computed in constant time.
\end{restatable}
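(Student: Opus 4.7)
The plan is to combine Lemma~\ref{lemma:optimality-of-k-partite-assignment} with Corollary~\ref{cor:solution-k-partite-assignment} to directly read off the threshold $\tau$ as the maximum possible value of $\Phi$ over all weight-$k$ assignments, and then verify that this maximum is attained if and only if $G$ has a $k$-clique. Concretely, for a $k$-partite $a$ with corresponding set $S$, Corollary~\ref{cor:solution-k-partite-assignment} writes
\[
\Phi(a) = m(S)\alpha + c, \qquad c := \binom{k}{2}\lambda\beta + k\deg(u)\gamma + |E(G')|\delta,
\]
where $c$ is independent of $a$ and $m(S)\in\{0,\dots,\binom{k}{3}\}$. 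I would therefore set $\tau := \binom{k}{3}\alpha + c$ if $\alpha > 0$, and $\tau := c$ if $\alpha < 0$; all ingredients are either constants depending only on $\phi$ and $k$, or global parameters of $G'$, so $\tau$ is computable in constant time.

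For the ``only if'' direction, I would take a $k$-clique $\{v_1,\dots,v_k\}\subseteq V(G)$ with $v_i \in V_i$ and let $a$ be its indicator. Using the construction of $G'$ from $G$: if $\alpha>0$ then $G' = G$, so $S$ remains a $k$-clique in $G'$ and $m(S) = \binom{k}{3}$; if $\alpha<0$ then $G'$ is the $k$-partite complement of $G$, so $S$ is an independent set in $G'$ and $m(S) = 0$. In both cases, Corollary~\ref{cor:solution-k-partite-assignment} yields $\Phi(a) = \tau$.

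For the ``if'' direction, I would argue contrapositively on the structure of the witness. Suppose $\Phi(a^*) \geq \tau$ for some weight-$k$ assignment $a^*$. First I rule out $a^*$ being non-$k$-partite: if it were, Lemma~\ref{lemma:optimality-of-k-partite-assignment} would yield $\Phi(a) > \Phi(a^*) \geq \tau$ for every $k$-partite $a$, contradicting the upper bound $\Phi(a) \leq \tau$ (an immediate consequence of Corollary~\ref{cor:solution-k-partite-assignment} together with $m(S)\alpha \leq \binom{k}{3}\alpha$ for $\alpha>0$ and $m(S)\alpha \leq 0$ for $\alpha<0$). So $a^*$ must be $k$-partite, and then $\Phi(a^*) \geq \tau$ forces $m(S^*)$ to its extremal value --- $\binom{k}{3}$ when $\alpha>0$ or $0$ when $\alpha<0$. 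Both cases translate via the sign-dependent definition of $G'$ to $S^*$ being a $k$-clique in $G$.

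I expect no real obstacle, since the entire argument is bookkeeping on top of the two preceding lemmas. The main conceptual move --- that the $\Omega(n)$ gap between $k$-partite and non-$k$-partite values swamps the $O(1)$ variation in $m(S)\alpha$, which is what forces the optimum to be $k$-partite --- has already been absorbed into Lemma~\ref{lemma:optimality-of-k-partite-assignment}, and in turn relies on the regularity of $G'$ guaranteed by Theorem~\ref{th:hc-regularity-construction}. The only point that truly deserves careful bookkeeping is keeping the two sign regimes of $\alpha$ consistently bundled; the complementation trick from Observation~\ref{obs:regularity-of-G'} is precisely what makes both regimes look structurally identical.
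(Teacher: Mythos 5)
Your proposal is correct and takes essentially the same approach as the paper's proof: you use Lemma~\ref{lemma:optimality-of-k-partite-assignment} to force the optimal witness to be $k$-partite, Corollary~\ref{cor:solution-k-partite-assignment} to reduce the objective to $m(S)\alpha + c$, and the same sign-of-$\alpha$ case split to set $\tau$ and translate extremality of $m(S)$ into a $k$-clique of $G$ via the definition of $G'$. The only difference is presentational: the paper argues once about the maximizing assignment, while you unfold both directions of the ``iff'' explicitly.
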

\begin{proof} 
    Let $a$ be an assignment that maximizes the value $\Phi(a)$.
    By Lemma \ref{lemma:optimality-of-k-partite-assignment}, we can assume that $a$ is $k$-partite.
    Recall that by $S$ we denote the set of vertices that correspond to the variables that $a$ sets to $1$.
    Then by Corollary \ref{cor:solution-k-partite-assignment}, we have 
    \[
     \Phi(a) = m(S)\alpha + \binom{k}{2} \lambda \beta + k \deg(u) \gamma + |E(G')| \delta.
    \]
    Note that the value $c:=\binom{k}{2} \lambda \beta + 3 k \deg(u) \gamma + |E(G)| \delta$ is equal for all $k$-partite assignments, i.e. does not depend on the choice of $a$.
    We now make a distinction between two cases based on the sign of $\alpha$.
    \begin{enumerate}
        \item If $\alpha>0$, set $\tau := \binom{k}{3}\alpha + c$. Now clearly, any assignment $a$ satisfies $\Phi(a)\ge \tau$ if and only if $m(S) \geq \binom{k}{3}$, that is if and only if the vertices in $S$ form a clique in $G'$.
        Recall that for $\alpha >0$ we have $G=G'$, thus proving that there exists an assignment $a$ with $\Phi(a)\ge \tau$ if and only if $G$ contains a clique of size $k$. 
        \item If $\alpha<0$, set $\tau := c$. Now clearly, the inequality $\Phi(a)\ge \tau$ holds if and only if $m(S) \leq 0$, that is if and only if the vertices in $S$ form an independent set in $G'$.
        Note that by construction, when $\alpha<0$, the vertices $v_1\in V_1,\dots, v_k\in V_k$ form an independent set in $G'$ if and only if they form a clique in $G$.
        In particular, this shows that there exists an assignment $a$ satisfying $\Phi(a)\ge \tau$ if and only if $G$ contains a $k$-clique. \qedhere
    \end{enumerate}
\end{proof}
\subparagraph*{{Case 2: $\beta<0$}}
\begin{construction}\label{construction:b<0}
    Let $X = V(G')$ be the set of $kn$ variables.
    For each edge $\{u,v,w\}$ in $G'$, add $\phi(u,v,w)$ to $\Phi$.
    Furthermore, for each $i\in [k]$ and each pair of distinct vertices $u,v\in V_i$, add $\phi(u,v,w)$ to $\Phi$ for any $w\in V(G')\setminus\{u,v\}$.
\end{construction}
Following the general structure of the first case, we first count the number of constraints satisfied by some assignment $a$ and then show that any optimal assignment is $k$-partite.
\begin{restatable}{lemma}{LemmaCaseIIPolynomial}\label{lemma:case2-any-assignment}
    Let $a$ be any weight-$k$ assignment and let $S$ be the set of vertices that correspond to the variables set to $1$ by $a$.
    Denote by $k_i$ the value $|S\cap V_i|$, and let $m(S)$ denote the number of edges in the induced subhypergraph $G'[S]$.
    Then the following equality holds:
    \begin{align*}
    \Phi(a) &= \alpha\cdot \left(m(S) + \sum_{i\in [k]}\binom{k_i}{3} + \sum_{i,j\in\binom{[k]}{2}}\left(\binom{k_i}{2}k_j + \binom{k_j}{2}k_i\right)\right) 
    \\ &+ \beta \cdot\left(\sum_{(i,j)\in \binom{[k]}{2}} k_ik_j (\lambda + 2n-k_i-k_j) + \sum_{i\in [k]} \left(\binom{k_i}{2}(kn-2)\right)\right)  + c
    \end{align*}
    where $c$ is a value that does not depend on the choice of $a$.
\end{restatable}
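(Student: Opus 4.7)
The plan is to derive the claimed expression for $\Phi(a)$ by expanding each constraint's contribution through its characteristic polynomial. Applying Observation~\ref{lemma:char-polynomial} together with Observation~\ref{obs:sym-polynomial} (symmetrizing constraints as in the Case~1 analysis), I would write
\[
\Phi(a) \;=\; \sum_{T}\fsym\bigl(a(u_1),a(u_2),a(u_3)\bigr),
\]
where $T=\{u_1,u_2,u_3\}$ ranges over all constraint triples added by Construction~\ref{construction:b<0}. I would then split the sum into Type~A (edges of $G'$) and Type~B (the additional same-part constraints), and expand $\fsym(x,y,z)=\alpha xyz+\beta(xy+xz+yz)+\gamma(x+y+z)+\delta$ so that the four resulting sums can be analyzed separately.

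The $\delta$-term contributes $\delta$ times the total number of constraints, independent of $a$. The $\gamma$-term contributes $\gamma\sum_u a(u)\cdot \deg_\Phi(u)$, where $\deg_\Phi(u)$ is the number of constraints containing $u$; by the $(2,\lambda)$-regularity of $G'$ (which via Observation~\ref{obs:reguarity-of-smaller-arities} implies vertex-regularity) together with the vertex-symmetry of Construction~\ref{construction:b<0}, this degree is the same for every $u$, so the term evaluates to $\gamma\cdot k\cdot\mathrm{const}$. Both the $\delta$- and $\gamma$-parts are therefore absorbed into the assignment-independent constant $c$.

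For the $\alpha$-contribution, I would count the constraint triples $T\subseteq S$ (with the multiplicity coming from Construction~\ref{construction:b<0}) and classify them by their partition profile. Type~A triples inside $S$ are precisely the edges of $G'[S]$, contributing $m(S)$. Type~B triples inside $S$ are partition-violating triples and split into those wholly contained in some $V_i\cap S$ (contributing $\sum_i\binom{k_i}{3}$) and those with exactly two vertices in some $V_i\cap S$ and one in some other $V_j\cap S$ (contributing $\sum_{(i,j)\in\binom{[k]}{2}}(\binom{k_i}{2}k_j+\binom{k_j}{2}k_i)$). Summing yields the claimed $\alpha$-coefficient.

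The $\beta$-term rewrites as $\beta\sum_{\{u,v\}\subseteq S}|\{T\in\Phi:\{u,v\}\subseteq T\}|$, reducing the task to counting, for each pair inside $S$, the number of constraint triples through it. For bipartite pairs $u\in V_i\cap S,\;v\in V_j\cap S$ with $i\neq j$, Type~A contributes exactly $\lambda$ by $(2,\lambda)$-regularity, and Type~B contributes $2n-k_i-k_j$ via a careful enumeration of the third vertex of a partition-violating triple through $\{u,v\}$; summing over all such pairs yields $\sum_{(i,j)\in\binom{[k]}{2}}k_ik_j(\lambda+2n-k_i-k_j)$. For same-part pairs $\{u,v\}\in\binom{V_i\cap S}{2}$, Type~A yields $0$ (since $G'$ is $k$-partite) while Type~B yields $kn-2$, giving $\sum_i\binom{k_i}{2}(kn-2)$ in total. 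The main technical obstacle lies in the Type~B enumeration: one must case-split carefully on the partition-type of the third vertex, avoid double-counting between the 3-in-one-part and 2-1 configurations, and correctly compose with the $(2,\lambda)$-regularity used in the Type~A contribution. Combining all of the above produces the claimed formula.
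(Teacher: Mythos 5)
Your decomposition of $\Phi(a)$ into Type~A (edges of $G'$) and Type~B (same-part pair) constraints, expanded by monomial degree, mirrors the paper's own proof, and the $\alpha$-, $\gamma$- and $\delta$-parts line up. The gap is the unverified claim that the Type~B count through a bipartite pair $u\in V_i\cap S$, $v\in V_j\cap S$ is $2n-k_i-k_j$; a direct enumeration gives $2n-2$, and the difference is not a constant.

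Indeed, since $u$ and $v$ lie in different parts, any Type~B constraint $\phisym(\{p,q\},r)$ with $\{p,q\}\subseteq V_\ell$ containing both must have one of $u,v$ as the ``third'' vertex $r$. If $r=u$ then $\{p,q\}=\{v,x\}$ with $x\in V_j\setminus\{v\}$ (giving $n-1$ constraints), and if $r=v$ then $\{p,q\}=\{u,y\}$ with $y\in V_i\setminus\{u\}$ (another $n-1$). All of these contain the pair $\{u,v\}$ and hence contribute $\beta$ regardless of whether $x$ or $y$ is in $S$, so the count is $2n-2$. The value $2n-k_i-k_j$ arises only by silently restricting $x,y\notin S$, which is not a property of the constraint set. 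Consequently the exact formula you claim (and which appears in the paper's own write-up) undercounts by $\beta\sum_{(i,j)\in\binom{[k]}{2}}k_ik_j(k_i+k_j-2)$; this vanishes for $k$-partite $a$ and is bounded by a function of $k$ only, so the downstream Lemma~\ref{lemma:case2-optimality-of-k-partite-assignment} remains valid after the correction, but the identity in Lemma~\ref{lemma:case2-any-assignment} as stated is not exact for non-$k$-partite assignments. To repair your argument, replace $2n-k_i-k_j$ by $2n-2$ and propagate the change into the optimality comparison.
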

\begin{proof}
    Using Observation \ref{lemma:char-polynomial} and Observation \ref{obs:sym-polynomial}, we have the following equality
        \begin{align*}
            \Phi(a) &= \left(\sum_{\{u_1,u_2,u_3\}\in E(G')} \fsym(a(u_1),a(u_2),a(u_3))\right) \\
            &+ \left( \sum_{i\in[k]}\sum_{\substack{\{u_1,u_2\}\in \binom{V_i}{2}\\u_3\in V(G')\setminus\{u_1,u_2\} }} \fsym(a(u_1),a(u_2),a(u_3))\right),
        \end{align*}
        Notice that we already computed the value of $\sum_{\{u_1,u_2,u_3\}\in E(G')} \fsym(a(u_1),a(u_2),a(u_3))$ in Lemma \ref{lemma:solution-any-assignment}, hence it only remains to argue the coefficients in the second part of the equality.
        For simplicity we will write $\sum_{u_1,u_2,u_3}$ instead of $\sum_{\substack{\{u_1,u_2\}\in \binom{V_i}{2}\\u_3\in V(G')\setminus\{u_1,u_2\} }}$.
        Similarly as before, we can express:
        \begin{align*}
            \sum_{i\in[k]}\sum_{u_1, u_2, u_3} \fsym(a(u_1),a(u_2),&a(u_3)) = \sum_{i\in[k]}\sum_{u_1, u_2, u_3} \alpha \cdot a(u_1)a(u_2)a(u_3) 
                        \\ &+ \sum_{i\in[k]}\sum_{u_1, u_2, u_3}\beta \left(a(u_1)a(u_2)+a(u_1)a(u_3)+a(u_2)a(u_3)\right) 
                        \\ & + \sum_{i\in[k]}\sum_{u_1, u_2, u_3}  \gamma\left(a(u_1)+a(u_2)+a(u_3)\right) 
                        \\ & + \sum_{i\in[k]}\sum_{u_1, u_2, u_3} \delta
        \end{align*}
        
        It is easy to see that the coefficient of $\delta$ does not depend on the choice of the assignment. Moreover, since each vertex $u$ is contained in the same number of clauses, clearly also the coefficient of $\gamma$ is equal for any assignment $a$ of weight $k$.
        
        We compute the remaining two coefficients.
        Recall that for any choice of $u_1,u_2,u_3$, the value $a(u_1)a(u_2)a(u_3) = 1$ if and only if $\{u_1,u_2,u_3\}\subseteq S$. 
        We want to compute the number of such choices where at least two variables stem from the same set.
        There are two possibilities. Either $u_1,u_2,u_3\in V_i$ (for some $i\in[k]$), in which case we have precisely $\binom{k_i}{3}$ many valid choices of vertices $u_1,u_2,u_3$ that satisfy $a(u_1)a(u_2)a(u_3) = 1$.
        Alternatively, we have precisely two vertices $u_1,u_2$ are contained in $V_i$ and one vertex in $V_j$ for $j\neq i$, in which case we have $\binom{k_i}{2}$ many choices for $u_1,u_2$ and $k_j$ many choices for $u_3$ such that $a(u_1)a(u_2)a(u_3) = 1$.
        Hence we obtain the desired equality:
        \[
        \sum_{i\in[k]}\sum_{u_1, u_2, u_3} \alpha \cdot a(u_1)a(u_2)a(u_3) = \alpha\cdot \left(\sum_{i\in [k]}\binom{k_i}{3} + \sum_{i,j\in\binom{[k]}{2}}\left(\binom{k_i}{2}k_j + \binom{k_j}{2}k_i\right)\right) 
        \]
        Finally, to calculate the coefficient of $\beta$, let $u_1,u_2\in S\cap V_i$ be arbitrary. 
        Then for any choice of $u_3\in V(G')\setminus\{u_1,u_2\}$, it holds that $\phisym(u_1,u_2,u_3)\in \Phi$. This contributes to the coefficient of $\beta$ the value of $\sum_{i\in [k]} \left(\binom{k_i}{2}k(|V_i|-2)\right)$.
        But notice that so far we only considered the case $u_1,u_2\in V_i\cap S$, while by looking at the domain of our sum, it is clear that we want to compute the contribution for all pairs $u_1,u_2\in V_i$.
        Hence, we additionally want to compute the number of triples $u_1,u_2,u_3$ with $u_1\in V_i\cap S, u_2\in V_i\setminus S, u_3\in V_j\cap S$ for $j\neq i$.
        It is easy to see that there are $\sum_{(i,j)\in \binom{[k]}{2}}k_ik_j(|V_i| + |V_j|-k_i - k_j) = \sum_{(i,j)\in \binom{[k]}{2}}k_ik_j(2n -k_i - k_j)$ such triples.
        
        In particular, we get the equality
        \begin{align*}
            &\sum_{i\in[k]}\sum_{\substack{\{u_1,u_2\}\in \binom{V_i}{2}\\u_3\in V(G')\setminus\{u_1,u_2\} }}a(u_1)a(u_2)+a(u_1)a(u_3)+a(u_2)a(u_3)\\
            = &\left(\sum_{i\in [k]} \binom{k_i}{2}(kn-2)\right) + \left(\sum_{i,j\in\binom{k}{2}}k_ik_j\left(2n-k_i-k_j\right)\right). \qedhere
        \end{align*}

    \end{proof}

We can now argue that for any $k$-partite assignment $a$ and any assignment $a'$ that is not $k$-partite, the inequality $\Phi(a)>\Phi(a')$ is satisfied.
\begin{restatable}{lemma}{LemmakPartiteOptimalityII}\label{lemma:case2-optimality-of-k-partite-assignment}
    Let $a$ be any $k$-partite assignment of the constraint in $\Phi$ and let $a'$ be any non-$k$-partite assignment of weight $k$.
    Let $S$ and $S'$ be the sets of vertices that correspond to the variables set to $1$ by $a$ and $a'$ respectively.
    Then $\Phi(a)>\Phi(a')$.
\end{restatable}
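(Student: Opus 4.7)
The plan is to compute $\Phi(a)-\Phi(a')$ via Lemma~\ref{lemma:case2-any-assignment} and show that the $\beta$-contribution---which in Case~2 includes the intra-part ``dummy'' constraints of Construction~\ref{construction:b<0}---dominates all other differences by $\Omega(n)$. Writing $k_i:=|S\cap V_i|$, $k_i':=|S'\cap V_i|$, and $P:=\sum_i(k_i')^2$, I first observe that for the $k$-partite $a$ all $k_i=1$, so $\binom{k_i}{2}=\binom{k_i}{3}=0$ and $\sum_{(i,j)}k_ik_j=\binom{k}{2}$, while for the non-$k$-partite $a'$ the AM-QM argument from Lemma~\ref{lemma:optimality-of-k-partite-assignment} gives $P>k$, and hence $T:=\sum_i\binom{k_i'}{2}=(P-k)/2\geq 1$.

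The $\alpha$-contribution to $\Phi(a)-\Phi(a')$ is $\bigO(k^3)$: each summand of the $\alpha$-coefficient from Lemma~\ref{lemma:case2-any-assignment}, namely $m(S)$, $\binom{k_i}{3}$, and $\binom{k_i}{2}k_j+\binom{k_j}{2}k_i$, is a polynomial in the $k_i$'s (each bounded by $k$) summed over at most $\binom{[k]}{2}$ terms, and $|m(S)-m(S')|\leq\binom{k}{3}$. Thus this contribution is independent of $n$.

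The $\beta$-contribution is where the real work lies. I expect that collecting terms using $\sum_{(i,j)}k_i'k_j'=(k^2-P)/2=\binom{k}{2}-T$ will yield
\begin{equation*}
X_{a'}-X_a \;=\; T\cdot\bigl((k-2)n-\lambda-2\bigr) \;+\; R,
\end{equation*}
where $R=2\binom{k}{2}-\sum_{(i,j)}k_i'k_j'(k_i'+k_j')=\bigO(k^3)$ (using $\sum_{(i,j)}k_i'k_j'(k_i'+k_j')=kP-\sum_i(k_i')^3$). To conclude that the leading term is $\Omega(n)$, I invoke Observation~\ref{obs:regularity-of-G'}: when the reduction is applied to the regularized hypergraph supplied by Theorem~\ref{th:hc-regularity-construction}, $\lambda\in\{(k-2)qn,(k-2)(1-q)n\}$ with $q$ bounded away from both $0$ and $1$ by a constant depending only on $k$ and $h$, so $(k-2)n-\lambda=\Omega(n)$. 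Combined with $T\geq 1$, we get $X_{a'}-X_a=\Omega(n)$. Since $\beta<0$, the $\beta$-contribution equals $|\beta|(X_{a'}-X_a)=\Omega(n)$, which dominates the $\bigO(k^3)$ $\alpha$-contribution and yields $\Phi(a)-\Phi(a')>0$ for $n$ sufficiently large.

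The main obstacle will be the algebraic bookkeeping in simplifying $X_{a'}-X_a$: the intra-part dummy constraints contribute $T\cdot(kn-2)$, while the original edge constraints contribute roughly $(\binom{k}{2}-T)(\lambda+2n)$ minus cubic correction terms in the $k_i'$'s, and verifying that the $(\lambda+2n)$ and $(kn-2)$ pieces combine into the clean factor $(k-2)n-\lambda-2$ with the correct sign requires careful attention. Once this simplification is in place, the conclusion mirrors Case~1 (Lemma~\ref{lemma:optimality-of-k-partite-assignment}).
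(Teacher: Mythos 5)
Your proposal is correct and follows essentially the same route as the paper: both start from Lemma~\ref{lemma:case2-any-assignment}, isolate the $\beta$-contribution, and rely on the identity $\sum_{(i,j)}k_i'k_j'+\sum_i\binom{k_i'}{2}=\binom{k}{2}$ (the paper's Claim~\ref{claim:sum-of-kis}) together with $\lambda = \Theta(n)$ from the regularization. The only real difference is presentational: you carry the algebra through to the exact closed form $X_{a'}-X_a = T\bigl((k-2)n-\lambda-2\bigr)+R$ with $T\ge 1$ and $R=\bigO(k^3)$, whereas the paper establishes $X_{a'}-X_a\ge (1-p)n$ via a chain of inequalities using $p=\max\{q,1-q\}$; your version is arguably a touch cleaner, but the key lemma, decomposition, and $\Omega(n)$-domination argument are the same.
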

\begin{proof}
    Denote by $k_i$ the value $|S'\cap V_i|$.
    Similarly as in proof of Lemma \ref{lemma:optimality-of-k-partite-assignment}, we consider the value $\Phi(a)-\Phi(a')$ and write:

    \begin{align*}
        &\Phi(a)-\Phi(a')  = \alpha\cdot \left(m(S) - m(S') - \sum_{i\in [k]}\binom{k_i}{3} - \sum_{i,j\in\binom{[k]}{2}}\left(\binom{k_i}{2}k_j + \binom{k_j}{2}k_i\right)\right) 
        \\ &+ \beta \cdot\left(\binom{k}{2} (\lambda + 2n-2)-\sum_{(i,j)\in \binom{[k]}{2}} k_ik_j (\lambda + 2n-k_i-k_j) - \sum_{i\in [k]} \left(\binom{k_i}{2}(kn-2)\right)\right) \tageq\label{full-eq}
    \end{align*}
    Let $q$ be as in Observation \ref{obs:regularity-of-G'} and define $p:=\max\{q,1-q\}$. We rewrite the term $kn-2$ as:
    \begin{align*}
        kn-2 &= (k-2)n + 2n - 2 - (k-2)(1-p)n + (k-2)(1-p)n\\
        &=(k-2)pn + 2n-2 + (k-2)(1-p)n \\
        &\geq \lambda + 2n-2 + (k-2)(1-p)n & \text{(by Observation \ref{obs:regularity-of-G'})}
    \end{align*}
    Hence, we can write:
    \begin{align*}
    \sum_{i\in [k]} \left(\binom{k_i}{2}(kn-2)\right) &\geq \sum_{i\in [k]} \left(\binom{k_i}{2}(\lambda + 2n-2 + (k-2)(1-p)n)\right) 
    \\ & = \sum_{i\in [k]} \left(\binom{k_i}{2}(\lambda + 2n-2)\right) + \sum_{i\in[k]}\binom{k_i}{2}(k-2)(1-p)n\\
    & \geq \sum_{i\in [k]} \left(\binom{k_i}{2}(\lambda + 2n-2)\right) + (k-2)(1-p)n\\
    &\geq \sum_{i\in [k]} \left(\binom{k_i}{2}(\lambda + 2n-2)\right) + (1-p)n + k\binom{k}{2}, \tageq\label{bound-first-term}
    \end{align*}
    where in the second to last inequality we use that $a'$ is not $k$-partite (i.e. for at least one $i\in[k]$, $k_i\geq 2$) and in the last inequality we use that $k-2\geq 2$ and $(1-p)n = \Omega(n) > k \binom{k}{2}$ (the term $k\binom{k}{2}$ is chosen for convenience as will be evident in a moment).
    For the other term, we can write 
    \begin{align*}
    \sum_{(i,j)\in \binom{[k]}{2}} k_ik_j (\lambda + 2n-k_i-k_j)  &\geq \sum_{(i,j)\in \binom{[k]}{2}} k_ik_j (\lambda + 2n-k)
    \\ &= \left(\sum_{(i,j)\in \binom{[k]}{2}} k_ik_j (\lambda + 2n) \right)- k\binom{k}{2}\\
    &> \left(\sum_{(i,j)\in \binom{[k]}{2}} k_ik_j (\lambda + 2n-2)\right) - k\binom{k}{2} \tageq \label{bound-second-term}
    \end{align*}
    Before plugging this bound in the expression above, we first prove one more equality that will be particularly useful in bounding the coefficient of $\beta$.
    \begin{claim}\label{claim:sum-of-kis}
        Let $k_i\geq 0$ be integers satisfying $\sum_{i\in[k]}k_i = k$. Then \[
        \sum_{i,j\in \binom{[k]}{2}}k_ik_j + \sum_{i\in [k]}\binom{k_i}{2} = \binom{k}{2}
        \]
    \end{claim}
    \begin{proof}[Proof (of the claim).]
        \renewcommand{\qedsymbol}{$\vartriangleleft$}
        Using $k^2 = (\sum_{i\in[k]} k_i)^2 = \sum_{i\in[k]}k_i^2 + 2\sum_{i,j\in\binom{[k]}{2}}k_ik_j$, we can write $\sum_{i,j\in \binom{[k]}{2}}k_ik_j = k^2 - \sum_{i\in [k]}k_i^2$. Plugging this in, we get:
        \begin{align*} 
            \sum_{i,j\in \binom{[k]}{2}}k_ik_j + \sum_{i\in [k]}\binom{k_i}{2} &= \left(k^2 - \sum_{i\in [k]}k_i^2\right)\Big/2 + \sum_{i\in[k]}\frac{k_i(k_i-1)}{2} \\ 
            &= \frac{k^2}{2} + \sum_{i\in[k]}\frac{k_i(k_i-1)-k_i^2}{2} \\
            &= \frac{k^2}{2} - \sum_{i\in[k]}\frac{k_i}{2} \\
            &= \frac{k^2}{2} - \frac{k}{2}\\
            &= \binom{k}{2}.
        \end{align*}
    \end{proof}
    Now we can plug in the Inequality (\ref{bound-first-term}) and (\ref{bound-second-term}) into the full expression (i.e. Equation (\ref{full-eq})). Let us focus on the negative terms in the expression for the coefficient of $\beta$:
    \begin{align*}
    \mathmakebox[\mathindent][l]{
        \left(\sum_{(i,j)\in \binom{[k]}{2}} k_ik_j (\lambda + 2n-k_i-k_j)\right) + \left(\sum_{i\in [k]} \binom{k_i}{2}(kn-2)\right) }
        \\ &\geq \left(\sum_{(i,j)\in \binom{[k]}{2}} k_ik_j (\lambda + 2n-2)\right) - k\binom{k}{2} + \left(\sum_{i\in [k]} \binom{k_i}{2}(\lambda + 2n-2)\right) + (1-p)n + k\binom{k}{2} \\
        & = \left(\sum_{(i,j)\in \binom{[k]}{2}} k_ik_j (\lambda + 2n-2)\right) + \left(\sum_{i\in [k]} \binom{k_i}{2}(\lambda + 2n-2)\right) + (1-p)n  \\
        & = \left(\lambda + 2n-2\right)\left(\sum_{(i,j)\in \binom{[k]}{2}} k_ik_j + \sum_{i\in [k]} \binom{k_i}{2}\right)  + (1-p)n\\
        & = \binom{k}{2}\left(\lambda + 2n-2\right) + (1-p)n, \tageq\label{eq:final-bound}
    \end{align*}
    where the last equality follows by Claim \ref{claim:sum-of-kis}.
    Let $\Phi(a)[\beta]$ denote the coefficient of $\beta$ in the expression of $\Phi(a)$.
    If we plug in our bound from Inequality (\ref{eq:final-bound}) to Equation (\ref{full-eq}), we obtain
    \begin{align*}
        \Phi(a)[\beta] \leq  \binom{k}{2} (\lambda + 2n-2) - \left(\binom{k}{2}\left(\lambda + 2n-2\right) + (1-p)n\right) = -(1-p)n. 
    \end{align*}
    Moreover, by noticing that $\Phi(a)[\alpha] \leq f(k)\ll(1-p)n$, (for some function $f$), and recalling that $\beta<0$, we get
     \begin{align*}
          \Phi(a)-\Phi(a') & \geq \alpha \cdot f(k) - \beta \cdot (1-p)n >0 \qedhere
     \end{align*}
\end{proof}
We now focus on $k$-partite assignments only, and show that among all $k$-partite assignments, the one that maximizes the value of $\Phi(a)$ corresponds to a $k$-clique in $G$ (if it exists).

\begin{lemma}\label{lemma:case-2-main-lemma}
    There exists a weight-$k$ assignment $a$ and an integer $\tau$ such that $\Phi(a)\geq \tau$ if and only if $G$ contains a clique of size $k$. Moreover, $\tau$ can be computed in constant time.
\end{lemma}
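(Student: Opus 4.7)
The plan is to mirror the structure of Lemma~\ref{lemma:case-1-main-lemma}, exploiting the two preceding lemmas to reduce the analysis to a clean expression depending only on $m(S)$. First, by Lemma~\ref{lemma:case2-optimality-of-k-partite-assignment}, any assignment maximizing $\Phi(a)$ over weight-$k$ assignments is necessarily $k$-partite, so I would restrict attention to $k$-partite assignments without loss of generality.

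Next, I would specialize the formula of Lemma~\ref{lemma:case2-any-assignment} to the $k$-partite case, i.e.\ $k_1 = \dots = k_k = 1$. Then $\binom{k_i}{2} = 0$ and $\binom{k_i}{3} = 0$ for all $i \in [k]$, so all ``non-partite'' cross-terms vanish: the coefficient of $\alpha$ collapses to $m(S)$, and the coefficient of $\beta$ collapses to $\binom{k}{2}(\lambda + 2n - 2)$, which does not depend on the particular $k$-partite assignment. Consequently,
\[
\Phi(a) \;=\; \alpha \cdot m(S) + c,
\]
where $c$ depends only on $k$, $n$, $\lambda$, and the coefficients of $\fsym$, but not on the choice of $a$.

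From this point the argument is essentially identical to that of Lemma~\ref{lemma:case-1-main-lemma}, with a case split on the sign of $\alpha$. If $\alpha > 0$, set $\tau := \binom{k}{3}\alpha + c$; then $\Phi(a) \geq \tau$ holds iff $m(S) \geq \binom{k}{3}$, which in turn holds iff $S$ induces a $k$-clique in $G' = G$. If $\alpha < 0$, set $\tau := c$; then $\Phi(a) \geq \tau$ holds iff $m(S) \leq 0$, i.e.\ iff $S$ is independent in $G'$, which by construction of $G'$ (as the $k$-partite complement of $G$) is equivalent to $S$ forming a $k$-clique in $G$. In both cases $\tau$ is computable in constant time from $k$, $n$, $\lambda$, and the coefficients of $\fsym$.

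I do not expect a real obstacle here, since the two heavy lemmas (Lemma~\ref{lemma:case2-any-assignment} and Lemma~\ref{lemma:case2-optimality-of-k-partite-assignment}) already carry out the nontrivial calculations; the only mild care needed is to verify that all terms surviving the specialization to $k_i = 1$ genuinely collect into a single constant $c$ independent of $a$, and to correctly handle the sign-of-$\alpha$ split so that the threshold $\tau$ encodes ``clique in $G$'' in both subcases.
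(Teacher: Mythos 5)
Your proposal is correct and follows essentially the same approach as the paper: invoke Lemma~\ref{lemma:case2-optimality-of-k-partite-assignment} to reduce to $k$-partite assignments, specialize Lemma~\ref{lemma:case2-any-assignment} at $k_i = 1$ to get $\Phi(a) = \alpha\, m(S) + c$, and then reuse the sign-of-$\alpha$ case split from Lemma~\ref{lemma:case-1-main-lemma}. The paper simply cites the Case~1 argument at the end rather than re-deriving the two subcases as you do.
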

\begin{proof}
    Let $a$ be an assignment of weight $k$ that maximizes the value $\Phi(a)$. By the previous lemma, $a$ is $k$-partite, thus $k_i=1$ for $i \in [k]$ and with Lemma \ref{lemma:case2-any-assignment}, satisfies:
    \[
    \Phi(a) = \alpha \cdot m(S)  
    + \beta \cdot\binom{k}{2} (\lambda + 2n-2)
    + c.
    \]
    We now observe that the part $d:= \beta \cdot\binom{k}{2} (\lambda + 2n-2)
    + c$ is independent of the choice of $k$-partite assignment $a$.
    In particular, $\Phi(a) = \alpha \cdot m(S) + d$. 
    The result now follows by precisely the same argument as in the proof of Lemma \ref{lemma:case-1-main-lemma}.
\end{proof}
\subparagraph*{{Case 3: $\beta=0$}} 
For Case 3, we employ the constructions used in the previous cases
\begin{construction}\label{construction:b=0}
    Let $X = V(G')$ be the set of $kn$ variables.
    We make a case distinction based on the value of $\alpha$:
    \begin{enumerate}
        \item If $\alpha>0$, then let $\Phi$ be as in Construction \ref{construction:b>0}.
        \item If $\alpha<0$, then let $\Phi$ be as in Construction \ref{construction:b<0}.
    \end{enumerate}
\end{construction}
\medskip
For any weight $k$-assignment, we can compute the value $\Phi(a)$, by plugging in $\beta=0$ into the corresponding construction.
\begin{lemma}
    Let $\Phi$ be a $\maxcspk(\Fam)$ instance as above, and let $a$ be any assignment of weight $k$. 
    Then the following equalities hold.
    \begin{enumerate}
        \item If $\alpha>0$, 
        \[
            \Phi(a) = m(S)\alpha + c  .   
        \]
        \item  If $\alpha<0$,
        \[
        \Phi(a) = \alpha\cdot \left(m(S) + \sum_{i\in [k]}\binom{k_i}{3} + \sum_{i,j\in\binom{[k]}{2}}\left(\binom{k_i}{2}k_j + \binom{k_j}{2}k_i\right)\right) +  c,
        \]
        where $c$ is a value that does not depend on the choice of the weight-$k$ assignment $a$.
    \end{enumerate}
\end{lemma}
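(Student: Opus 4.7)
The plan is to observe that this lemma is a direct corollary of the earlier case analyses, obtained by substituting $\beta = 0$ into the formulas already established. Recall that Construction~\ref{construction:b=0} simply reuses Construction~\ref{construction:b>0} when $\alpha > 0$ and Construction~\ref{construction:b<0} when $\alpha < 0$. Hence, for a given weight-$k$ assignment $a$, the expressions for $\Phi(a)$ derived in Lemma~\ref{lemma:solution-any-assignment} and Lemma~\ref{lemma:case2-any-assignment} apply verbatim; the remaining task is purely to specialize them to the case $\beta = 0$.

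Concretely, for $\alpha > 0$, Lemma~\ref{lemma:solution-any-assignment} gives the identity $\Phi(a) = m(S)\alpha + \sum_{(i,j)\in \binom{[k]}{2}} k_i k_j \lambda \beta + c$, where $c$ collects the contributions from the $\gamma$- and $\delta$-coefficients of $\fsym$. As observed in the proof of that lemma, these contributions depend only on $|S| = k$, the (uniform) vertex-degree of $G'$ guaranteed by regularity, and $|E(G')|$; in particular, $c$ is genuinely independent of $a$. Plugging in $\beta = 0$ immediately yields $\Phi(a) = m(S)\alpha + c$, as claimed.

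For $\alpha < 0$, Lemma~\ref{lemma:case2-any-assignment} provides the analogous decomposition of $\Phi(a)$ into an $\alpha$-block, a $\beta$-block, and an assignment-independent constant $c$. Setting $\beta = 0$ annihilates the middle block entirely, leaving precisely
$\Phi(a) = \alpha \cdot \bigl( m(S) + \sum_{i\in[k]}\binom{k_i}{3} + \sum_{i,j\in\binom{[k]}{2}} \bigl( \binom{k_i}{2} k_j + \binom{k_j}{2} k_i \bigr) \bigr) + c$,
which is the identity stated in the lemma.

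The only subtlety worth double-checking is that the residual constants $c$ in both formulas remain independent of $a$ after the substitution, but this is automatic from the corresponding proofs: the $\gamma$- and $\delta$-contributions involve only $|S| = k$, the common vertex-degree of $G'$, and $|E(G')|$, none of which depend on the particular choice of assignment. Thus the proof reduces to a one-line substitution, and I do not expect any real obstacle.
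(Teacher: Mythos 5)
Your proposal is correct and matches the paper's intent exactly: the paper states immediately before the lemma that the formulas are obtained ``by plugging in $\beta=0$ into the corresponding construction,'' i.e.\ by specializing Lemma~\ref{lemma:solution-any-assignment} (for $\alpha>0$) and Lemma~\ref{lemma:case2-any-assignment} (for $\alpha<0$), and gives no further proof. Your additional check that the residual constant $c$ remains assignment-independent is a reasonable sanity check but introduces nothing beyond what those lemmas already guarantee.
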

We can now show that if we can efficiently find the assignment $a$ that maximizes the value $\Phi(a)$, then we can also efficiently decide if the hypergraph $G$ contains a clique of size $k$.
\begin{restatable}{lemma}{LemmaCaseIIIMain}\label{lemma:main-lemma-case-3}
    There exists a number $\tau$ and an assignment $a$ of weight $k$, such that $\Phi(a)\geq \tau$ if and only if the hypergraph $G$ contains a clique of size $k$. Moreover, we can compute $\tau$ in constant time.
\end{restatable}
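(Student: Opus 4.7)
The plan is to handle the two sub-cases $\alpha > 0$ and $\alpha < 0$ separately, using the explicit formulas for $\Phi(a)$ established in the preceding lemma. Because $\beta = 0$ kills the $\beta$-balancing term that dominated the analyses of Cases~1 and~2, we do not need an additional lemma ensuring that an optimal assignment is $k$-partite; instead, $k$-partiteness will follow directly from the extremal properties of the remaining terms, and the proof closely parallels Lemmas~\ref{lemma:case-1-main-lemma} and~\ref{lemma:case-2-main-lemma}.

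For the sub-case $\alpha > 0$, the construction agrees with Construction~\ref{construction:b>0}, so $G' = G$, and the previous lemma gives $\Phi(a) = \alpha\, m(S) + c$. The key step is to show that $m(S) \leq \binom{k}{3}$ for every weight-$k$ set $S$, with equality iff $S$ is a $k$-partite $k$-clique of $G$: since every edge of the $k$-partite hypergraph $G$ spans three distinct parts, $m(S)$ is bounded above by $e_3(k_1, \dots, k_k)$, the number of triples of $S$ with three distinct part-labels; and $e_3(k_1, \dots, k_k) \leq \binom{k}{3}$ since the right-hand side counts \emph{all} triples of $S$; double equality forces $k_i = 1$ for every $i$ together with every cross-part triple of $S$ lying in $E(G)$. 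Setting $\tau := \alpha\binom{k}{3} + c$ then yields $\Phi(a) \geq \tau$ iff $G$ has a $k$-clique.

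For the sub-case $\alpha < 0$, Construction~\ref{construction:b<0} is used and the previous lemma yields $\Phi(a) = \alpha \cdot \Sigma(a) + c$, where
\[
    \Sigma(a) \;=\; m(S) + \sum_{i \in [k]} \binom{k_i}{3} + \sum_{\{i,j\} \in \binom{[k]}{2}} \Bigl(\binom{k_i}{2} k_j + \binom{k_j}{2} k_i\Bigr).
\]
Every summand is non-negative, so $\Sigma(a) \geq 0$. For any non-$k$-partite weight-$k$ assignment, some $k_i \geq 2$, which forces a strictly positive contribution either from a $\binom{k_i}{2} k_j$-term (if any other $k_j \geq 1$) or from $\binom{k_i}{3}$ (if $k_i \geq 3$ and all the weight is concentrated on a single part). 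Hence $\Sigma(a) = 0$ iff $a$ is $k$-partite with $m(S) = 0$, i.e., iff $S$ picks exactly one vertex per part and forms an independent set in $G'$; by the complement construction this is equivalent to $S$ being a $k$-clique of $G$. Setting $\tau := c$ and exploiting $\alpha < 0$, we obtain $\Phi(a) \geq \tau \Leftrightarrow \Sigma(a) \leq 0 \Leftrightarrow \Sigma(a) = 0 \Leftrightarrow G$ contains a $k$-clique. In both sub-cases $\tau$ is computable in constant time from $k$, $\alpha$, and $c$.

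The only mildly delicate step is the bound $m(S) \leq \binom{k}{3}$ together with its equality condition in the first sub-case; once this is in hand, everything else is a direct substitution into the formulas already provided by the previous lemma.
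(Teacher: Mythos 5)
Your proposal is correct and takes essentially the same approach as the paper's proof: split on the sign of $\alpha$, use the explicit formulas from the preceding lemma, and set $\tau := \alpha\binom{k}{3} + c$ for $\alpha > 0$ and $\tau := c$ for $\alpha < 0$. You spell out the extremality arguments---$m(S) \le \binom{k}{3}$ with equality precisely for $k$-partite cliques, and $\Sigma(a) \ge 0$ with equality precisely for $k$-partite independent sets---that the paper invokes only implicitly (``clearly'', ``we first notice''), which makes your write-up somewhat more complete on those points.
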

\begin{proof}
    Let us first consider the case $\alpha>0$. Then we set $\tau:=\binom{k}{3}\alpha + c$.
    By the previous lemma, for any assignment $a$, the inequality $\Phi(a)\geq \tau$ holds if and only if the vertices in $S$ form a $k$-clique in the hypergraph $G'$, and since $\alpha>0$, we have $G'=G$, proving the desired.
    On the other hand, if $\alpha<0$, we set $\tau:=c$. We first notice that this value is attainable by $\Phi(a)$ only if $a$ is a $k$-partite assignment.
    Now for any $k$-partite assignment $a$, we have
    $\Phi(a) = \alpha\cdot m(S) + c$,
    and thus $\Phi(a)\geq \tau$ if and only if $m(S) = 0$, i.e. if the vertices in $S$ form an independent set in $G'$. 
    Furthermore, we recall that since $\alpha<0$, each $k$-partite independent set in $G'$ corresponds bijectively to a $k$-clique in $G$.
    
    This proves that regardless of the value of $\alpha$, we can find in constant time the value $\tau$ such that there exists an assignment $a$ satisfying $\Phi(a)\geq \tau$, if and only if $G$ contains a clique of size $k$.
\end{proof}

We can finally prove Theorem \ref{th:CSP-hardness}.
\begin{proof}[Proof of Theorem \ref{th:CSP-hardness}]
    Assume that there exists a family $\Fam$ with $\deg(\Fam)\geq 3$ and that there exists an algorithm solving $\maxcspk(\Fam)$ in time $\bigO(n^{k-\varepsilon})$. 
    With our previous observation, detailed in the full paper, we construct a family $\Fam'$ of degree equal to $3$ such that $\maxcspk(\Fam')$ is solvable in time $\bigO(n^{k-\varepsilon'})$.
    Then given a regular $3$-uniform hypergraph $G$, we apply the corresponding construction (depending on the value of $\beta$) to reduce to construct an instance $\Phi$ of $\maxcspk(\Fam')$ that consists of $kn$ variables and $\bigO(m)$ constraints.
    Now, by Lemmas \ref{lemma:case-1-main-lemma}, \ref{lemma:case-2-main-lemma}, \ref{lemma:main-lemma-case-3}, we can decide if $G$ contains a clique of size $k$, by computing an assignment $a$ that maximizes the value $\Phi(a)$ in time $\bigO(n^{k-\varepsilon'})$, thus refuting the $3$-uniform $k$-Hyperclique Hypothesis.
\end{proof}

\subsection{Algorithmic Results}
Algorithms for $\maxcspk(\Fam)$ have already been studied from a parametrized viewpoint. In \cite{Cai08}, Cai showed for the three families $\{\xor\},\{\OR\}$ and $\{\AND\}$, each of degree $2$,
that they can be solved in time $\bigO(f(k) n^{\omega \lfloor k/3 \rfloor + 1 + k \mod 3})$ by reducing them to finding a maximum-weight triangle. 
In this section, we extend this with an algorithm for families of all degrees, generalizing Cai's running time for degree $2$ families.

First, we take a look at the unconstrained version $\maxcsp(\Fam)$, that asks for assignments of any weight.
Ryan Williams first presented an algorithm for this version in \cite{Wil07}, where $\Fam$ was of degree $2$, reducing the problem to $k$-clique.
Subsequently, this result was generalized to $h$-uniform $\ell$-hyperclique for all families of degree $h>2$ by Lincoln, {Vassilevska Williams} and Williams in \cite{LincolnWW18}.
The approach and techniques employed can be transferred nicely into the weight $k$ setting with minor modifications.
We give a short summary of the general algorithm, explain the modifications to accommodate the constrained setting and what changes in the analysis, 
to obtain algorithms for all constraint families of degree $h$.

\begin{proof}[Proof of Theorem \ref{thm:algorithms}]
There are only few modifications needed to adapt the general algorithm described in \cite{LincolnWW18} to our setting of considering only weight $k$ assignments.
Let $\Phi$ be an instance of $\maxcspk(\Fam)$ of degree $h$ over variables $X = \{x_1,\dots,x_n\}$.
For a constraint function $\phi_q \in \Fam$ let $A_q = \{C_i \in \Phi \mid C_i = \phi_q(x_{i_1},\dots,x_{i_r})\}$ be the set of clauses that it is used in.
Then we define the polynomial over all those applications of $\phi_q$ where $f_q$ is the corresponding characteristic polynomial as
\[
    P_q(x_y,\dots,x_n) = \sum_{C_i \in A_q} f_q(x_{i_1},\dots,x_{i_r}) 
\]
then constructs a weighted $h$-uniform $\ell$-partite hypergraph 
containing a clique of maximum weight exactly $\sum_{q=1}^{|\Fam|} P_q(x_1,\dots,x_n)$ with $h<\ell\leq k$, then reduces this to the unweighted case.

For our setting, we no longer split the variables and consider all partial assignments over these groups, but instead consider all weight $k/\ell$ assignments over $n$ variables. We argue the exact choice of $\ell$ to obtain an integer $k/\ell$ later on.
We identify each assignment with a vertex $v \in V_i$ for every $i \in [\ell]$. This creates no more than $\ell \cdot {n \choose {k/\ell}}$ vertices. 

Denote by $S(v_i)$ the variables set to $1$ under the assignment induced by $v_i$ for some $i\in [\ell]$.
Then add an edge $v_{i_1},\dots,v_{i_{h}}$ if and only if all $S(v_{i_j})$ are disjoint and all variables for $S(v_{i_j})$ are lexicographically smaller than the variables in $S(v_{i_{j+1}})$. Then this edge corresponds to the joint assignment of $v_{i_1},\dots,v_{i_{h}}$.

Consider by $M(v_{i},\dots,v_{i_{h}})$ the set of monomials $c$ over all polynomials $P_q$ such that 
all variables of the monomial $c$ are contained in $S(v_{i_1},\dots,v_{i_h}) = \bigcup_{j=1}^h S(v_{i_j})$ and $S(v_{i_1},\dots,v_{i_h})$ is the lexicographically smallest such set.
As the monomials are of degree at most $h$, every monomial is contained in some set $M$.

We then add the following weight to edge $e = \{v_{i_1},\dots,v_{i_{h}}\}$ :
\[
   W(e) = \sum_{c \in S(v_{i_1},\dots,v_{i_{h}})} c(v_{i_1},\dots,v_{i_{h}})
\]
where $c(v_{i_1},\dots,v_{i_{h}})$ is the monomial evaluated under the assignment corresponding to $e$.
The argument then follows to be the same as in the original construction: A clique $C_l$ corresponds to exactly one assignment of weight $k$ and has the weight of all monomials in $\sum_{q=1}^{|\Fam|}p_q(x_1,\dots,x_n)$, as a monomial is only accounted for in the weight of the lexicographically first edge that fully covers all its variables. Conversely, all crucial monomials will be covered by some choice of $v_{i_1},\dots,v_{i_h}$. A monomial containing a variables set to zero does not contribute to the weight in any case.

The algorithms running time crucially depends on the weights assigned to each edge.
The analysis changes only slightly. Andrew et al. showed that the coefficients of a characteristic polynomial of a boolean function of degree $h$ are bounded by $[-2^h,2^h]$.
As any assignment we consider has weight at most $k$, the weight of an edge is constituted of at most $k \choose h$ distinct monomials. 
Thus, over $|\Fam|$ many polynomials, the weight function is bounded by $[ -|\Fam|\cdot k^h 2^h,|\Fam|\cdot k^h 2^h ]$.
Our constructed graph has $\mathcal{O}(\ell \cdot n^{k/\ell})$ many vertices, 
we construct at most $\mathcal{O}\left({\ell \choose h} \cdot n^{ (k/\ell) \ell}\right)$ many edges, 
each weight can be produced in time $|\Fam| \cdot {k \choose h}$ resulting in a total construction time of $\mathcal{O}\left(|\Fam| \cdot k^h \ell^h n^k\right)$.

Finding a maximum-weight triangle can be achieved in the time needed to compute the $(\min,+)$-product over two $n\times n$ matrices (see e.g.~\cite{VassilevskaW06}). An algorithm by Zwick \cite{Zwick02} achieves this in time $\mathcal{O}(M \log M n^\omega)$ for integer weights in range $[-M,M]$.
Thus, for $h = 2$ we can choose $\ell = 3$ and obtain an algorithm for $\maxcspk(\Fam)$ that runs in $g(k) n^{\omega k/3}$ with $g(k) = |\Fam|\cdot k^h 2^h$ for $k$ divisible by $3$ and $g(k) n^{\omega (\lfloor k/3\rfloor,\lceil k/3\rceil,\lceil (k-1)/3\rceil)}$ otherwise, with slightly unbalanced partitions. Here we can employ the fastest rectangular matrix multiplication algorithm where $\omega(a,b,c)$ is the time needed to compute the product of a $n^a\times n^b$ and a $n^b \times n^c$ matrix.

For $h>2$, we can solve the constructed maximum-weight $l$-hyperclique instance with the trivial brute-force algorithm in 
time $\bigO\left(g(k) \cdot (l\cdot n^{k/l})^l\right)$. Note here, that we can always choose $\ell$ in such a way, that $k$ is divisible by~$\ell$. 

But, with the approach taken by Lincoln et al. reducing the maximum-weight $l$-hyperclique problem to the unweighted version,
we can show that any improvement in hyperclique detection translates into an improved algorithm for $\maxcspk(\Fam)$ for families of degree $h > 2$.
Their approach consists of \emph{guessing} the weights of the edges contained in the hyperclique, removing edges of a different weight. I.e., for an edge $\{u_{i_1},\dots,u_{i_h}\}$ in the $\ell$-hyperclique, guess a weight in $[-M,M]$ and delete all edges between parts $V_{i_1},\dots,V_{i_h}$ that are not of that weight. Doing this for all hyperclique edges, any $\ell$-hyperclique that is detected will correspond to an $\ell$-hyperclique of exactly the weight we guessed.

This results in a total of $\mathcal{O}\left((|\Fam| \cdot k^h 2^h)^{\ell \choose h}\right)$ instances of $h$-uniform $\ell$-hyperclique we need to solve.
Let $T(n)$ be the time needed to solve an instance of $h$-uniform $\ell$-hyperclique on a graph with $n$ vertices.
We obtain an algorithm running in time $\mathcal{O}\left(g(k)\cdot T\left(\ell\cdot n^{k/\ell}\right)\right)$ with $g(k) = (|\Fam|\cdot k^h 2^h)^{\ell \choose h}$. 
\end{proof}

\section{Further Application: Strengthening Known Reductions}
Another application that makes regularizing the hyperclique problem appealing is that
it might prove easier to show conditional hardness lower bounds when one can assume this strong regularity. 
This could not only be useful for new lower bounds, but also simplify existing proofs.

A point in case, where our technique obtains a lower bound for a regular version of the problem is the following:
Under the $3$-Uniform $4$-Hyperclique hypothesis it has been shown in~\cite{DalirrooyfardW22} that detecting induced $4$-cycles requires quadratic time for graphs with $\mathcal{O}(n^{3/2})$ edges.
Without any changes to the original reduction, we obtain the following by regularizing.
\begin{theorem}
    For no $\varepsilon > 0$ does there exist an algorithm detecting an induced $4$-cycle in a $r$-regular graph $G$ for $r \in \mathcal{O}(\sqrt n)$ (with $\mathcal{O}(n^{3/2})$ edges) in time $\mathcal{O}(n^{2 - \varepsilon})$ unless the $3$-uniform $k$-hyperclique hypothesis fails.
\end{theorem}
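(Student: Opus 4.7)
My plan is to compose the reduction of Dalirrooyfard and Vassilevska Williams~\cite{DalirrooyfardW22} with our regularization in a white-box fashion. Given a $4$-partite $3$-uniform hypergraph $H$ on $n$ vertices, I first invoke Theorem~\ref{th:hc-regularity-construction} (with $h=3$, $k=4$) to obtain an equivalent $4$-partite $3$-uniform hypergraph $H'$ on $f(4)\cdot n$ vertices that is $(s,\lambda_s)$-regular for both $s = 1$ and $s = 2$ simultaneously (Observation~\ref{obs:reguarity-of-smaller-arities} ensures the smaller arity for free). Since the blow-up is only by a constant depending on $k=4$, it preserves any $N^{2-\varepsilon}$ lower bound up to the exponent of $\varepsilon$. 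I then feed $H'$ unchanged into the reduction of~\cite{DalirrooyfardW22}, producing a graph $G$ on $N$ vertices and $\mathcal{O}(N^{3/2})$ edges whose induced $4$-cycles correspond to $4$-hypercliques of $H'$.

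The new ingredient is to verify, by inspecting the construction of~\cite{DalirrooyfardW22}, that the resulting graph $G$ is in fact $r$-regular with $r = \Theta(\sqrt{N})$. Typical such reductions build vertices of $G$ from constant-size substructures of $H'$ (vertex-pairs, hyperedges, etc.), and place edges according to hyperedge-incidence conditions. The degree of any vertex of $G$ then decomposes into sums of quantities of the form ``number of hyperedges through a given $s$-tuple'' for some $s < h$. Under $(s,\lambda_s)$-regularity of $H'$, each such quantity depends only on the partition-pattern of the tuple, not on the identities of its vertices. Hence all vertices of $G$ that share the same ``type'' have identical degree, and either the types already collapse to a single value, or a cheap local post-processing — adding dummy edges with counts that depend only on type, carefully avoiding the creation of new induced $4$-cycles — yields a genuinely $r$-regular graph with $r = \Theta(\sqrt{N})$.

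The main obstacle I anticipate is exactly this white-box step: one must open up the specific reduction of~\cite{DalirrooyfardW22} and check that every vertex-degree expression in $G$ is expressible purely in terms of the parameters $\lambda_1, \lambda_2$ and $|V(H')|$, so that strong regularity of $H'$ does translate into regularity of $G$. If a single global regularity parameter is not immediately attained but only a ``regularity-by-type'' is, the added balancing must be done locally so as not to disturb the induced-$4$-cycle structure, exploiting that regularity of a simple graph is a local property (in contrast to the hypergraph setting of Section~\ref{sec:regularizing-hypergraphs}). Once this is verified, a hypothetical $\mathcal{O}(N^{2-\varepsilon})$-time algorithm for induced $4$-cycle detection on $r$-regular graphs with $r = \mathcal{O}(\sqrt{N})$ would, after composition with the linear-time regularization and the polynomial-time reduction, refute the $3$-Uniform $4$-Hyperclique Hypothesis on the original $n$-vertex instance.
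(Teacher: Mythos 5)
Your high-level plan matches the paper exactly: regularize the $4$-partite $3$-uniform hypergraph via Theorem~\ref{th:hc-regularity-construction}, feed the resulting $(2,\lambda)$-regular hypergraph into the Dalirrooyfard--Vassilevska Williams construction unchanged, and then verify directly that the output graph is regular. The verification you leave as a ``main obstacle'' is exactly what the paper carries out, and it turns out cleanly: every vertex $(x,y) \in V_i \times V_{i+1}$ has $n-1$ neighbors of the form $(x',y)$, plus $\lambda$ neighbors in $V_{i+1}\times V_{i+2}$, plus $\lambda$ neighbors in $V_{i-1}\times V_i$, so the degree is the single value $n-1+2\lambda$ for all vertices --- no post-processing is needed. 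One caution: your fallback of ``cheap local post-processing --- adding dummy edges \ldots carefully avoiding the creation of new induced $4$-cycles'' is precisely the route the paper flags as problematic; connecting a dummy node to two non-adjacent vertices can already create an induced $4$-cycle, and the known dummy-node regularizations for clique (e.g., Cai) blow up by a factor of the maximum degree. You are fortunate that the direct check makes the fallback unnecessary, but you should not lean on it as an escape hatch.
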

\begin{proof}
We shortly restate the procedure as described by Dalirrooyfard and {Vassilevska Williams} in \cite[Section 6]{DalirrooyfardW22} and only analyze the regularity of the resulting graph. 
To this end, let $G$ be a $4$-partite $3$-uniform $(2,\lambda)$-regular hypergraph, with parts $V_0,V_1,V_2,V_3$ of size $n$ each.

Now construct $G'$ with vertices $(x,y)$ such that $x \in V_i$ and $y \in V_{i+1}$ for $i \in \{0,1,2,3\}$ interpreting it modulo $4$ and the  following edges:
\begin{itemize}
    \item For every $i \in \{0,1,2,3\}$, between any two nodes $(x,y),(x',y') \in V_i \times V_{i+1}$ there is an edge if $y = y'$.
    \item For every $i \in\{0,1,2,3\}$, between any two nodes $(x,y) \in V_i \times V_{i+1}$ and $(x',y') \in V_{i+1} \times V_{i+2}$ there is an edge if $y = x'$ and $\{x,y,y'\}$ is a hyperedge in $G$.
\end{itemize}

To prove, that this construction indeed yields a regular graph, consider a fixed vertex $(x,y) \in V_i \times V_{i+1}$.
There are exactly $n-1$ edges $\{(x,y),(x',y)\}$, for every $(x',y) \in V_{i} \times V_{i+1}$ with $x' \neq x$.
As $G$ is $(2,\lambda)$-regular, there are exactly $\lambda$ many vertices $(y,z) \in V_{i+1} \times V_{i+2}$ such that we constructed an edge $\{(x,y),(y,z)\}$.
This holds analogously for $(z,x) \in V_{i-1} \times V_{i}$.

The argument as presented by Dalirrooyfard and {Vassilevska Williams} clearly still works out, such that $G'$ contains an induced $4$-cycle if and only if $G$ contains a $4$-clique.

The resulting graph has a total of $N = \bigO(n^2)$ vertices and $M = \bigO(n^3)$ edges.
Thus, $G'$ is an $r$-regular graph with $r = \mathcal{O}(n) = \mathcal{O}(\sqrt{N})$.
\end{proof}
Trying to regularize a graph $G$ without adding induced $4$-cycles by adding dummy vertices does not seem very straight forward.
While approaches, such as taken in \cite{Cai08}, do indeed regularize the given graph without adding new $4$-cycles by only adding edges between a dummy node and vertices that are not in the same component,
they might blow-up in size, as they are dependent on the maximum degree of the original graph.
Only adding dummy nodes on the other hand is not easily done, as connecting two non-adjacent vertices to a single dummy node might already introduce a $4$-cycle,
indicating that there is no obvious way that achieves regularity for this problem with only a small amount of dummy nodes added.

\bibliography{article}

\end{document}